\newcommand{\ble}{\begin{lemma}}
\newcommand{\ele}{\end{lemma}}
\newtheorem{lemma}{Lemma}[section]
\newtheorem{theorem}[lemma]{Theorem}
\newtheorem{definition}[lemma]{Definition}
\newtheorem{corollary}[lemma]{Corollary}
\newtheorem{example}[lemma]{Example}
\newtheorem{fact}[lemma]{Fact}
\newtheorem{algorithm}[lemma]{Algorithm}
\newcommand{\beao}{\begin{eqnarray*}}
\newcommand{\eeao}{\end{eqnarray*}\noindent}
\newcommand{\beam}{\begin{eqnarray}}
\newcommand{\eeam}{\end{eqnarray}\noindent}
\newcommand{\one}{{\bf 1}}
\begin{document}
\def\DoubleSpace{\baselineskip=24pt}
\DoubleSpace


\title{ Measuring Independence of Datasets}
\author{Vladimir Braverman \ \ \ \ \ \ \ Rafail Ostrovsky\\
  \textrm{University of California Los Angeles}\\
  \texttt{\{vova, rafail\}@cs.ucla.edu} \\ \ \\  }

\maketitle

\thispagestyle{empty}
\begin{abstract}
\noindent
Measuring independence between two or more random variables is a
fundamental problem that touches many areas of computer science. The problems
of  efficiently testing pairwise, or $k$-wise, independence were recently
considered by Alon, Andoni, Kaufman, Matulef, Rubinfeld and Xie (STOC 07);
Alon, Goldreich and Mansour (IPL 03);
Batu, Fortnow, Fischer, Kumar, Rubinfeld and White (FOCS 01); and
Batu, Kumar and Rubinfeld (STOC 04).
They addressed the problem of minimizing the number of samples needed to
obtain sufficient approximation, when the joint distribution is accessible
through a sampling procedure.

\bigskip
\noindent
A data stream model represents another setting where approximating pairwise,
or $k$-wise, independence with sublinear memory is of considerable importance. Unlike the work in the aforementioned papers, in the streaming model
the joint distribution is given by a stream of $k$-tuples, with the goal of testing correlations among the components measured over the entire stream.
In the streaming model, Indyk and McGregor (SODA 08) recently gave exciting new
results for measuring pairwise independence.

\bigskip
\noindent
\emph{Statistical distance} is one of the most fundamental metrics for
measuring the similarity of two distributions, and it has been a metric of
choice in many papers that discuss distribution closeness (see, for example,
Rubinfeld and Servedio (STOC 05); Sahai and Vadhan (JACM 03); and the above
papers). The Indyk and McGregor methods provide $\log{n}$-approximation under
statistical
distance between the joint and product distributions in the streaming model.  (In contrast, for the $L_2$ metric, Indyk and
McGregor give an $(1\pm \epsilon)$-approximation for the same problem, but for probability
distributions, statistical distance is a significantly more powerful metric
then the $L_2$ metric). For the $L_1$ metric, in addition to $\log n$ approximation,
Indyk and McGregor give an $\epsilon$-approximation that requires linear
memory, and also give a method that requires two passes to solve a promise
problem for a restricted range of parameters. Indyk and McGregor leave, as
their main open question, the problem of  improving their $\log
n$-approximation for the statistical distance metric.

\bigskip
\noindent
In this paper we solve the  main open problem  posed by of Indyk and McGregor
for the statistical distance for pairwise independence and extend this result to any constant $k$. In particular, we present an algorithm that computes
an $(\epsilon, \delta)$-approximation
of the statistical distance between the joint and product distributions defined by a stream of $k$-tuples.
Our algorithm requires $O(\left({1\over \epsilon}\log({nm\over \delta})\right)^{(30+k)^k})$ memory and a single pass over the data stream.
\end{abstract}

\newpage
\setcounter{page}{1}
\sloppy
\section{Introduction}
Finding correlations between columns of a table is a fundamental problem in databases.
Virtually all commercial databases construct query plans for queries that employ cross-dimensional predicates.
The basic step is estimating ``selectivity'' (i.e., the number of rows that satisfy the predicate conditions)
of the complex predicate. Without
any prior knowledge, the typical solution is to compute selectivity of each column separately
and use the multiplication as an estimate. Thus, optimizers make a ``statistical independence
\linebreak
assumption''
which sometimes may not hold. Incorrect estimations may lead to suboptimal query plans and decrease
performance significantly. Identifying correlations between database columns by measuring a level of independence between columns has a long history
in the database research community.
To illustrate this point, we cite as an example,
Poosala and Ioannidis \cite{poosala}:

\begin{quote}
{\em ``For a query involving two or more attributes of the same
relation, its result size depends on the joint data distribution
of those attributes; i.e., the frequencies of all combinations
of attribute values in the database. Due to the
multi-dimensional nature of these distributions and the large
number of such attribute value combinations, direct approximation
of joint distributions can be rather complex and expensive.
In practice, most commercial DBMSs adopt the
attribute value independence assumption. Under this
assumption, the data distributions of individual attributes in
a relation are independent of each other and the joint data
distribution can be derived from the individual distributions
(which are approximated by one-dimensional histograms).
Unfortunately, real-life data rarely satisfies the attribute
value independence assumption. For instance, functional
dependencies represent the exact opposite of the assumption.
Moreover, there are intermediate situations as well.
For example, it is natural for the salary attribute of
the Employee relation to be 'strongly' dependent on the
age attribute (i.e., higher/lower salaries mostly going to
older/younger people). Making the attribute value independence
assumption in these cases may result in very inaccurate
approximations of joint data distributions and therefore
inaccurate query result size estimations with devastating effects
on a DBMS's performance.''}
\end{quote}

For data warehouses, it is important to find correlated columns for correct schema construction, as Kimball and Caserta note in \cite{kimbal}:
\begin{quote}
{\em ``Perfectly correlated attributes, such as the levels of a hierarchy, as well as attributes with a
reasonable statistical correlation, should be part of the same dimension.''}
\end{quote}

In practice, typical solutions for finding correlations between columns are either histograms (see e.g., \cite{poosala})
or sampling (see e.g., \cite{cords}). These methods have their natural disadvantages, i.e.,
they do not tolerate deletions and may require several passes over the data.
When it comes to very large data volumes, it is critical to maintain sublinear in terms of memory solutions
that do not require additional passes over the data and can tolerate incremental updates of the data,
e.g., deletions.

For these purposes, a theoretical \emph{data stream model} can be useful.
For data warehouses, the ``loading'' phase of the ETL process (see e.g., Kimball and Caserta \cite{kimbal})
can be seen as a data stream.
When reading a database table, the process can be considered as a stream of data tuples.
Thus, the data stream model represents another setting where approximating pairwise
or $k$-wise independence with sublinear memory is of considerable importance.

\subsection{Precise Definition of the Problem}
The natural way to model database tables in a streaming model is by considering a stream of tuples.
In this paper we consider a stream of $k$-tuples $(i_1,\dots,i_k)$ where $i_l \in [n]$.
(For simplicity, we assume that elements of all columns are drawn from the same domain,
even though our approach trivially extends to a general case of different domains.)
As pointed out in \cite{poosala, cords, mi}, the natural way to define a
joint distribution of two (or more) columns is given by the frequencies of all combinations of coordinates.
Similarly, the distribution of each column is defined by the corresponding set of frequencies;
the definition of a product distribution follows. Let us define these notions precisely.\footnote{Here and thenceforth, we use lowercase Latin characters for indexes.
We use an italic font for integers and a boldface font for
 multidimensional indexes, e.g., $i\in [n]$ and $\mathbf{i} \in [n]^k$. For a multidimensional index, we use subscript
 to indicate its coordinate, e.g., $\mathbf{i}_1$ indicates the first coordinate of $\mathbf{i}$.
}
\begin{definition}\label{def: independence problem a}
Let $D$ be a stream of elements $p_1,\dots, p_m$, where each stream element is a $k$-tuple ${\mathbf{i}} = ({\mathbf{i}}_1,\dots,{\mathbf{i}}_k)$, where ${\mathbf{i}}_l \in [n]$.
A \emph{frequency} of a tuple ${\mathbf{i}} \in [n]^k$ is defined as the number of times it appears in $D$:
$
f_{{\mathbf{i}}} = |\{j: p_j = {\mathbf{i}}\}|.
$
For $l\in [k]$, a \emph{$l$-th margin frequency} of $t \in [n]$ is the number of times $t$ appears as a $l$-th coordinate:
$
f_{l}(t) = \sum_{{\mathbf{i}}\in [n]^k, {\mathbf{i}}_l = t}^n f_{{\mathbf{i}}}.
$
A \emph{joint distribution} is defined by a vector of probabilities
$
P_{joint}({\mathbf{i}}) = {f_{{\mathbf{i}}} \over m}, {\mathbf{i}}\in [n]^k.
$
Here $m$ is the size of stream $D$. A \emph{$l$-th margin distribution} is defined by a vector of probabilities
$
P_{l}(t) = {f_{l}(t) \over m}, \ t\in [n].
$
A \emph{product distribution} is defined as:
$
P_{product}({\mathbf{i}}) = \prod_{l=1}^k P_{l}({\mathbf{i}}_l), {\mathbf{i}}\in [n]^k.
$
\end{definition}

\emph{Statistical distance} is one of the most fundamental metrics for
measuring the similarity of two distributions, and it has been a metric of
choice in many papers that discuss distribution closeness (see e.g., \cite{k-wise_independenc, ind3, batu_independence, ind2, mi, sahai, ind4}).
Given two distributions over a discrete domain,
the statistical distance is half of $L_1$ distance between the probability vectors.
\begin{definition}
Consider two distributions over a finite domain $\Omega$ given by two random variables $V, U$.
\emph{Statistical distance} $\Delta(V,U)$ is defined as:
$$
\Delta(V, U) = {1\over 2}\sum_{x\in \Omega} |P(V= x) - P(U=x)| = \max_{B\subseteq \Omega} |P(V\in B) - P(U\in B)|.
$$
\end{definition}

In particular, one of the most common methods of measuring independence is computing statistical distance between product
and joint distributions (see e.g., \cite{batu_independence, mi}). This is precisely the way we define our problem:
\begin{definition}
An \emph{Independence Problem} is the following:
Given stream $D$ of $k$-tuples, approximate, with one pass over $D$, with small memory and high precision the statistical distance between joint and product distribution
$
\Delta(P_{joint}, P_{product}).
$
\end{definition}

In the streaming model, Indyk and McGregor \cite{mi} recently gave exciting new
results for measuring pairwise independence, i.e., for $k=2$. To measure the independence, they consider two metrics: $L_2$ and $L_1$.
Recall that the $L_2$ distance between two probability distributions is a $L_2$ distance of their probability vectors. In particular,
the independence problem under the $L_2$ metric is defined as
$
\|P_{joint}-P_{product}\|_2.
$

For the $L_2$ metric and $k=2$, Indyk and McGregor give an $(1\pm \epsilon)$-approximation using polylogarithmic space.
However, it is well known that for probability distributions,
statistical distance is a significantly more powerful metric then the $L_2$ metric.
For instance, consider two distributions on $[2n]$, where the first distribution is uniform on $\{1,\dots, n\}$
and the second is uniform on $\{n+1,\dots, 2n\}$. In this case the statistical distance is $1$ but the $L_2$ distance is $\sqrt{2/n}\rightarrow 0$.
For example, Batu, Fortnow, Rubinfeld, Smith and White \cite{ind5} say:
\begin{quote}
{\em ``However,
the $L_2$-distance does not in general give a good measure of
the closeness of two distributions. For example, two distributions
can have disjoint support and still have small $L_2$-
distance.''}
\end{quote}

For the statistical distance metric and $k=2$, the Indyk and McGregor methods provide $\log{n}$-approximation
with polylogarithmic memory. In addition to $\log{n}$-approximation,
Indyk and McGregor give an $(1\pm \epsilon)$-approximation that requires $\Omega(n)$
memory, and also give a method that requires two passes to solve a promise
problem for a restricted range of parameters. Indyk and McGregor leave, as
their main open question, the problem of  improving their $\log
n$-approximation for the statistical distance metric.

In this paper we solve the  main open problem  posed by of Indyk and McGregor
for the statistical distance for pairwise independence and extend this result to any constant $k$. In particular, we present an algorithm that computes
an $(\epsilon, \delta)$-approximation
of the statistical distance between the joint and product distributions defined by a stream of $k$-tuples.
Our algorithm requires $O(\left({1\over \epsilon}\log({nm\over \delta})\right)^{(30+k)^k})$ memory and a single pass over the data stream. Theorem \ref{tm: main theorem} formally describes our main result. We did not try to optimize the constants in our memory bounds.

\subsection{Implicit Tensors}

It is convenient to present an alternative, equivalent formulation of the {\em independence problem} as well. We can consider the problem of approximating
the sum of absolute values of a \emph{tensor} $M_{Ind}$.

\begin{definition}
An $s$-dimensional \emph{tensor} $M$ is a $s$-dimensional array with indexes in the range $[n]$; that is, $M$ has an entry for each ${\mathbf{i}}\in [n]^s$. We denote by $m_{\mathbf{i}}$ the ${\mathbf{i}}$-th entry of $M$ for each ${\mathbf{i}}\in [n]^s$.
\end{definition}

\begin{definition}
Let $M$ be a $s$-dimensional tensor  with entries $m_{{\mathbf{i}}}, {\mathbf{i}}\in [n]^s$.
An $L_1$-norm of $M$ is a  $|M| = \sum_{{\mathbf{i}} \in [n]^{s}} |m_{{\mathbf{i}}}|$.
\end{definition}

For example, a $1$-dimensional tensor is an $n$-dimensional vector, a $2$-dimensional tensor is an $n\times n$-matrix and so forth.

Many streaming problems address \emph{explicitly} defined vectors (or matrices) where entries are equal to   frequencies of corresponding stream elements.
The Independence problem diverges from this setting; e.g., for pairwise independence, a pair $(i,j)$ affects all entries in $i$-th row and $j$-th column of the product probability matrix.
To reflect this important difference we consider the case where the entries of a tensor are defined \emph{implicitly} by a data stream.

\begin{definition}\label{def:wider ddcccddddddd}
Let $\mathcal{D}$ be a collection of data streams of size $m$ of elements from domain $\Omega$.
Let $\mathcal{F}: \mathcal{D} \times [n]^s \mapsto R$ be a fixed function. We say that $s$-dimensional tensor $M$ with entries $m_{\mathbf{i}} = \mathcal{F}(D, {\mathbf{i}}), {\mathbf{i}}\in [n]^s$ is \emph{implicity defined} by $\mathcal{F}$, given $D$. We denote an implicitly defined tensor as $\mathcal{F}(D)$.
\end{definition}

\begin{definition}\label{def:wider ddcccddddddffffffd}
Let $\mathcal{D}$ be a collection of data streams of size $m$ of $k$-tuples from domain $[n]^k$.
A $k$-wise \emph{Independence Function} $\mathcal{F}_{Ind}: \mathcal{D}\times [n]^k \mapsto R$ is a function defined as $\mathcal{F}_{Ind}(D, {\mathbf{i}}) = m^kf_{{\mathbf{i}}} - \prod_{l=1}^k f_{l}({\mathbf{i}}_l)$ for $\mathbf{i}\in [n]^k$. Here $f_{\mathbf{i}}$ is given by Definition \ref{def: independence problem a}.
\emph{Statistical distance tensor} $M_{Ind}$ is a $k$-dimensional tensor implicitly defined by $\mathcal{F}_{Ind}$, i.e., $M_{Ind} = \mathcal{F}_{Ind}(D)$.
\end{definition}

The main objective of our paper is approximating $|M_{Ind}|$. In particular, this implies solving  the Independence problem since $\Delta(P_{joint}, P_{product}) = {1\over m^k}|M_{Ind}|$, and since $m = |D|$ can be computed precisely.
We thus freely interchange the notions of the independence problem and computing $|M_{Ind}|$.
In fact, our approach is applicable to any function $\mathcal{F}$ for which
conditions of our main theorems are true.

\subsection{Why Existing Methods for Estimating $L_1$ Do Not Work}

Alon, Matias and Szegedy \cite{ams} initiated the study of computing norms of vectors defined
by a data stream. In their setting vector entries are defined by frequencies of the corresponding elements in the stream.
Their influential paper was followed by a sequence of exciting results including, among many others, works by
Bhuvanagiri, Ganguly, Kesh and Saha \cite{frequency_moments};
Charikar, Chen and Farach-Colton \cite{frequent}; Cormode and Muthukrishnan \cite{counmin, deltoids};
Feigenbaum, Kannan, Strauss and Viswanathan \cite{feigenbaum};
Ganguly and Cormode \cite{cormode ganguly};
Indyk \cite{stable};
Indyk and Woodruff \cite{frquency_moments1};
and Li \cite{li1, li2}.

There is an important difference between settings of \cite{ams} and the Independence problem.
Indeed, while the entries of the independence tensor are defined by frequencies of tuples,
there is no linear dependence. As a result, the aforementioned algorithms are not directly applicable to the Independence problem.

To illustrate this point, consider the celebrated method of stable distributions by Indyk \cite{stable}.
For $L_1$ norm, Indyk observed that a polylogarithmic (in terms of $n$ and $m$) number of sketches of the form $\sum C_{i} v_i$ gives an $(1\pm \epsilon)$-approximation
of $|V|$, when $C_i$ are independent random variables with Cauchy distribution.
Let us discuss the applicability of this method to the problem of pairwise independence.
A sketch $\sum_{{\mathbf{i}}} C_{{\mathbf{i}}}m_{{\mathbf{i}}}$, ${\mathbf{i}}\in [n]^2$, would solve this problem;
unfortunately, it is not clear how to construct a sketch in this form.
In particular, the probability matrix of the product distribution is given implicitly as two vectors of margin sketches.
It is not hard to construct sketches for margin distributions; however, it is not at all clear how to obtain a sketch for product distribution without using a multiplication of margin sketches.
On the other hand, if we do use a multiplication of margin sketches (this is the approach of Indyk and McGregor),
the random variable that is associated with the tensor's elements is a product of independent Cauchy variables.
Therefore, random variables for distinct entries are {\em not independent}, and thus typical arguments used for stable distribution methods do not work anymore.
In fact, the main focus of the Indyk and McGregor analysis is to overcome this problem:
\begin{quote}
{\em ``Perhaps ironically, the biggest technical challenges that arise
relate to ensuring that different components of our estimates are sufficiently independent.''}
\end{quote}
For pairwise independence, Indyk and McGregor use the product of two Cauchy variables, where one of them is ``truncated.''
Using elegant observations, they show that such a sketch allows achieving $\log{n}$-approximation of the statistical distance.
Unfortunately, it is not clear how the method of a Cauchy product can be improved at all,
since the $\log{n}$ factor is a necessary component of their seemingly tight analysis.

\subsection{A Description of Our Approach}
As we discuss below, solving the Independence problem requires developing multiple new tools and using them jointly  with known methods.

\bigskip
\noindent

\bigskip
\noindent
\underline{\emph{Dimension Reduction for Implicit Tensors}}.
Our solution can be logically divided into three steps which are explained, informally, below.

First, we prove that given a $polylog$-approximation algorithm for $k$-dimensional tensors and an $\epsilon$-approximation algorithm for a special type of $(k-1)$-dimensional tensors, it is possible to derive an $\epsilon$-approximation algorithm on $k$-dimensional tensors, where the resulting algorithm increases memory bound by a factor $O(({1\over\epsilon}\log{nm\over \delta})^{O(1)})$.
Thus, we can trade dimensionality and precision for memory.
To illustrate this step, consider pairwise independence.
There exist an $\epsilon$-approximation algorithm on vectors \cite{stable}
and a $\log{n}$-approximation algorithm on matrices \cite{mi}.
We show that these algorithms can be used to obtain an $\epsilon$-approximation algorithm on matrices.
This informal idea is stated precisely as Dimension Reduction Theorem \ref{tm: fff reduction}.
This theorem is the main technical contribution of our paper; the majority of
the paper is devoted to establishing its validity.

Second, given a $polylog$-approximation algorithm for $k$-dimensional tensors and
an $\epsilon$-approximation algorithm on vectors, we can derive an $\epsilon$-approximation
algorithm on $k$-dimensional tensors  by applying the Dimension Reduction Theorem recursively $k$-times. 
The memory will be increased by a factor roughly $O(({1\over\epsilon}\log{nm\over \delta})^{(30+k)^k})$
which is $O(({1\over\epsilon}\log{nm\over \delta})^{O(1)})$ for constant $k$.
This informal idea is stated precisely as Theorem \ref{tm: approxima theorem}.

Third, we show that the conditions for Theorem \ref{tm: approxima theorem} hold for the Independence problem.
These results are stated in Lemmas \ref{lm: polylog for indep tensors}
and \ref{lm: epsilon for indep tensors}, and in fact are a generalization of
results from \cite{stable, mi}. Section \ref{sec:indyk genralizing} is devoted to the proof of these lemmas.

The rest of our discussion is devoted to a description of the main ideas behind the Dimension Reduction Theorem.

\bigskip
\noindent
\underline{\emph{Hyperplanes and Absolute Vectors}}.
Consider a matrix $M$; a very natural idea to approximate $|M|$ is by approximating a $L_1$ norm of a vector with entries equal to $L_1$ norms of rows of $M$.
We generalize this idea to tensors by defining the following operators.

\begin{definition}\label{def: hyperplaned hypercubcccccccce}
For any $s,t\ge 0$, we denote by $(,)$ a mapping from $[n]^{s}\times [n]^{t}$ to $[n]^{s+t}$ obtained by concatenation of coordinates. For instance, $((1,2), 3)$ is a an element from $[n]^3$ with coordinates $1,2,3$ respectfully.
\end{definition}

\begin{definition}\label{def: hyperplaned tensor}
Let $M$ be a $s$-dimensional tensor with entries $m_{{\mathbf{j}}}, {\mathbf{j}}\in [n]^s$.
For any $l\in [n]$, $Hyperplane(M,l)$ is a $(s-1)$-dimensional tensor with entries $m_{(l,{\mathbf{i}})}$
for ${\mathbf{i}} \in [n]^{s-1}$.
\end{definition}

\noindent
For example, when $k=2$, the $l$-th hyperplane of a matrix $M$ is its $l$-th row.

\begin{definition}
An $l$-th hyperplane is $\alpha$-\emph{significant} if  $|Hyperplane(M,l)| \ge \alpha|M|$.
\end{definition}

\noindent
For example, when $k=2$, the $l$-th row is $\alpha$-significant if the $L_1$-norm of the vector defined by the $l$-th row carries at least $\alpha$-fraction of $|M|$.

\begin{definition}\label{def:rollup tensfsssssor deded}
For  a $s$-dimensional tensor $M$, an $AbsoluteVector(M)$ is a vector of dimensionality $n$ with entries $|Hyperplane(M, l)|, l\in [n]$. In particular, $|AbsoluteVector(M)| = |M|$.
\end{definition}

\bigskip
\noindent
\underline{\emph{Projected Dimensions}}.
To prove Dimension Reduction Theorem \ref{tm: fff reduction} we need to map $s$-dimensional tensors to $(s-1)$-dimensional tensors with a small distortion of $L_1$. We come up with the following mapping.

\begin{definition}\label{def:rollup tensor deded}
Let $M$ be a $s$-dimensional tensor with entries $m_{\mathbf{l}}$, where ${\mathbf{l}} \in [n]^s$, and let $0\le t\le s$.
A \emph{Suffix-Sum} tensor $T_t(M)$ is a $(s-t)$-dimensional tensor with entries (
for each ${\mathbf{i}}\in [n]^{s-t}$):
$$m'_{{\mathbf{i}}} = \sum_{{\mathbf{j}}\in [n]^t} m_{({\mathbf{j}},{\mathbf{i}})}$$
Also, we define $T_0(M) = M$.
In other words, the ${\mathbf{i}}$-th entry of $T_t(M)$ is obtained by summing all elements of $M$ with the $(s-t)$-suffix equal to ${\mathbf{i}}$.
In particular, $T_s(M)$ is a scalar that is equal to $\sum_{{\mathbf{i}}\in [n]^s} m_{{\mathbf{i}}}$.
\end{definition}

For matrix $M$ with entries $m_{i,j}$, the Suffix-Sum operator $T_1(M)$ defines a vector $V$ with entries
$v_j = \sum_{i} m_{i,j}$. In other words, all entries of $M$ that belong to the same columns (i.e., have the same second coordinate, i.e., the same ``suffix'')
are ``summed-up'' to generate a single entry of $V$.
In some sense, the Suffix-Sum operator is orthogonal to the AbsoluteVector operator. In the latter case we sum up the absolute values that belong to the same hyperplane, i.e., have identical prefix;
in the former case we sum up all elements (and not their absolute values) that have an identical suffix.

Clearly $|T_1(M)| \le |M|$; however, it is possible in general that $|T_1(M)|\ll |M|$.
The key observation is that in some cases $|T_1(M)| \sim |M|$ and thus we can use an approximation of $|T_1(M)|$ to approximate $|M|$.
To illustrate this point, consider a matrix $M$ with entries $m_{i,j}$ that contains a very ``significant'' row $i$ (i.e., $\sum_{j} |m_{i,j}| \sim |M|$).
The key observation is that in this case
$|T_1(M)| \sim |M|$; thus, if there is a significant row, it can approximated using $|T_1(M)|$.
The same idea is easily generalized: if a $s$-dimensional tensor $M$ contains a $(1-\epsilon)$-significant hyperplane $Hyperplane(M, l)$, then $|T_1(M)|$ is an $2\epsilon$-approximation of $|Hyperplane(M, l)|$.
We prove this statement in Fact \ref{lm: tensor significant fff}.

Note that $T_1(M)$ is a $(s-1)$-dimensional tensor; if $M$ is a matrix, then $T_1(M)$ is a vector for which we can apply methods from \cite{stable}. Thus, approximating $|T_1(M)|$ is potentially an easier problem.

\bigskip
\noindent
\underline{\emph{Certifying Tournaments}}.
We have shown that $T_1(M)$ can be useful for approximating $|M|$.
However, when can we rely on the value of $|T_1(M)|$?
In particular, how can we distinguish between the cases when there is a heavy hyperplane (and thus $|T_1(M)|$ is a good approximation) and
the case when there is no heavy hyperplane (and thus $|T_1(M)|$ does not contain reliable information)?
 The second key observation is that it can be done using ``certifying tournaments.''
To illustrate this point, consider again the case $k=2$, where $M$ is a matrix.
Split $M$ into two random sub-matrices by sampling the rows w.p. $1/2$.
If there is a heavy row, then with probability close to $1$,
one sub-matrix will have a significantly larger norm then the other.
Recall that the method of \cite{mi} gives us a $\log{n}$-approximation.
Thus, for very heavy rows, the \emph{ratio} between approximations of norms obtained by the method from \cite{mi} will be large.
On the other hand, we show that if there are no heavy rows, then such behavior is quite unlikely to be observed many times.
Thus, there exists a way to distinguish between the first and the second cases for $(1-{\epsilon\over\log^2{n}})$-significant rows.\footnote{It is worth noting that the idea of ``split-and-compare'' is not new. Group testing \cite{deltoids} exploits
a similar approach. However, the methods from \cite{deltoids} require $\epsilon$-approximation of $L_1$; in contrast, we use certifying tournaments to improve the approximation.}

The method of certifying tournaments can be generalized to any $s\le k$ as follows.
Let $M$ be a $s$-dimensional tensor with entries $m_{\mathbf{i}}$ for ${\mathbf{i}}\in [n]^s$.
We ``split'' $M$ into two ``sampled'' $s$-dimensional tensors $M^0$ and $M^1$ by randomly sampling the first coordinate.
That is, $M^1$ has entries $m_{\mathbf{i}}H({\mathbf{i}}_1)$ and $M^0$ has entries $m_{\mathbf{i}}(\one-H({\mathbf{i}}_1))$,
where $H:[n]\mapsto \{0,1\}$ is pairwise independent and uniform.
If there exists a $\beta$-approximation algorithm for sampled tensors, and
there exists an $\epsilon$-approximation algorithm for Suffix-Sum, $|T_1(M^0)|$ and $|T_1(M^0)|$,
then we can approximate $L_1$ norm of significant hyperplanes.
Indeed, if there exists a significant hyperplane $M_l$ of $M$, then the ratio between $\beta$-approximations of $|M^0|$ and $|M^1|$
will be large. If this is the case, the approximation of $T(M^{H(l)})$ is also an $\epsilon$-approximation of
$|M_l|$.

To summarize, our main technical Theorem \ref{lm: the most importnat one tensor} proves that it is possible to
output a number $U$ such that $U$ is either an approximation of some hyperplane or $0$. Further, if there exists a
$(1-{\epsilon\over\beta^2})$-significant hyperplane, then with high probability, $U$ is its approximation.
We call such an algorithm an $\alpha$-ThresholdMax algorithm, for $\alpha = O({\epsilon\over\beta^2})$.

\bigskip
\noindent
\underline{\emph{Indirect Sampling}}.
Many streaming algorithms compute statistics on \emph{sampled} streams, which are random subsets of $D$ defined by some randomness $\mathcal{H}$.
In many cases, a sampled stream directly corresponds to a collection of sampled entries of a frequency vector.
In contrast, subsets of $D$ do not correspond directly to entries $M_{Ind}$.
Thus, our algorithms employ \emph{indirect} sampling, where randomness defines sampled entries of $M_{Ind}$ rather then the entries of a data stream $D$.
We define a Prefix-Zero operator.
\begin{definition}\label{def:wider dd}
Let $M$ be a $s$-dimensional tensor with entries $m_{\mathbf{i}}, {\mathbf{i}}\in[n]^s$ and let $H_1,\dots,H_t, t\le s$ be hash functions $H_j: [n]\mapsto \{0,1\}$. A \emph{Prefix-Zero} tensor  $W(M, H_1,\dots, H_t)$  a is a $s$-dimensional tensor with entries
$m_{{\mathbf{i}}} \prod_{l=1}^t H_l({\mathbf{i}}_l)$.
\end{definition}

Our algorithms work with tensors that are defined by compositions of $\mathcal{F}_{Ind}$, Prefix-Zero and Suffix-Sum.
We thus extend the definition of implicitly defined tensors.
\
\

\

\noindent
\textbf{Definition \ref{def:wider ddcccddddddd}.}\ \ \textbf{(Revised)}
\textit{Let $\mathcal{D}$ be a collection of data streams of size $m$ of elements
from domain $\Omega$ and let $\mathfrak{H}$ be a collection of hash functions from $[n]$ to $\{0,1\}$.
Let $\mathcal{F}: \mathcal{D} \times \mathfrak{H}^{t} \times [n]^s \mapsto R$ be a
fixed function, for some $0\le t \le s$. We say that a $s$-dimensional tensor $M$
with entries $m_{\mathbf{i}} = \mathcal{F}(D, \mathcal{H}, {\mathbf{i}}), {\mathbf{i}}\in [n]^s$
is \emph{implicity defined} by $\mathcal{F}$, given $D\in \mathcal{D}$ and $\mathcal{H} \in \mathfrak{H}^t$.
We denote an implicitly defined tensor as $\mathcal{F}(D, \mathcal{H})$.
}

\noindent
\begin{example}
Consider $k=2$. Then $\mathcal{F'}(D, H) = W(\mathcal{F}_{Ind}(D), H)$ defines a matrix
that represents a collection of rows sampled by a hash function $H:[n]\mapsto \{0,1\}$.
\end{example}

\bigskip
\noindent
\underline{\emph{Generalizing the Method of Indyk and Woodruff \cite{frquency_moments1} to Work on Implicit Vectors}}.
The ThresholdMax algorithm solves the problem that resembles the well-known problem of finding an element with maximal frequency,
see, e.g., \cite{frequent} and \cite{counmin}.
The celebrated method of Indyk and Woodruff \cite{frquency_moments1} uses maximal entries to estimate $L_p$ norms on vectors defined by frequencies. We apply the ideas of \cite{frquency_moments1} to approximate $|AbsoluteVector(M)| = |M|$.

Unfortunately, the method of Indyk and Woodruff \cite{frquency_moments1} is not directly applicable since some basic tools available for frequency vectors (such as $L_2$ norm approximation) cannot be used.
We propose a different algorithm which is still in the same spirit as \cite{frquency_moments1}; it can be found in Section \ref{sec: sdgsdgsdsdg}.
We prove Lemmas \ref{lm: estimating V} and \ref{lm: find epsilo     dfggf n} which state that an existence
of an $\alpha$-ThresholdMax algorithm for an implicitly defined vector $V$ implies an existence of
an $(\epsilon, \delta)$-approximation algorithm for $|V|$, with memory increased by an additional
factor of ${1\over\alpha}poly({1\over \epsilon}\log{nm\over\delta})$.

\bigskip
\noindent
\underline{\emph{Other Technical Issues}}.
There are several other rather technical issues that need to be resolved.
We need to prove that the methods of Indyk \cite{stable} and Indyk and McGregor \cite{mi}
are applicable for $k$-dimensional tensors that are obtained from $M_{Ind}$ by applying Prefix-Zero and Suffix-Sum operators.
The proofs can be found in Section \ref{sec:indyk genralizing}.
To prove our main theorems, certain properties of the operations on tensors
should be established. We prove these in Section \ref{sec:tensors}.

\subsection{Related Work}
Measuring pairwise independence between two or more random variables is a
fundamental problem that touches many areas of computer science.
The problems of  efficiently testing pairwise, or $k$-wise, independence were recently
considered by Alon, Andoni, Kaufman, Matulef, Rubinfeld and Xie \cite{k-wise_independenc};
Alon, Goldreich and Mansour \cite{ind3};
Batu, Fortnow, Fischer, Kumar, Rubinfeld and White \cite{batu_independence}; and
Batu, Kumar and Rubinfeld \cite{ind2}.
They addressed the problem of minimizing the number of samples needed to
obtain sufficient approximation, when the joint distribution is accessible
through a sampling procedure.
Unlike the work in \cite{k-wise_independenc, ind3, batu_independence, ind2},
in the streaming model, the joint distribution is given by a stream of tuples.

Many exciting results have been reported
in the streaming model, including, for example,
Alon, Duffield, Lund and Thorup \cite{priority1};
Alon, Matias and Szegedy \cite{ams};
Bagchi, Chaudhary, Eppstein and Goodrich \cite{epstein};
Bar-Yossef, Jayram, Kumar and Sivakumar \cite{frequency_lower_bound1};
Bar-Yossef, Kumar and Sivakumar \cite{triangles};
Beame, Jayram and Rudra \cite{beam1};
Bhuvanagiri, Ganguly, Kesh and Saha \cite{frequency_moments};
Chakrabarti, Khot and Sun \cite{set_disjointness};
Charikar, O'Callaghan and Panigrahy \cite{clustering2};
Coppersmith and Kumar \cite{frequency_impr2};
Cormode, Datar, Indyk and Muthukrishnan \cite{hamming};
Datar, Immorlica, Indyk, and Mirrokni \cite{imor};
Duffield, Lund and Thorup \cite{jacmthorup};
Feigenbaum, Kannan, McGregor, Suri and Zhang \cite{graphs};
Gal and Gopalan \cite{lis1};
Ganguly \cite{frequency_impr1};
Indyk \cite{stable};
Indyk and McGregor \cite{mi};
Indyk and Woodruff \cite{frquency_moments1};
Mitzenmacher and Vadhan \cite{mitz};
Sun and Woodruff \cite{lis};
and Szegedy \cite{dlt}.
For a detailed discussion of the streaming model, we refer readers to the excellent surveys
of Aggarwal (ed.) \cite{strbook1}; Babcock, Babu, Datar, Motwani and Widom \cite{models_issues}; and
Muthukrishnan \cite{strbook}.

In our recent work, \cite{our L2}, we also address the problem of $k$-wise independence for data stream.
In contrast to the current paper, in \cite{our L2} we study the $L_2$ norm and use entirely different techniques.


\subsection{Roadmap}
Section \ref{sec: main theorems} describes the main theorems of the paper.
In Section \ref{sec:tensors} we show some useful properties of Suffix-Sum and Prefix-Zero.
Section \ref{sec: tournaments fdffg} contains proof of the Tournament algorithm.
Section  \ref{sec: sdgsdgsdsdg} contains a generalization of the ideas of Indyk and Woodruff
\cite{frquency_moments1} to implicit vectors. Finally, Section \ref{sec:indyk genralizing} generalizes methods of Indyk \cite{stable} and Indyk and McGregor \cite{mi} to work with sampled portions of $M_{Ind}$.


\section{Main Theorems}\label{sec: main theorems}
The proof of our result is based on three main steps which are summarized by the following theorems.
The remainder of this paper is devoted to establishing these theorems.

\begin{theorem}\textbf{Dimension Reduction for Implicit Tensors }\label{tm: fff reduction}

Let $s\ge 1$ and let $M$ be a $s$-dimensional tensor with $poly(n,m)$-bounded entries that is defined by a function $\mathcal{F}$, 
i.e., $M = \mathcal{F}(D, \mathcal{H})$ where $D$ is a data stream and $\mathcal{H}$ is a fixed randomness.
Let $H:[n]\mapsto \{0,1\}$ be an arbitrary fixed hash function. 
Assume that
\begin{enumerate}
\item There exists an algorithm $\mathfrak{A}(D, \mathcal{H}, H, \delta)$ that, given $D$ and an access to $\mathcal{H}$ and $H$, in one pass obtains $(\log^k(n), \delta)$-approximation of $|W(M,H)|$;
\item There exists an algorithm $\mathfrak{B}(D, \mathcal{H}, H, \epsilon, \delta)$ that, given $D$ and an access to $\mathcal{H}$ and $H$, in one pass obtains an $(\epsilon, \delta)$-approximation of $|T_1(W(M,H))|$;
\item Both algorithm require memory $\nu(n,m,\epsilon, \delta) \le O(\left({1\over\epsilon}\log{nm\over \delta}\right)^{(30+k)^s})$,
beyond the memory required for $H$ and $\mathcal{H}$.
\end{enumerate}

Then there exists an algorithm that in one pass obtains an $(\epsilon, \delta)$-approximation of
$|M|$ using memory $\left({1\over\epsilon}\log{nm\over \delta}\right)^{(30+k)^{s+1}}$.
\end{theorem}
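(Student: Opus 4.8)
The plan is to obtain the algorithm by composing two ingredients established elsewhere in the paper: the certifying-tournament construction of Theorem~\ref{lm: the most importnat one tensor}, which converts a polylog-approximation of sampled tensors together with an $\epsilon$-approximation of their Suffix-Sums into a ThresholdMax primitive, and the generalized Indyk--Woodruff estimator of Lemmas~\ref{lm: estimating V} and~\ref{lm: find epsilo     dfggf n}, which converts an $\alpha$-ThresholdMax primitive for an implicitly defined vector into an $(\epsilon,\delta)$-approximation of its $L_1$ norm with memory blown up by $\frac1\alpha poly(\frac1\epsilon\log\frac{nm}\delta)$. The bridge is the identity of Definition~\ref{def:rollup tensfsssssor deded}, $|M| = |AbsoluteVector(M)|$: estimating $|M|$ is literally estimating the $L_1$ norm of the implicitly defined $n$-dimensional vector $V := AbsoluteVector(M)$, whose $l$-th entry is $|Hyperplane(M,l)|$.

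First I would build an $\alpha$-ThresholdMax algorithm for $V$, and more generally for $AbsoluteVector(W(M,G))$ for the Prefix-Zero restrictions $G$ that the Indyk--Woodruff layer needs. Here I invoke Theorem~\ref{lm: the most importnat one tensor}: its hypotheses ask for a $\beta$-approximation of the two sampled tensors $M^0, M^1$ obtained by splitting the first coordinate with a pairwise-independent hash, and for an $\epsilon$-approximation of $|T_1(M^0)|$ and $|T_1(M^1)|$. The first is exactly $\mathfrak{A}$ of Assumption~1 with $\beta = \log^k n$ --- each $M^b$ has the form $W(M,\cdot)$, and since the hash $H$ in Assumptions~1--2 is an \emph{arbitrary} fixed function, $\mathfrak{A}$ and $\mathfrak{B}$ may be instantiated with the split hash itself (or, for the subsampled variants, with its conjunction with $G$). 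The second is exactly $\mathfrak{B}$ of Assumption~2, which by Fact~\ref{lm: tensor significant fff} also $O(\epsilon)$-approximates any $(1-\epsilon)$-significant hyperplane of $M^b$. One then checks that $\mathcal{F}$ composed with Prefix-Zero and Suffix-Sum is again an implicitly defined tensor in the sense of the revised Definition~\ref{def:wider ddcccddddddd} with $poly(n,m)$-bounded entries, so that these remain legal one-pass subroutines --- this is where the tensor identities of Section~\ref{sec:tensors} are used. Theorem~\ref{lm: the most importnat one tensor} then yields an $\alpha$-ThresholdMax algorithm with $\alpha = O(\epsilon/\beta^2) = O(\epsilon/\log^{2k} n)$.

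Second, I would feed this primitive into Lemmas~\ref{lm: estimating V} and~\ref{lm: find epsilo     dfggf n}, which are formulated precisely for implicitly defined vectors and hence apply to $V = AbsoluteVector(M)$; they return an $(\epsilon,\delta)$-approximation of $|V| = |M|$ at the cost of an extra memory factor $\frac1\alpha poly(\frac1\epsilon\log\frac{nm}\delta)$. Every component --- the instances of $\mathfrak{A}$ and $\mathfrak{B}$, the tournament splits, and the Indyk--Woodruff sampling levels --- is a linear-sketch computation whose randomness is fixed in advance and stored, so they all run simultaneously in a single pass; replacing the confidence parameter of each subcall by $\delta/poly(\frac1\epsilon\log\frac{nm}\delta)$ and taking a union bound over the $poly(\frac1\epsilon\log\frac{nm}\delta)$ subcalls keeps the total failure probability at $\delta$ at the cost of a $\log\frac1\delta$ factor. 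For memory, the base cost is $\nu \le (\frac1\epsilon\log\frac{nm}\delta)^{(30+k)^s}$ by Assumption~3; the tournament step multiplies it by $poly(\frac1\epsilon\log\frac{nm}\delta)$ and the Indyk--Woodruff step by $\frac1\alpha poly(\frac1\epsilon\log\frac{nm}\delta) = poly(\frac1\epsilon\log\frac{nm}\delta)$, while storing the pairwise-independent hashes adds only $poly(\log n)$. Since the exponent gap $(30+k)^{s+1}-(30+k)^s = (29+k)(30+k)^s$ far exceeds the constant-order exponent accumulated by these two $poly$ factors, the product is at most $(\frac1\epsilon\log\frac{nm}\delta)^{(30+k)^{s+1}}$, as claimed.

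The step I expect to be the main obstacle is the first one: verifying that subsampling coordinates of $AbsoluteVector(M)$ corresponds exactly, at the tensor level, to Prefix-Zero, that $T_1$ of a sampled tensor still tracks the dominant hyperplane, and that all of these compositions remain inside the implicit-tensor framework so that Assumptions~1--3 keep applying verbatim to the subsampled tensors --- without this, the hypotheses of Theorem~\ref{lm: the most importnat one tensor} and of Lemmas~\ref{lm: estimating V}--\ref{lm: find epsilo     dfggf n} are not literally satisfied. The remaining parts --- the $\delta$-boosting union bound and the exponent arithmetic --- are routine precisely because the constant $30$ was chosen with room to spare.
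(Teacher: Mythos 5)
Your proposal reproduces the paper's own proof exactly: invoke Theorem~\ref{lm: the most importnat one tensor} with $\beta=\log^k n$ to obtain an $O(\epsilon/\log^{2k}n)$-ThresholdMax algorithm for $AbsoluteVector(\mathcal{F}(D,\mathcal{H}))$, pass it through Lemma~\ref{lm: find epsilo     dfggf n} to get a Cover algorithm and then Lemma~\ref{lm: estimating V} to get the $(\epsilon,\delta)$-approximation of $|AbsoluteVector(M)|=|M|$, with the memory blow-up absorbed by the gap $(30+k)^{s+1}-(30+k)^s$. The obstacle you flagged (that Prefix-Zero and Suffix-Sum compositions stay inside the implicit-tensor framework) is exactly what the paper delegates to the tensor identities of Section~\ref{sec:tensors} and Fact~\ref{fct: bounded }, so your account is complete.
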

\begin{proof}
Follows from Theorem \ref{lm: the most importnat one tensor}, Lemma \ref{lm: estimating V}, Lemma \ref{lm: find epsilo     dfggf n} and elementary computations.

Indeed, the assumptions of Theorem \ref{sec: main theorems} imply,
by Theorem \ref{lm: the most importnat one tensor}, an existence of a ${\epsilon\over \log^{2k}(n)}$-ThresholdMax
algorithm (see Definition \ref{def: threshold algorithm dd}) for restricted function $\mathcal{F'} = AbsoluteVector(\mathcal{F}(D, \mathcal{H}))$.
The existence of a ThresholdMax algorithm implies, by Lemma \ref{lm: find epsilo     dfggf n}, 
the existence of a Cover algorithm (see Definition \ref{def: cover algorithm}) for $AbsoluteVector(\mathcal{F}(D, \mathcal{H}))$. 
The assumption that the entries of $M$ are polynomially bounded and Fact \ref{fct: bounded } imply that the entries of $AbsoluteVector(\mathcal{F}(D, \mathcal{H}))$
are polynomially bounded as well. Thus, by Lemma \ref{lm: estimating V}, there exists an $(\epsilon, \delta)$-approximation 
algorithm for $|AbsoluteVector(\mathcal{F}(D, \mathcal{H}))|$.
Finally, $|AbsoluteVector(\mathcal{F}(D, \mathcal{H}))| = \sum_{i\in [n]} |Hyperplane(\mathcal{F}(D, \mathcal{H}), l)| = |M|$.

After substituting the parameters, the memory required is less than (for sufficiently large $n$)
%
%
%
%

$$
{1\over \epsilon^{30}}\log({1\over \delta})\log^{2k+20}(nm)\nu(n,m,{\epsilon^7\over \log^4(nm)}, {\epsilon^{17}\over \log^{2k}(n)\log^8(mn)}) \le
$$
$$
\left({1\over\epsilon}\log{nm\over \delta}\right)^{(30+k)^{s+1}}.
$$
\end{proof}

\begin{theorem} \textbf{Approximation Theorem for Tensors}\label{tm: approxima theorem}

Let $M$ be a $k$-dimensional tensor with entries bounded by $poly(n,m)$ and implicitly defined by a function $\mathcal{F}(D)$. Assume that
\begin{enumerate}
\item There exists an algorithm $\mathfrak{B}_s(D, H_1,\dots, H_s)$ (for some $s< k$) that, given $D$ and an access to fixed hash functions $H_1,\dots,H_s$,  in one pass obtains an $(\epsilon, \delta)$-approximation of $|T_s(W(M,H_1,\dots, H_s))|$;
\item There exist algorithms $\mathfrak{A}_{s_1, s_2}(D, H_1,\dots, H_{s_1})$ (for any $0\le s_2 \le s_1\le s$) that, given $D$ and an access to $H_i$s, in one pass obtain a $(\log^k(n), \delta)$-approximation of $|T_{s_2}(W(M,H_1,\dots,H_{s_1}))|$;
\item All algorithms use memory bounded by $O(\left({1\over\epsilon}\log{nm\over \delta}\right)^{20})$,
beyond the memory required for $H_i$s.
\end{enumerate}

Then there exists an algorithm that in one pass obtains an $(\epsilon, \delta)$-approximation of
$|M|$ using memory $O(\left({1\over\epsilon}\log{nm\over \delta}\right)^{(30+k)^k})$.

\end{theorem}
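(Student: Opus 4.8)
The plan is to build the algorithm by applying the Dimension Reduction Theorem (Theorem~\ref{tm: fff reduction}) $s$ times, peeling one coordinate off $M$ at each step until we reach a tensor for which hypothesis~1 already supplies an $(\epsilon,\delta)$-approximation. Fix pairwise independent uniform $H_1,\dots,H_s:[n]\mapsto\{0,1\}$ and, for $0\le j\le s$, put
\[
N_j \;:=\; T_j\bigl(W(M,H_1,\dots,H_j)\bigr),
\]
a $(k-j)$-dimensional tensor with $N_0=M$. By the revised Definition~\ref{def:wider ddcccddddddd}, each $N_j$ is implicitly defined by a function of $D$ with fixed randomness $(H_1,\dots,H_j)$, and since $W$ only zeroes entries while $T_j$ sums at most $n^j=poly(n)$ of them, the entries of $N_j$ stay $poly(n,m)$-bounded. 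I would prove by downward induction on $j\in\{s,s-1,\dots,0\}$ that $|N_j|$ has a one-pass $(\epsilon,\delta)$-approximation using memory $(\tfrac1\epsilon\log\tfrac{nm}{\delta})^{e_j}$ with $e_j\le(30+k)^{k-j}$; the case $j=0$ is the theorem, since $|N_0|=|M|$.

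The base case $j=s$ is exactly hypothesis~1 ($\mathfrak{B}_s$), with $e_s=20\le(30+k)^{k-s}$ as $s<k$. For the inductive step, assume the claim for $j+1$ with $j\le s-1$ and apply Theorem~\ref{tm: fff reduction} to the $(k-j)$-dimensional tensor $N_j$ with extra hash function $H:=H_{j+1}$. The two identities needed here — elementary properties of Prefix-Zero and Suffix-Sum that belong in Section~\ref{sec:tensors} — are
\[
W(N_j,H_{j+1}) \;=\; T_j\bigl(W(M,H_1,\dots,H_{j+1})\bigr),
\qquad
T_1\bigl(W(N_j,H_{j+1})\bigr) \;=\; N_{j+1},
\]
both immediate by writing out entries and interchanging sums, since $W(\cdot,H_{j+1})$ rescales entry $\mathbf i$ by a factor depending only on the $(j+1)$-st original coordinate whereas $T_j$ sums over the first $j$, disjoint, coordinates. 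Hypothesis~1 of Theorem~\ref{tm: fff reduction} then asks for a $(\log^k n,\delta)$-approximation of $|W(N_j,H_{j+1})|=|T_j(W(M,H_1,\dots,H_{j+1}))|$, furnished by hypothesis~2 of the present theorem with $s_1=j+1,\ s_2=j$ (legal as $j+1\le s$); hypothesis~2 of Theorem~\ref{tm: fff reduction} asks for an $(\epsilon,\delta)$-approximation of $|T_1(W(N_j,H_{j+1}))|=|N_{j+1}|$, which is the inductive hypothesis; and hypothesis~3 of Theorem~\ref{tm: fff reduction} requires both sub-algorithms to fit in $O((\tfrac1\epsilon\log\tfrac{nm}{\delta})^{(30+k)^{k-j}})$ space, which holds by hypothesis~3 of the present theorem for the $(\log^k n)$-algorithm and by $e_{j+1}\le(30+k)^{k-j-1}$ for the inductive one.

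It remains to propagate the memory and accuracy estimates through the induction. Inspecting the last display in the proof of Theorem~\ref{tm: fff reduction}, one reduction step calls its sub-algorithms at tightened parameters (of the form $\epsilon'\approx\epsilon^{7}/\log^4(nm)$ and a $\delta'$ polynomially related to the inputs) and multiplies the space exponent by at most an absolute constant $c$, plus an additive $O(k)$; the crucial point is that $c\ll 30+k$. Hence $e_j\le c\,e_{j+1}+O(k)$ with $e_s=20$, giving $e_j\le c^{\,s-j}\cdot O(k)$, and since $s<k$ one verifies $e_j\le(30+k)^{k-j}$ for every $j$ — in particular $e_0\le(30+k)^k$ — which also discharges the bound needed in hypothesis~3 of Theorem~\ref{tm: fff reduction} at each level. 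The $s=O(k)$ hash functions add only $O(k\log n)$ bits of storage, absorbed into the bound, and the compounded shrinking of $(\epsilon,\delta)$ across the $s<k$ nested calls alters $\tfrac1\epsilon$ and $\tfrac1\delta$ by at most polynomial/polylogarithmic factors, which again does not push the final exponent past $(30+k)^k$.

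The real content is Theorem~\ref{tm: fff reduction}; modulo it, the only genuine work is (i) proving the two commutation identities in enough generality that $W(N_j,H_{j+1})$ and $T_1(W(N_j,H_{j+1}))$ are again implicitly defined tensors of the kind Theorem~\ref{tm: fff reduction} accepts, and (ii) closing the memory recursion — verifying that nesting Theorem~\ref{tm: fff reduction} a constant number of times, each step shrinking $(\epsilon,\delta)$ and multiplying the exponent by a constant, terminates at $(30+k)^k$ rather than at the $(30+k)^{k+1}$ one would obtain by treating the bound in Theorem~\ref{tm: fff reduction} as tight. I expect (ii) to be the main, if routine, obstacle.
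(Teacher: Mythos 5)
Your proposal is correct and follows essentially the same route as the paper: the downward induction on the number of hash functions (your $N_j$ is exactly the paper's $M' = T_{s_1}(W(M,H_1,\dots,H_{s_1}))$ with $j = s_1$), the same two commutation identities of Prefix-Zero and Suffix-Sum (the paper's Corollaries~\ref{fct:dfsvmnsvmndsvngsdbbg} and~\ref{fct:dfsgfffshhhhhdbbg}), and the same application of Theorem~\ref{tm: fff reduction} at each step. Your explicit treatment of the memory recursion is actually slightly more careful than the paper's (which simply asserts the bound $g(s_1)$), and your observation that the off-by-one in the exponent must be absorbed by the slack between the true per-step multiplier and $30+k$ is the right reading of the situation; the paper handles this implicitly by defining $g$ so that the recursion closes at $(30+k)^k$ rather than by invoking the stated bound of Theorem~\ref{tm: fff reduction} verbatim.
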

\begin{proof}

Define $g(x) = \left({1\over\epsilon}\log{nm\over \delta}\right)^{(30+k)^{k-x}}$
First, we show that for any $s_1 \le s$ there exists an algorithm $\mathfrak{B}_{s_1}(D, H_1,\dots, H_{s_1})$ that gives an $(\epsilon, \delta)$-approximation of $|T_{s_1}(W(M,H_1,\dots,H_{s_1}))|$ and uses memory at most  $g(s_1)$.

We prove this fact by induction on $s_1$.
For $s_1 = s$, the fact follows from the first assumption of Theorem \ref{tm: approxima theorem} since $g(s) \ge \left({1\over\epsilon}\log{nm\over \delta}\right)^{20}$.
For $s_1 < s$,
denote $\mathcal{F'}(D, H_1,\dots,H_{s_1}) = T_{s_1}(W(\mathcal{F}(D), H_1,\dots,H_{s_1})$.
Denote $M' = \mathcal{F'}(D, H_1,\dots,H_{s_1})$ and let $H$ be an arbitrary hash function.
By Corollary \ref{fct:dfsvmnsvmndsvngsdbbg},
\begin{equation}\label{eq: dddddddsss}
W(M',H) = W(\mathcal{F'}(D, H_1,\dots,H_{s_1}), H) =
\end{equation}
$$
W(T_{s_1}(W(\mathcal{F}(D), H_1,\dots,H_{s_1}), H) = T_{s_1}(W(M,H_1,\dots,H_{s_1},H)).
$$
Thus, and by the second assumption of the theorem, there exists
an algorithm $\mathfrak{A}_{s_1, s_1+1}$ that in one pass obtains a $(\log^k(n), \delta)$-approximation of $|W(M',H)|$
using memory less than or equal to $g(s_1 + 1)$.

Also, by Corollary \ref{fct:dfsgfffshhhhhdbbg} and by $(\ref{eq: dddddddsss})$:
\begin{equation}\label{eq: ddddddccccdsdfsdsss}
T_1(W(M',H)) = T_1(T_{s_1}(W(M,H_1,\dots,H_{s_1},H))) = T_{s_1+1}(W(M,H_1,\dots,H_{s_1},H)).
\end{equation}
By induction, there exists an algorithm that gives an $(\epsilon, \delta)$-approximation of
$|T_{s_1+1}(W(M,H_1,\dots,H_{s'},H))| = |T_1(W(M',H))|$ using memory $g(s_1+1)$.

$M'$ is implicitly defined by a fixed function $\mathcal{F'}(D, H_1,\dots, H_s)$.
By Fact \ref{fct: bounded }, its entries are polynomially bounded.
Thus, by $(\ref{eq: dddddddsss})$ and $(\ref{eq: ddddddccccdsdfsdsss})$,
all assumptions of Theorem \ref{tm: fff reduction} are satisfied for $M'$. 
Therefore, there exists an algorithm that gives an $\epsilon$-approximation of 
$|M'| = |T_{s_1}(W(M,H_1,\dots,H_{s_1}))|$ using memory $g(s_1)$.

In particular, there exists an algorithm that for any $H$ gives an $\epsilon$-approximation of $|T_1(W(M,H))|$ using $g(1)$. Also, by the second assumption of the theorem, there exists an algorithm that
gives a $\log^k(n)$-approximation of $|T_0(W(M,H))| = |W(M,H)|$.
Thus, we can apply Theorem \ref{tm: fff reduction} for $M$ and obtain an $\epsilon$-approximation of $|M|$. 
The resulting memory usage will be $O(\left({1\over\epsilon}\log{nm\over \delta}\right)^{(30+k)^k})$.
\end{proof}

\noindent
The following lemmas are proven in Section \ref{sec:indyk genralizing}.
\begin{lemma}\label{lm: epsilon for indep tensors}
There exists an algorithm $\mathfrak{B}_{k-1}$ that, given a data stream $D$ and an
access to hash functions $H_1,\dots,H_{k-1}$, in one pass obtains an $\epsilon$-approximation
of $|T_{k-1}(W(M_{Ind},H_1,\dots, H_{k-1}))|$  using memory $O({1\over \epsilon^2}\log{1\over \delta}\log{nm\over \epsilon\delta})$.
\end{lemma}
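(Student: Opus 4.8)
The plan is to unwind the operators $W$ and $T_{k-1}$ into an explicit length-$n$ vector, to observe that this vector is a simple \emph{linear} combination of two streamable non-negative vectors, and then to estimate its $L_1$ norm with the $1$-stable (Cauchy) sketch of Indyk \cite{stable}, using linearity of the sketch to cope with the fact that the two scalar coefficients become known only at the end of the pass.

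First I would compute the target vector. By Definition \ref{def:wider ddcccddddddffffffd}, for $\mathbf{j}\in[n]^{k-1}$ and $i\in[n]$ the entry of $M_{Ind}$ at $(\mathbf{j},i)$ is $m^{k} f_{(\mathbf{j},i)} - \big(\prod_{l=1}^{k-1} f_l(\mathbf{j}_l)\big)\,f_k(i)$. Applying $W(\,\cdot\,,H_1,\dots,H_{k-1})$ multiplies this entry by $\prod_{l=1}^{k-1} H_l(\mathbf{j}_l)$, and $T_{k-1}$ sums over $\mathbf{j}\in[n]^{k-1}$; since $\sum_{\mathbf{j}}\prod_{l=1}^{k-1} f_l(\mathbf{j}_l)H_l(\mathbf{j}_l) = \prod_{l=1}^{k-1}\sum_{t\in[n]} f_l(t)H_l(t)$ factors over coordinates, the $i$-th entry of $\mathbf{v} := T_{k-1}(W(M_{Ind},H_1,\dots,H_{k-1}))$ equals
\[
 v_i \;=\; m^{k}\, g_i \;-\; B\, f_k(i), \qquad g_i := \sum_{\mathbf{j}\in[n]^{k-1}} f_{(\mathbf{j},i)}\!\!\prod_{l=1}^{k-1}\! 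H_l(\mathbf{j}_l), \qquad B := \prod_{l=1}^{k-1}\sum_{t\in[n]} f_l(t)H_l(t).
\]
Here $\mathbf{g}=(g_1,\dots,g_n)$ counts, with the hash filter, the stream tuples by their last coordinate, and $\mathbf{f}_k=(f_k(1),\dots,f_k(n))$ is the $k$-th margin frequency vector; both receive only unit increments as tuples arrive. The scalars $m^{k}$ and $B=\prod_l A_l$ with $A_l=\sum_t f_l(t)H_l(t)$ are computed exactly using the counter $m$ and $k-1$ counters $A_l$ (each $\le m$), in $O(k\log nm)$ bits. Finally $|v_i|\le 2m^{k+1}$, so $\|\mathbf{v}\|_1 = poly(nm)$ (this also follows from Fact \ref{fct: bounded }).

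Next I would apply the standard $L_1$ sketch to $\mathbf{v}$. Fix $r=O(\tfrac1{\epsilon^2}\log\tfrac1\delta)$ and a matrix $S\in\reals^{r\times n}$ with i.i.d.\ standard Cauchy entries, generated pseudorandomly from a seed of $O(\log\tfrac{nm}{\epsilon\delta})$ bits so that $S$ need not be stored. In one pass, maintain $m$, the counters $A_1,\dots,A_{k-1}$, and the two linear sketches $S\mathbf{g}$ and $S\mathbf{f}_k$: on arrival of a tuple $\mathbf{i}=(\mathbf{i}_1,\dots,\mathbf{i}_k)$, add $\big(\prod_{l=1}^{k-1} H_l(\mathbf{i}_l)\big)\,S_{\cdot,\mathbf{i}_k}$ to $S\mathbf{g}$, add $S_{\cdot,\mathbf{i}_k}$ to $S\mathbf{f}_k$, increment each $A_l$ by $H_l(\mathbf{i}_l)$, and increment $m$. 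At the end, form $S\mathbf{v}=m^{k}(S\mathbf{g})-B(S\mathbf{f}_k)$ and output the median-of-absolute-coordinates estimator. Conditioned on the given $H_l$'s the vector $\mathbf{v}$ is a fixed vector with polynomially bounded entries and $S$ is fresh independent randomness, so the analysis of the Cauchy sketch (in the pseudorandom, polynomially-bounded form reproved in Section \ref{sec:indyk genralizing}, as used by the other $\mathfrak{B}$-type subroutines) yields a $(1\pm\epsilon)$-approximation of $\|\mathbf{v}\|_1 = |T_{k-1}(W(M_{Ind},H_1,\dots,H_{k-1}))|$ with probability at least $1-\delta$. The space is $O(r)$ counters of $O(\log\tfrac{nm}{\epsilon\delta})$ bits, plus $O(k)$ plain counters and the PRG seed, i.e.\ $O(\tfrac1{\epsilon^2}\log\tfrac1\delta\,\log\tfrac{nm}{\epsilon\delta})$ in total.

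The only step carrying real content is the algebraic collapse above: one must check that summing the Prefix-Zeroed independence tensor over its first $k-1$ coordinates factors the ``product'' part cleanly into $B\,\mathbf{f}_k$ and does not entangle the remaining index $i$, which is what turns the task into an ordinary linear $L_1$-estimation problem. After that the argument is exactly Indyk's sketch; the only wrinkle is that the coefficients $m^{k}$ and $B$ are revealed at the end and are fed in through linearity of the sketch, so — unlike the product-of-Cauchy situation responsible for the $\log n$ loss elsewhere — no delicate independence argument is needed. The routine ingredients (the pseudorandom generation of $S$ and the estimator's guarantee for polynomially large entries) are provided by Section \ref{sec:indyk genralizing} / \cite{stable}.
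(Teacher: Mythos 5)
Your proposal is correct and follows essentially the same route as the paper's proof: the crucial algebraic step (summing out the first $k-1$ coordinates so that $T_{k-1}(W(M_{Ind},H_1,\dots,H_{k-1}))_i = m^k g_i - B\,f_k(i)$ with $B=\prod_{l<k}\sum_t f_l(t)H_l(t)$) is exactly what the paper exploits via its variables $Joint$, $Margin_1,\dots,Margin_k$, and then the $1$-stable sketch of \cite{stable} is applied identically. Your phrasing in terms of a matrix $S$ and two linear sketches $S\mathbf{g}$, $S\mathbf{f}_k$ is just a repackaging of the paper's $r$ independent copies of the scalar sketch $m^k\,Joint-\prod_s Margin_s$.
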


\begin{lemma}\label{lm: polylog for indep tensors}
There exists an algorithm $\mathfrak{A}_{s_1, s_2}$ (for any $0\le s_2 \le s_1\le k$) that,
given a data stream $D$ and an access to hash functions $H_1,\dots,H_{s_1}$, in one pass
obtains a $\log^k{n}$-approximation of $|T_{s_2}(W(M_{Ind},H_1,\dots,H_{s_1}))|$ using memory $O(\log{(nm)}\log{1\over \delta})$.
\end{lemma}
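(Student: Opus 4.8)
The plan is to generalize the truncated Cauchy-product sketch of Indyk and McGregor \cite{mi} from $2$-dimensional statistical-distance matrices to the $(k-s_2)$-dimensional ``mixed'' tensors that arise from $M_{Ind}$ after applying Prefix-Zero and Suffix-Sum. First I would write out $T_{s_2}(W(M_{Ind},H_1,\dots,H_{s_1}))$ explicitly. Using the definitions of $\mathcal{F}_{Ind}$, $W$, $T_{s_2}$ together with the composition facts of Section \ref{sec:tensors}, one checks that this is the $r$-dimensional tensor ($r=k-s_2$) whose $\mathbf i$-th entry equals
$$\Big(\textstyle\prod_{l=1}^{s_1-s_2}H_{s_2+l}(\mathbf i_l)\Big)\Big(m^k\,\widehat f_{\mathbf i}-G\,\textstyle\prod_{l=1}^{r}f_{s_2+l}(\mathbf i_l)\Big),$$
where $\widehat f_{\mathbf i}=\sum_{\mathbf j\in[n]^{s_2}}f_{(\mathbf j,\mathbf i)}\prod_{q=1}^{s_2}H_q(\mathbf j_q)$ is the frequency of $\mathbf i$ in the sub-stream retained by $H_1,\dots,H_{s_2}$ and $G=\prod_{q=1}^{s_2}\big(\sum_{t}f_q(t)H_q(t)\big)$ is a scalar (for $r=0$ this is a single scalar, computed exactly in one pass). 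The purpose of this computation is to expose that every ingredient — the per-element quantity $\widehat f$, the scalar $G$, and the full margins $f_{s_2+l}(\cdot)$ — is linear, or a product of linear pieces, in the stream, hence maintainable in one pass.

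Second I would define the estimator. For each dimension $l\in[r]$ draw an array $(X^{(l)}_t)_{t\in[n]}$ of i.i.d.\ Cauchy variables, truncating the appropriate ones exactly as in \cite{mi} (the truncation level chosen so that the resulting product of non-independent Cauchys has a controlled upper tail), and form
$$S=\sum_{\mathbf i\in[n]^r}\Big(\textstyle\prod_{l=1}^{r}X^{(l)}_{\mathbf i_l}\Big)m'_{\mathbf i}.$$
Substituting the explicit form of $m'_{\mathbf i}$ and absorbing the $H$-factors into the $X$'s, $S$ splits as $m^k\sum_{\mathbf i}\big(\prod_l X^{(l)}_{\mathbf i_l}h_l(\mathbf i_l)\big)\widehat f_{\mathbf i}-G\prod_{l=1}^{r}\big(\sum_t X^{(l)}_t h_l(t)f_{s_2+l}(t)\big)$, where $h_l=H_{s_2+l}$ for $l\le s_1-s_2$ and $h_l\equiv1$ otherwise. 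The first sum is updated per stream element (adding $\prod_l X^{(l)}_{p_l}$ whenever an element's first $s_2$ coordinates survive $H_1,\dots,H_{s_2}$), and the second is a product of $r$ scalars each updated per stream element, so $S$ is computed in one pass in $O(\log(nm))$ bits, with the $X^{(l)}$ supplied by the limited-independence / pseudorandom construction used in \cite{stable,mi} within the same budget. The output is the median of $S_1,\dots,S_{O(\log(1/\delta))}$ from independent repetitions.

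Third I would carry out the accuracy analysis, which is the crux. Conditioning on all coordinate-arrays except $X^{(r)}$, the sketch becomes $\sum_t X^{(r)}_t\cdot(\text{partial contraction})_t$, which by $1$-stability is Cauchy-distributed with scale equal to the $L_1$ norm of that partial contraction; peeling the dimensions off one at a time and using stability at each step shows that, after the truncations, $|S|$ lies between $\Omega(|M'|/\log^{\,O(r)}n)$ and $O(|M'|\cdot\mathrm{polylog}\,n)$ with constant probability. Since $r\le k$, a median over $O(\log(1/\delta))$ repetitions then yields a $(\log^k n,\delta)$-approximation of $|T_{s_2}(W(M_{Ind},H_1,\dots,H_{s_1}))|$, as claimed. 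The main obstacle is precisely this step: reproducing, and stacking up to $k$ times, the delicate truncation argument that Indyk and McGregor singled out as their ``biggest technical challenge,'' now for a tensor that is not a pure independence tensor but the mixed object above — in particular both additive terms of $m'_{\mathbf i}$ must be tracked simultaneously, since they are contracted against the same arrays $X^{(l)}$.
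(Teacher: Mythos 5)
Your proposal follows essentially the same route as the paper: it is the product-of-(truncated)-Cauchy sketch $m^k\cdot Joint-\prod_j Margin_j$ of Lemma \ref{lm:product sketch} and Definition \ref{def:product sketch}, whose coefficient on each $C(\mathbf{i})$ is verified to equal the entry of $T_{s_2}(W(M_{Ind},H_1,\dots,H_{s_1}))$ exactly as in your first display, followed by a median over $O(\log(1/\delta))$ repetitions. The only place your description blurs the paper's argument is the accuracy step: the paper proves the lower bound by the recursive $1$-stability peeling you describe (Lemma \ref{lm:cuachy lower bound} and Corollary \ref{cor:lower bound cauchy k-wise}), but gets the $\log^{k-1}n$ upper bound not from stability but from $E|C_j|=O(\log n)$ for truncated Cauchy plus Markov (Lemma \ref{lm: product sketch is logk}), which is precisely where the truncation is used.
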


\begin{theorem}\label{tm: main theorem}\textbf{Main Theorem}
Let $k\ge 2$ be a constant, and let $D$ be a stream of $k$-tuples from $[n]^k$.
For any $0<\epsilon<1$, there exists an
algorithm that makes a single pass over $D$ and returns an $(\epsilon, \delta)$-approximation of the statistical distance between product and joint distribution (see Definition \ref{def: independence problem a} ) 
using memory $O(\left({1\over \epsilon}\log({nm\over \delta})\right)^{(30+k)^k})$.
\end{theorem}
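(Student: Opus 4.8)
The plan is to obtain the Main Theorem as a direct corollary of the machinery already assembled in this section. Concretely, I would apply the Approximation Theorem for Tensors (Theorem \ref{tm: approxima theorem}) to the statistical distance tensor $M = M_{Ind} = \mathcal{F}_{Ind}(D)$, which is a $k$-dimensional tensor whose entries $m^k f_{\mathbf{i}} - \prod_{l=1}^k f_l(\mathbf{i}_l)$ are manifestly bounded by $poly(n,m)$ (each $f_{\mathbf{i}}$ and each $f_l(t)$ is at most $m$, so each entry is at most $m^{k+1}$ in absolute value). The only thing to check is that the two hypotheses of Theorem \ref{tm: approxima theorem} — the existence of an $(\epsilon,\delta)$-approximation algorithm $\mathfrak{B}_{k-1}$ for $|T_{k-1}(W(M_{Ind},H_1,\dots,H_{k-1}))|$, and the existence of $(\log^k n,\delta)$-approximation algorithms $\mathfrak{A}_{s_1,s_2}$ for $|T_{s_2}(W(M_{Ind},H_1,\dots,H_{s_1}))|$ for all $0\le s_2\le s_1\le k$ — are met. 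But these are exactly the contents of Lemma \ref{lm: epsilon for indep tensors} and Lemma \ref{lm: polylog for indep tensors} respectively, and one checks that the memory bounds $O(\tfrac{1}{\epsilon^2}\log\tfrac1\delta\log\tfrac{nm}{\epsilon\delta})$ and $O(\log(nm)\log\tfrac1\delta)$ stated there are both $O\!\big((\tfrac1\epsilon\log\tfrac{nm}{\delta})^{20}\big)$ for the relevant regime, so hypothesis (3) of Theorem \ref{tm: approxima theorem} holds as well.

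Having verified the hypotheses, Theorem \ref{tm: approxima theorem} yields a one-pass algorithm producing an $(\epsilon,\delta)$-approximation of $|M_{Ind}|$ using memory $O\!\big((\tfrac1\epsilon\log\tfrac{nm}{\delta})^{(30+k)^k}\big)$. To finish, I would invoke the identity recorded right after Definition \ref{def:wider ddcccddddddffffffd}, namely $\Delta(P_{joint},P_{product}) = \tfrac{1}{m^k}|M_{Ind}|$. Since $m = |D|$ is a single counter maintained exactly in $O(\log m)$ bits during the pass, dividing the approximation of $|M_{Ind}|$ by $m^k$ converts an $(\epsilon,\delta)$-approximation of $|M_{Ind}|$ into an $(\epsilon,\delta)$-approximation of the statistical distance, with no change to the asymptotic memory. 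This gives precisely the claimed bound $O\!\big((\tfrac1\epsilon\log\tfrac{nm}{\delta})^{(30+k)^k}\big)$.

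The substantive work, of course, is not in this assembly step but in the three ingredients it rests on: the Dimension Reduction Theorem \ref{tm: fff reduction} (whose proof in turn depends on the Tournament algorithm of Theorem \ref{lm: the most importnat one tensor} and the Indyk–Woodruff-style Lemmas \ref{lm: estimating V} and \ref{lm: find epsilo     dfggf n}), and the two generalization lemmas \ref{lm: polylog for indep tensors} and \ref{lm: epsilon for indep tensors} extending Indyk's stable-distribution sketch and the Indyk–McGregor sketch to tensors obtained from $M_{Ind}$ by Prefix-Zero and Suffix-Sum. Within the proof of the Main Theorem itself, the only genuine obstacle is a bookkeeping one: confirming that the recursive composition in Theorem \ref{tm: approxima theorem} — which applies Theorem \ref{tm: fff reduction} roughly $k$ times, each time inflating memory by a $poly(\tfrac1\epsilon\log\tfrac{nm}{\delta})$ factor and tightening the precision parameter fed to the next level — collapses to the stated $(30+k)^k$ exponent, and that the polynomially-bounded-entries condition is preserved at each level (which is exactly what Fact \ref{fct: bounded } is for). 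Since $k$ is a constant this is a finite computation, and the exponent tower $(30+k)^k$ is deliberately chosen with enough slack to absorb it; as the authors note, the constants are not optimized.
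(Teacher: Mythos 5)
Your proposal is correct and matches the paper's proof of the Main Theorem essentially line for line: both invoke Lemmas \ref{lm: epsilon for indep tensors} and \ref{lm: polylog for indep tensors} to establish the algorithmic hypotheses of Theorem \ref{tm: approxima theorem}, use Fact \ref{fct: bounded } for the polynomial-boundedness hypothesis, and then apply Theorem \ref{tm: approxima theorem} to $M_{Ind}$. You additionally spell out the $\Delta(P_{joint},P_{product})=\frac{1}{m^k}|M_{Ind}|$ normalization and the memory-bound check against hypothesis (3), which the paper leaves implicit, but this is the same argument.
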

\begin{proof}

By Lemma \ref{lm: epsilon for indep tensors} and Lemma \ref{lm: polylog for indep tensors},
the algorithms required by Theorem \ref{tm: approxima theorem} exist for $M_{Ind}$.
Also, by Fact \ref{fct: bounded }, the entries of $M_{Ind}$ are polynomially bounded.
Thus all assumptions of Theorem \ref{tm: approxima theorem} are true for $M_{Ind}$.
Applying Theorem \ref{tm: approxima theorem} to $M_{Ind}$, we obtain the main result.
\end{proof}


\section{Properties of Tensors}\label{sec:tensors}

We prove the following useful facts about Suffix-Sum and Prefix-Zero operations.
\begin{fact}\label{fct:dfsgsdbbg2}
Let $M$ be a $t$-dimensional tensor and $0\le s\le t$. Then $$W(T_{s}(M),H) =  T_{s}(W(M,H_1=\one,\dots,H_s = \one,H)).$$
\end{fact}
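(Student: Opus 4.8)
The statement to prove is Fact~\ref{fct:dfsgsdbbg2}: for a $t$-dimensional tensor $M$ and $0\le s\le t$,
$$
W(T_{s}(M),H) =  T_{s}(W(M,H_1=\one,\dots,H_s = \one,H)).
$$
The plan is to simply unwind both sides via their coordinate-wise definitions (Definitions~\ref{def:rollup tensor deded} and~\ref{def:wider dd}) and check that the resulting scalar expressions agree entrywise; there is no probabilistic or analytic content here, only careful bookkeeping of which coordinate each hash function is applied to. Both sides are $(t-s)$-dimensional tensors, so it suffices to fix an arbitrary index ${\mathbf{i}}\in[n]^{t-s}$ and compare the ${\mathbf{i}}$-th entries.

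First I would expand the left-hand side. By Definition~\ref{def:rollup tensor deded}, $T_s(M)$ has entries $m'_{\mathbf{i}} = \sum_{{\mathbf{j}}\in[n]^s} m_{({\mathbf{j}},{\mathbf{i}})}$ for ${\mathbf{i}}\in[n]^{t-s}$. Then applying the Prefix-Zero operator $W(\cdot,H)$ with the single hash function $H$ multiplies the ${\mathbf{i}}$-th entry by $H({\mathbf{i}}_1)$ (the first coordinate of ${\mathbf{i}}$), giving
$$
\big(W(T_s(M),H)\big)_{\mathbf{i}} = H({\mathbf{i}}_1)\sum_{{\mathbf{j}}\in[n]^s} m_{({\mathbf{j}},{\mathbf{i}})}.
$$
Next I would expand the right-hand side. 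The inner tensor $W(M,\one,\dots,\one,H)$ is $t$-dimensional with entries $m_{\mathbf{l}}\cdot\prod_{r=1}^{s}\one({\mathbf{l}}_r)\cdot H({\mathbf{l}}_{s+1}) = m_{\mathbf{l}}\,H({\mathbf{l}}_{s+1})$, since $\one$ is identically $1$ on $[n]$; here the $(s+1)$-st coordinate is the one $H$ acts on because $H$ occupies the $(s+1)$-st slot in the list of hash functions. Now applying $T_s$ sums over the first $s$ coordinates: writing ${\mathbf{l}} = ({\mathbf{j}},{\mathbf{i}})$ with ${\mathbf{j}}\in[n]^s$ and ${\mathbf{i}}\in[n]^{t-s}$, the coordinate ${\mathbf{l}}_{s+1}$ equals ${\mathbf{i}}_1$, so
$$
\big(T_s(W(M,\one,\dots,\one,H))\big)_{\mathbf{i}} = \sum_{{\mathbf{j}}\in[n]^s} m_{({\mathbf{j}},{\mathbf{i}})}\,H({\mathbf{i}}_1) = H({\mathbf{i}}_1)\sum_{{\mathbf{j}}\in[n]^s} m_{({\mathbf{j}},{\mathbf{i}})},
$$
which matches the left-hand side. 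Since ${\mathbf{i}}$ was arbitrary, the two tensors are equal.

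The only real subtlety — and the one place I would be careful — is the index arithmetic: verifying that after forming the $t$-dimensional tensor $W(M,\one,\dots,\one,H)$ and then contracting its first $s$ coordinates via $T_s$, the surviving first coordinate of the result (on which the outer $W$ of the left side acts via $H$) is exactly the coordinate that $H$ was applied to on the right side. This hinges on the convention, fixed in Definitions~\ref{def:rollup tensor deded} and~\ref{def: hyperplaned hypercubcccccccce}, that Suffix-Sum removes the \emph{prefix} coordinates and that the $j$-th hash function in a Prefix-Zero list acts on the $j$-th coordinate; once those conventions are pinned down the identity is immediate. I expect this Fact is used precisely to interchange Suffix-Sum with Prefix-Zero (padding with identity hash functions) in the recursive argument of Theorem~\ref{tm: approxima theorem}.
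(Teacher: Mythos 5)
Your proposal is correct and follows the same route as the paper: expand both sides entrywise using Definitions~\ref{def:rollup tensor deded} and~\ref{def:wider dd}, and observe that both reduce to $H({\mathbf{i}}_1)\sum_{{\mathbf{j}}\in[n]^s}m_{({\mathbf{j}},{\mathbf{i}})}$. Your added care about which coordinate $H$ acts on (the $(s+1)$-st slot of the $t$-dimensional tensor becoming the first slot of the $(t-s)$-dimensional result) is the one detail the paper leaves implicit, and you handle it correctly.
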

\begin{proof}
Denote by $m_{\mathbf{w}}$ (for $\mathbf{w}\in [n]^t$) the $\mathbf{w}$-th entry of $M$.
For any $\mathbf{i}\in [n]^{t-s}$, denote by $a_{\mathbf{i}}$ the entry of $T_s(M)$. By Definition \ref{def:rollup tensor deded}:
$$
a_i = \sum_{\mathbf{j}\in [n]^s} m_{(\mathbf{j},\mathbf{i})}.
$$
Denote by $b_{\mathbf{i}}$ the entry of $W(T_s(M), H)$. By Definitions \ref{def:rollup tensor deded} and \ref{def:wider dd}:
$$
b_{\mathbf{i}} = H(\mathbf{i}_1)a_{\mathbf{i}} = \sum_{\mathbf{j}\in [n]^s} m_{(\mathbf{j},\mathbf{i})} H(\mathbf{i}_1).
$$
Denote by $c_{\mathbf{i}}$ the $\mathbf{i}$-th entry of $T_{s}(W(M,H_1=\one,\dots,H_s = \one,H))$. By Definitions \ref{def:rollup tensor deded} and \ref{def:wider dd}:
$$
c_{\mathbf{i}} = \sum_{\mathbf{j}\in [n]^s} m_{(\mathbf{j},\mathbf{i})} H(\mathbf{i}_1).
$$
Thus, for any ${\mathbf{i}}$, $b_{\mathbf{i}} = c_{\mathbf{i}}$ and the fact is correct.
\end{proof}

\begin{fact}\label{fct:dfsgvbdbsdb1bg}
Let $M$ be a $t$-dimensional tensor and let $0\le s<t$.
Then $T_1(T_{s}(M)) = T_{s+1}(M)$.
\end{fact}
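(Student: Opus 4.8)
The plan is to verify the identity entrywise, exactly as in the proof of Fact \ref{fct:dfsgsdbbg2}. Let $M$ be a $t$-dimensional tensor with entries $m_{\mathbf{w}}$ for $\mathbf{w}\in[n]^t$. Since $s<t$, the tensor $T_s(M)$ is $(t-s)$-dimensional with $t-s\ge 1$, so the outer operator $T_1$ is applicable, and both sides are $(t-s-1)$-dimensional tensors; in particular the dimensions match. Fix an arbitrary index $\mathbf{i}\in[n]^{t-s-1}$ and compute the $\mathbf{i}$-th entry of each side.

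First I would expand the left-hand side. By Definition \ref{def:rollup tensor deded}, write $T_s(M)$ as a $(t-s)$-dimensional tensor with entries $a_{\mathbf{u}} = \sum_{\mathbf{j}\in[n]^s} m_{(\mathbf{j},\mathbf{u})}$ for $\mathbf{u}\in[n]^{t-s}$. Applying $T_1$ and Definition \ref{def:rollup tensor deded} once more, the $\mathbf{i}$-th entry of $T_1(T_s(M))$ is $\sum_{j'\in[n]} a_{(j',\mathbf{i})} = \sum_{j'\in[n]}\sum_{\mathbf{j}\in[n]^s} m_{(\mathbf{j},j',\mathbf{i})}$. On the right-hand side, by Definition \ref{def:rollup tensor deded} the $\mathbf{i}$-th entry of $T_{s+1}(M)$ is $\sum_{\mathbf{j}'\in[n]^{s+1}} m_{(\mathbf{j}',\mathbf{i})}$. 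The concatenation map $(,)$ of Definition \ref{def: hyperplaned hypercubcccccccce} is associative on coordinates, so the double sum over $(\mathbf{j},j')\in[n]^s\times[n]$ is literally a single sum over $\mathbf{j}'\in[n]^{s+1}$ of the same summands $m_{(\mathbf{j}',\mathbf{i})}$; hence the two entries agree. Since $\mathbf{i}$ was arbitrary, the tensors are equal.

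There is essentially no obstacle here — the only thing to be careful about is the bookkeeping of which coordinate block gets summed out in which order: $T_1$ after $T_s$ sums out the ``new first coordinate'' of $T_s(M)$, which is coordinate $s+1$ of the original $M$, while the $T_s$ step already summed out coordinates $1,\dots,s$; together this is exactly summing out the $(s+1)$-prefix, which is $T_{s+1}$. I would also note in passing that the identity is the associativity/telescoping property of Suffix-Sum (analogous to $\sum_{a}\sum_{b} = \sum_{(a,b)}$), and that iterating it gives $T_r(T_s(M)) = T_{r+s}(M)$ whenever $r+s\le t$, which is presumably what later sections invoke (e.g.\ in equation (\ref{eq: ddddddccccdsdfsdsss})).
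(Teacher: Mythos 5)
Your proof is correct and is essentially identical to the paper's own argument: both verify the identity entrywise, expand the double sum using Definition \ref{def:rollup tensor deded} twice, and collapse the inner sum over $[n]^s\times[n]$ to a single sum over $[n]^{s+1}$ by associativity of the concatenation map. The closing remark about the telescoping generalization $T_r(T_s(M))=T_{r+s}(M)$ is a harmless extra observation that the paper obtains by iterating this fact.
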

\begin{proof}
Denote by $m_{\mathbf{w}}$ (for $\mathbf{w}\in [n]^t$) the $\mathbf{w}$-th entry of $M$.
For $\mathbf{j} \in [n]^{t-s}$ denote $b_{\mathbf{j}}$ to be an entry of $T_s(M)$.
By Definition \ref{def:rollup tensor deded}:
$$
b_{\mathbf{j}} = \sum_{\mathbf{u}\in [n]^{s}} m_{(\mathbf{u},\mathbf{j})}.
$$
For every $\mathbf{i}\in [n]^{t-s-1}$, denote by $c_{\mathbf{i}}$ the entry of $T_1(T_s(M))$.
By Definition \ref{def:rollup tensor deded}:
$$
c_{\mathbf{i}} = \sum_{l\in [n]} b_{(l, \mathbf{i})} = \sum_{l\in [n]} \sum_{\mathbf{u}\in [n]^{s}} m_{(\mathbf{u},(l,\mathbf{i}))} = \sum_{l\in [n]} \sum_{\mathbf{u}\in [n]^{s}} m_{((\mathbf{u},l),\mathbf{i})} = \sum_{\mathbf{v}\in [n]^{s+1}} m_{(\mathbf{v},\mathbf{i})}.
$$
For any $\mathbf{i} \in [n]^{t-s-1}$ denote by $a_{\mathbf{i}}$ the entry of $T_{s+1}(M)$. By Definition \ref{def:rollup tensor deded}:
$$
a_{\mathbf{i}} = \sum_{\mathbf{v}\in [n]^{s+1}} m_{(\mathbf{v},\mathbf{i})}.
$$
Thus, for any $\mathbf{i}$, $a_{\mathbf{i}} = c_{\mathbf{i}}$ and the fact is correct.
\end{proof}

\begin{fact}\label{fct:dfsgfffshhhhhdbbfffffff}
Let $M$ be a $t$-dimensional tensor, let $s\le t$ and let $H_1, \dots, H_s$ and $G_1,\dots, G_s$ be hush functions.
Then
$$
W(M, H_1G_1,\dots, H_sG_s) = W(W(M, H_1,\dots, H_s), G_1,\dots,G_s))
$$
\end{fact}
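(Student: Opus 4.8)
The plan is to prove this identity entrywise, directly from Definition \ref{def:wider dd}, in exactly the same style as Facts \ref{fct:dfsgsdbbg2} and \ref{fct:dfsgvbdbsdb1bg}. Both sides are $t$-dimensional tensors whose entries are indexed by $\mathbf{i}\in[n]^t$, and the Prefix-Zero operator never moves an entry, it only rescales it; so it suffices to check that for each fixed $\mathbf{i}$ the two $\mathbf{i}$-th entries agree.

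First I would unwind the left-hand side. Writing $m_{\mathbf{i}}$ for the $\mathbf{i}$-th entry of $M$, Definition \ref{def:wider dd} gives that the $\mathbf{i}$-th entry of $W(M, H_1G_1,\dots,H_sG_s)$ is $m_{\mathbf{i}}\prod_{l=1}^s (H_lG_l)(\mathbf{i}_l)$. Since $H_lG_l$ denotes the pointwise product of the two hash functions, $(H_lG_l)(\mathbf{i}_l)=H_l(\mathbf{i}_l)G_l(\mathbf{i}_l)$, so this entry equals $m_{\mathbf{i}}\prod_{l=1}^s H_l(\mathbf{i}_l)G_l(\mathbf{i}_l)$. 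Next I would unwind the right-hand side in two applications of the operator: set $M'=W(M,H_1,\dots,H_s)$, whose $\mathbf{i}$-th entry is $m'_{\mathbf{i}}=m_{\mathbf{i}}\prod_{l=1}^s H_l(\mathbf{i}_l)$ by Definition \ref{def:wider dd}; applying the operator once more, the $\mathbf{i}$-th entry of $W(M',G_1,\dots,G_s)$ is $m'_{\mathbf{i}}\prod_{l=1}^s G_l(\mathbf{i}_l)=m_{\mathbf{i}}\prod_{l=1}^s H_l(\mathbf{i}_l)\prod_{l=1}^s G_l(\mathbf{i}_l)$. By commutativity of multiplication in $\mathbb{R}$ the two products over $l\in[s]$ can be merged term by term into $\prod_{l=1}^s H_l(\mathbf{i}_l)G_l(\mathbf{i}_l)$, which is exactly the expression obtained for the left-hand side. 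Hence the entries coincide for every $\mathbf{i}$, and the two tensors are equal.

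There is essentially no obstacle here: the statement is a bookkeeping identity about Prefix-Zero, and the only point requiring a little care is that in all three expressions the factors $H_l$ and $G_l$ act on the \emph{same} coordinate $\mathbf{i}_l$ and the entry's position is unchanged, so the merging step is legitimate. I expect the write-up to be a few lines, matching the length of the neighboring facts in Section \ref{sec:tensors}.
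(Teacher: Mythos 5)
Your proof is correct, and in fact the paper states Fact \ref{fct:dfsgfffshhhhhdbbfffffff} without any proof at all, evidently regarding it as immediate; your entrywise expansion from Definition \ref{def:wider dd}, using that $(H_lG_l)(\mathbf{i}_l)=H_l(\mathbf{i}_l)G_l(\mathbf{i}_l)$ and merging the two products by commutativity, is exactly the argument the authors would have written, in the same style as the proofs of the neighboring Facts \ref{fct:dfsgsdbbg2} and \ref{fct:dfsgvbdbsdb1bg}.
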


\begin{corollary}\label{fct:dfsvmnsvmndsvngsdbbg}
Let $M$ be a $t$-dimensional tensor and let $0\le s<t$.
Let $M' = T_{s}(W(M,H_1,\dots,H_{s}))$. Then
$$W(M',H) =  T_{s}(W(M,H_1,\dots,H_{s},H)).$$
\end{corollary}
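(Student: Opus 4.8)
The plan is to obtain this as a direct consequence of Fact~\ref{fct:dfsgsdbbg2}, using Fact~\ref{fct:dfsgfffshhhhhdbbfffffff} only to reconcile the number of hash-function arguments that appear on the two sides.

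First I would introduce the abbreviation $N = W(M,H_1,\dots,H_s)$, which is again a $t$-dimensional tensor, so that by Definition~\ref{def:rollup tensor deded} we have $M' = T_s(N)$. Since $0\le s<t$, Fact~\ref{fct:dfsgsdbbg2} applies to $N$ and yields
$$
W(M',H) \;=\; W\bigl(T_s(N),H\bigr) \;=\; T_s\bigl(W(N,\underbrace{\one,\dots,\one}_{s},H)\bigr).
$$
So it only remains to identify the inner tensor $W(N,\one,\dots,\one,H) = W\bigl(W(M,H_1,\dots,H_s),\one,\dots,\one,H\bigr)$ with $W(M,H_1,\dots,H_s,H)$.

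For this last step I would invoke Fact~\ref{fct:dfsgfffshhhhhdbbfffffff} with the $s+1$ pairs of hash functions $(H_1,\one),\dots,(H_s,\one),(\one,H)$: its left-hand side is $W(M,H_1\one,\dots,H_s\one,\one H) = W(M,H_1,\dots,H_s,H)$, while its right-hand side is $W\bigl(W(M,H_1,\dots,H_s,\one),\one,\dots,\one,H\bigr)$, and appending $\one$ as the $(s+1)$-st hash function to $M$ changes nothing on that coordinate, so this equals $W(N,\one,\dots,\one,H)$. Substituting back into the display gives $W(M',H) = T_s\bigl(W(M,H_1,\dots,H_s,H)\bigr)$, which is exactly the claim. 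I do not expect a genuine obstacle here: the argument is pure bookkeeping, and the only mildly delicate point is keeping track of how many hash arguments each occurrence of $W$ carries when one passes between $M$ and the already-masked tensor $N$ --- which is precisely what the harmless identity-padding takes care of. If one prefers to bypass Fact~\ref{fct:dfsgfffshhhhhdbbfffffff}, the same conclusion follows by an entrywise check in the style of the proofs of Facts~\ref{fct:dfsgsdbbg2} and~\ref{fct:dfsgvbdbsdb1bg}: for every ${\mathbf{i}}\in[n]^{t-s}$ the ${\mathbf{i}}$-th entry of each side equals $H({\mathbf{i}}_1)\sum_{{\mathbf{j}}\in[n]^s}\bigl(\prod_{l=1}^s H_l({\mathbf{j}}_l)\bigr)m_{({\mathbf{j}},{\mathbf{i}})}$.
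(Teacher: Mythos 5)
Your proof is correct and follows essentially the same route as the paper: first apply Fact~\ref{fct:dfsgsdbbg2} to $N = W(M,H_1,\dots,H_s)$ (the paper's $M''$), then use Fact~\ref{fct:dfsgfffshhhhhdbbfffffff} with identity-function padding to rewrite $W(N,\one,\dots,\one,H)$ as $W(M,H_1,\dots,H_s,H)$. The extra remark that one could also verify the identity entrywise is a harmless bonus but adds nothing the paper's argument lacks.
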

\begin{proof}
Denote $M'' = W(M,H_1,\dots,H_{s})$. Then by Fact \ref{fct:dfsgsdbbg2}:
$$
W(M',H) = W(T_{s}(M''),H) =  T_{s}(W(M'',G_1 = \one,\dots,G_k = \one,H)).
$$
Also by Fact \ref{fct:dfsgfffshhhhhdbbfffffff}:
$$
W(M'',G_1,\dots,G_s,H) = W(W(M,H_1,\dots,H_{s}, \one),G_1,\dots,G_s,H) = W(M,H_1,\dots,H_{s},H).
$$
\end{proof}

\begin{corollary}\label{fct:dfsgfffshhhhhdbbg}
Let $M$ be a $t$-dimensional tensor and let $0\le s<t$.
Let $M' = T_{s}(W(M,H_1,\dots,H_{s}))$. Then
$$T_1(M',H)) =  T_{s+1}(W(M,H_1,\dots,H_{s},H)).$$
\end{corollary}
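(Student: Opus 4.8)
The plan is to prove Corollary \ref{fct:dfsgfffshhhhhdbbg} by reducing it to the two facts already established about how $T_1$, $T_s$, and $W$ interact. First I would set $M'' = W(M,H_1,\dots,H_s)$, so that $M' = T_s(M'')$, and observe that we want to show $T_1(T_s(M'')) = T_{s+1}(W(M,H_1,\dots,H_s,H))$ after the appropriate $W(\cdot,H)$ is applied. The natural route is: apply Corollary \ref{fct:dfsvmnsvmndsvngsdbbg} to rewrite $W(M',H)$ as $T_s(W(M,H_1,\dots,H_s,H))$, and then apply Fact \ref{fct:dfsgvbdbsdb1bg} (which says $T_1(T_s(N)) = T_{s+1}(N)$ for any tensor $N$) with $N = W(M,H_1,\dots,H_s,H)$ to collapse $T_1 \circ T_s$ into $T_{s+1}$.

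Carrying this out, I would write the chain of equalities
$$
T_1(W(M',H)) = T_1\bigl(T_s(W(M,H_1,\dots,H_s,H))\bigr) = T_{s+1}(W(M,H_1,\dots,H_s,H)),
$$
where the first equality is Corollary \ref{fct:dfsvmnsvmndsvngsdbbg} and the second is Fact \ref{fct:dfsgvbdbsdb1bg}. The one point requiring a moment's care is the dimension bookkeeping: $M$ is $t$-dimensional, $W(M,H_1,\dots,H_s,H)$ is still $t$-dimensional (Prefix-Zero preserves dimension), and since $s < t$ we have $s+1 \le t$, so $T_{s+1}$ and the intermediate $T_s$ followed by $T_1$ are both well-defined; this is exactly the hypothesis $0 \le s < t$ in the statement. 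I should also note that the notation $T_1(M',H))$ in the displayed claim is shorthand for $T_1(W(M',H))$, matching the phrasing of Corollary \ref{fct:dfsvmnsvmndsvngsdbbg}.

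I do not anticipate any real obstacle here: this corollary is a purely formal consequence of the commutation identities proved immediately above it, and the whole proof is three lines of substitution. If anything, the only thing to watch is making sure the hypotheses of the cited results line up — Corollary \ref{fct:dfsvmnsvmndsvngsdbbg} requires $0 \le s < t$ (satisfied) and Fact \ref{fct:dfsgvbdbsdb1bg} requires its tensor argument to have dimension strictly greater than $s$, which holds because $W(M,H_1,\dots,H_s,H)$ has dimension $t > s$. So the proof is simply: invoke Corollary \ref{fct:dfsvmnsvmndsvngsdbbg}, then invoke Fact \ref{fct:dfsgvbdbsdb1bg}, and conclude.
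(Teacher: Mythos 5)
Your proof is correct and is essentially identical to the paper's: both derive the identity by combining Corollary \ref{fct:dfsvmnsvmndsvngsdbbg} (to rewrite $W(M',H)$ as $T_s(W(M,H_1,\dots,H_s,H))$) with Fact \ref{fct:dfsgvbdbsdb1bg} (to collapse $T_1 \circ T_s$ into $T_{s+1}$), merely reading the chain of equalities in the opposite order. You also correctly note that the displayed notation $T_1(M',H))$ is a typo for $T_1(W(M',H))$, and your dimension check ($s+1 \le t$) matches what the paper implicitly relies on.
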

\begin{proof}
By Fact \ref{fct:dfsgvbdbsdb1bg} and Corollary \ref{fct:dfsvmnsvmndsvngsdbbg}:
$$T_{s+1}(W(M,H_1,\dots,H_{s},H)) = T_1(T_{s}(W(M,H_1,\dots,H_{s},H))) = T_1(W(M', H)).$$
\end{proof}

\begin{fact}\label{lm: tensor significant fff}
Let $M$ be an arbitrary $s$-dimensional tensor, let $M_l$ be $(1-\epsilon/2)$-significant hyperplane of $M$, $M_l = Hyperplane(M,l)$, and let
$M' = T_1(M)$. Then
$|M'|$ is an $\epsilon$-approximation of $|M_l|$.
\end{fact}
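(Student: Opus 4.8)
The plan is to decompose each entry of $M' = T_1(M)$ into the contribution of the significant hyperplane $M_l$ and the contribution of everything else, and then control the ``everything else'' part using the significance hypothesis. Write $\mathbf{i} \in [n]^{s-1}$ for a generic index, so that, by Definition \ref{def:rollup tensor deded}, the $\mathbf{i}$-th entry of $M'$ is $m'_{\mathbf{i}} = \sum_{j\in[n]} m_{(j,\mathbf{i})} = m_{(l,\mathbf{i})} + r_{\mathbf{i}}$ where $r_{\mathbf{i}} = \sum_{j\ne l} m_{(j,\mathbf{i})}$. Let $R = \sum_{\mathbf{i}}\sum_{j\ne l} |m_{(j,\mathbf{i})}| = |M| - |M_l|$ denote the total mass of $M$ lying outside the $l$-th hyperplane; the last equality is just the fact that summing $|Hyperplane(M,j)|$ over all $j$ gives $|M|$.

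First I would record the two elementary consequences of the triangle inequality, applied entrywise in $\mathbf{i}$ and then summed. On the one hand $|m'_{\mathbf{i}}| \le |m_{(l,\mathbf{i})}| + |r_{\mathbf{i}}| \le |m_{(l,\mathbf{i})}| + \sum_{j\ne l}|m_{(j,\mathbf{i})}|$; on the other hand, by the reverse triangle inequality (and using $|m'_{\mathbf{i}}| \ge 0$ when the right-hand side is negative), $|m'_{\mathbf{i}}| \ge |m_{(l,\mathbf{i})}| - \sum_{j\ne l}|m_{(j,\mathbf{i})}|$. Summing over $\mathbf{i}\in[n]^{s-1}$ gives the two-sided bound $|M_l| - R \le |M'| \le |M_l| + R$.

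Next I would convert the significance hypothesis into the bound $R \le \epsilon|M_l|$. Since $M_l$ is $(1-\epsilon/2)$-significant we have $|M_l| \ge (1-\epsilon/2)|M|$, hence $R = |M| - |M_l| \le (\epsilon/2)|M| \le \frac{\epsilon/2}{1-\epsilon/2}|M_l|$, and $\frac{\epsilon/2}{1-\epsilon/2} \le \epsilon$ holds for every $\epsilon \le 1$ (equivalently $1 \le 2-\epsilon$). Plugging $R \le \epsilon|M_l|$ into the two-sided estimate yields $(1-\epsilon)|M_l| \le |M'| \le (1+\epsilon)|M_l|$, which is precisely the assertion that $|M'|$ is an $\epsilon$-approximation of $|M_l|$.

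There is no genuine obstacle here; the statement is a triangle-inequality estimate and the only point requiring a little care is the passage from ``$R$ is a small fraction of $|M|$'' to ``$R$ is a small fraction of $|M_l|$,'' since the notion of significance is phrased relative to $|M|$ rather than to $|M_l|$. Using the slack constant $\epsilon/2$ (rather than $\epsilon$) in the definition of $(1-\epsilon/2)$-significance is exactly what makes this conversion valid over the whole range $0<\epsilon\le 1$, so I would make sure to flag that dependence explicitly.
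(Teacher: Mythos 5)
Your proof is correct and follows essentially the same route as the paper's: the same entrywise decomposition of $T_1(M)$ into the $l$-th hyperplane plus the rest, the same forward/reverse triangle-inequality bounds giving $|M_l|-R\le |M'|\le |M_l|+R$ with $R=|M|-|M_l|$, and the same conversion of $(1-\epsilon/2)$-significance into $R\le\epsilon|M_l|$ valid for $\epsilon\le 1$. Naming $R$ explicitly is a mild cosmetic difference, not a different argument.
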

\begin{proof}
We have
$$
|M'| = \sum_{{\mathbf{i}} \in [n]^{s-1}}|\sum_{j\in [n]} m_{(j,{\mathbf{i}})}| \le \sum_{{\mathbf{i}} \in [n]^{s-1}}\sum_{j\in [n]} |m_{(j,{\mathbf{i}})}| = |M| \le {1\over 1-\epsilon/2}|M_l|\le (1+\epsilon)|M_l|.
$$
On the other hand,
$$
|M'| = \sum_{{\mathbf{i}} \in [n]^{s-1}}|\sum_{j\in [n]} m_{(j,{\mathbf{i}})}| \ge
\sum_{{\mathbf{i}} \in [n]^{s-1}} (|m_{(l,{\mathbf{i}})}| - \sum_{j\in [n], j\neq l}|m_{(j,{\mathbf{i}})}|) =  |M_l| - (|M| - |M_l|) =
$$
$$
\ge (2 - {1\over 1-\epsilon/2})|M_l| \ge (1-\epsilon)|M_l|.
$$
\end{proof}

\begin{fact}\label{fct: bounded } The following is correct:

\begin{enumerate}
\item Let $M$ be a $s$-dimensional tensor with polynomially bounded (in $n$ and $m$) entries for $s\le k$.
Let $M'$ be a tensor obtained from $M$ by an arbitrary composition of Prefix-Zero, AbsoluteVector, Hyperplane and Suffix-Sum operators.
Then the entries of $M'$ are polynomially bounded.
\item All entries of $M_{Ind}$ are integers with absolute values bounded by $2m^k$ and thus claim $1$ is true for $M_{Ind}$.
\end{enumerate}
\end{fact}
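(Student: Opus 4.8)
The plan is to prove the two claims in turn: claim~1 by combining a one-step entrywise bound for each of the four operators with a counting argument that limits how many \emph{amplifying} operators a composition can contain, and then deriving claim~2 from claim~1 via a direct estimate on the entries of $M_{Ind}$.

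First I would record how each operator acts on an entrywise bound. Fix a $d$-dimensional tensor $N$ all of whose entries are bounded in absolute value by some $B$. By Definition~\ref{def:wider dd}, an entry of a Prefix-Zero tensor $W(N,H_1,\dots,H_t)$ is an entry of $N$ times a product of values in $\{0,1\}$, so its entries stay bounded by $B$; by Definition~\ref{def: hyperplaned tensor}, $Hyperplane(N,l)$ merely selects a sub-array of $N$, so its entries are again bounded by $B$. By Definition~\ref{def:rollup tensor deded}, an entry of $T_t(N)$ is a sum of $n^t$ entries of $N$, hence bounded by $n^tB\le n^kB$ (using $t\le d\le k$); and by Definition~\ref{def:rollup tensfsssssor deded}, an entry of $AbsoluteVector(N)$ equals $|Hyperplane(N,l)|=\sum_{\mathbf{i}\in[n]^{d-1}}|n_{(l,\mathbf{i})}|$, a sum of $n^{d-1}$ absolute values, hence bounded by $n^{d-1}B\le n^{k-1}B$. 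Thus the only operators that can enlarge the bound are $T_t$ with $t\ge 1$ and $AbsoluteVector$ applied to a tensor of dimension at least $2$, and each of these enlarges it by at most a factor $n^k$.

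The key point is that in any valid composition the number of such amplifying steps is at most $k$. Indeed, every one of the four operators either preserves the dimension ($W$ and $T_0$) or strictly decreases it ($Hyperplane$ by $1$, $T_t$ by $t$, $AbsoluteVector$ down to dimension $1$), so the dimension is non-increasing along a composition; moreover each amplifying step strictly decreases it by at least $1$, while the dimension starts at $s\le k$ and stays nonnegative. Every non-amplifying step leaves the entrywise bound unchanged, so the entries of $M'$ are bounded by $n^{k^2}$ times the bound on the entries of $M$; since $k$ is a constant this is again a polynomial in $n$ and $m$, which is claim~1. The one place that needs care --- and the only real obstacle --- is exactly this counting step: a composition can be arbitrarily long (for instance, with many interleaved Prefix-Zero applications), and it is the monotonicity of the dimension under all four operators that keeps the amplification under control.

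For claim~2, recall from Definition~\ref{def: independence problem a} that $m=|D|$ and that every $f_{\mathbf{i}}$ and every $f_l(t)=\sum_{\mathbf{i}:\,\mathbf{i}_l=t}f_{\mathbf{i}}$ is a nonnegative integer with $0\le f_{\mathbf{i}}\le m$ and $0\le f_l(t)\le m$. A direct estimate from Definition~\ref{def:wider ddcccddddddffffffd} using these bounds then shows that each entry of $M_{Ind}$ is an integer of absolute value at most $2m^k$, as it is a difference of two nonnegative integers each at most $m^k$. Since $2m^k$ is a polynomial in $n$ and $m$, claim~1 applies to $M_{Ind}$, completing the argument.
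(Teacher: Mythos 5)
Your proof is correct, and for claim~2 it matches the paper's argument exactly (bound $f_{\mathbf{i}},f_l(t)\le m$, then $|m^kf_{\mathbf{i}}-\prod_l f_l(\mathbf{i}_l)|\le 2m^k$). For claim~1 you take a genuinely different route. The paper's proof is a one-line observation: every entry of $M'$ is a sum of (absolute values of) entries of $M$ over disjoint index sets (Prefix-Zero and Hyperplane select, Suffix-Sum and AbsoluteVector aggregate over disjoint blocks), so each entry of $M'$ is bounded by the total number of entries of $M$, which is at most $n^k$, times the entrywise bound on $M$; this yields a single $n^k$ amplification factor independent of how long the composition is. Your argument instead tracks a per-operator amplification factor and then counts the amplifying steps by a monotonicity-of-dimension argument. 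That is a valid and self-contained alternative, but it yields the cruder bound $n^{k^2}\cdot poly(n,m)$ rather than $n^k\cdot poly(n,m)$; since $k$ is a constant this is still polynomial and hence suffices. One small refinement worth noting in your own argument: the amplification exponent of $T_t$ is $t$ and it decreases dimension by $t$, and that of $AbsoluteVector$ on a $d$-dimensional tensor is $d-1$ and it decreases dimension by $d-1$, so the \emph{total} amplification exponent is bounded by the total dimension drop, which is at most $s\le k$; exploiting this recovers the paper's sharper $n^k$ bound without invoking the disjoint-subsets observation. In short: both proofs are correct; the paper's is shorter and tighter, yours is more mechanical and slightly looser but still does the job.
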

\begin{proof}
The first claim follows from the fact that the entries of $M'$ are sums of disjoint subsets of $M$ and that the number of entries in $M$ is bounded by $n^k$.
The second claim follows from Definition \ref{def:wider ddcccddddddffffffd}.
\end{proof}


\section{Certifying Tournaments}\label{sec: tournaments fdffg}

\bigskip
{\fbox{\begin{minipage}{6.1in}
{\begin{algorithm}\label{alg:row finder tensor} \underline{TensorTournament($D,\mathcal{H}, H,\epsilon$)}
\begin{enumerate}
\item Repeat in parallel $O({\log{1\over\delta}\over p})$ times where $p = 1-\sqrt{1-\epsilon/2}$.

\begin{enumerate}
\item Generate $2$-wise independent random hash function $Z$ from $[n]$ to $\{0,1\}$ such that $Z(i) = 0$ w.p. $0.5$. Denote
$
Z_{1} = HZ,\ \  Z_{0} = H(1-Z).
$
\item Compute in a single pass over $D$ for $i=0,1$:
$
\mathfrak{t}_{i} = \mathfrak{A}(D, \mathcal{H}, Z_{i}, \epsilon, \delta'),
$ where $\delta' = {p\epsilon\over 4\log{1\over\delta}}$.
\item Simultaneously (in the same pass), compute $
\mathfrak{l}_i = \mathfrak{B}(D, \mathcal{H}, Z_{i}, \delta').
$
\item Put
$
\mathfrak{u}_i = \max\{{\mathfrak{l}_i\over \beta}, \mathfrak{t}_{i}, 0\}, \ \ i=0,1.
$
\item Define $\lambda' = (1+\epsilon)\lambda$, where $\lambda$ is the constant from Lemma \ref{lm:x/y}, $\lambda = (1+ {2(1-\epsilon)^{1/4}\over 1 - (1-\epsilon)^{1/4}})$.
\item Compute
$$
U' =\left\{\begin{array}{ll}
        \mathfrak{u}_{1},  &\textrm{if\ } \mathfrak{u}_{1} \ge \lambda'{\beta^2} \mathfrak{u}_{0},\\
        \mathfrak{u}_{0},  &\textrm{if\ } \mathfrak{u}_{0} \ge \lambda'{\beta^2} \mathfrak{u}_{1},\\
        0, & \textrm{\ otherwise}.\\
    \end{array}\right.
$$
\end{enumerate}

\item Output $U$ to be the minimum of all $U'$s.

\end{enumerate}
\end{algorithm}
}
\end{minipage}}}

\bigskip
\noindent

\begin{definition}\label{def: threshold algorithm dd}
Let $\mathcal{F}$ be a fixed function that defines implicit vectors, given a data stream and a fixed randomness 
and denote $V = \mathcal{F}(D, \mathcal{H})$ as a vector with entries $v_i$.
For $\alpha > 0.5$, an \emph{$\alpha$-ThresholdMax algorithm for restricted $\mathcal{F}$} is
an algorithm that receives as an input a data stream $D$ and an access to a randomness $\mathcal{H}$ and a random function $H: [n]\mapsto \{0,1\}$, and in one pass over $D$ returns $U\ge 0$ such that w.p. at least $1-\delta$:
\begin{enumerate}
\item If $U > 0$ then $U$ is an $\epsilon$-approximation of $|v_i|$ for some $i$ with $H(i) = 1$.
\item If\footnote{Here and thenceforth we denote by $VH$ a vector with entries $v_iH(i)$, $i\in [n]$ } $|VH| > 0$ and there exists $i$ such that $H(i) = 1$ and $|v_i| \ge (1-\alpha)|VH|$ then $U$ is an $\epsilon$-approximation of $|v_i|$.
\end{enumerate}
\end{definition}

\begin{theorem}\label{lm: the most importnat one tensor}
Let $H$ be a fixed hash function defined as above and let $\epsilon \le 0.1$.
Let $M$ be a $s$-dimensional tensor implicitly defined by a fixed function $\mathcal{F}$, stream $D$ and randomness $\mathcal{H}$, $M = \mathcal{F}(D, \mathcal{H})$.
If there exist:
\begin{itemize}
\item An algorithm $\mathfrak{A}(D, \mathcal{H}, H, \delta)$ that in one pass obtains $(\beta, \delta)$-approximation of $|W(M,H)|$ using memory $\mu_1(n,m,\epsilon, \delta)$;
\item  An algorithm $\mathfrak{B}(D, \mathcal{H}, H, \epsilon, \delta)$ that in one pass over $D$ obtains an $(\epsilon, \delta)$-approximation of $|T_1(W(M,H))|$ using memory $\mu_2(n,m,\epsilon, \delta)$;
\end{itemize}
Let $\alpha = {\epsilon\over 64\beta^2}$.
Then the Algorithm $TensorTournament(D,\mathcal{H}, H, \epsilon)$ is an $\alpha$-ThresholdMax algorithm for restricted $\mathcal{F'}$ (see Definition \ref{def: threshold algorithm dd}), where $\mathcal{F'}(D, \mathcal{H}) = AbsoluteVector(\mathcal{F}(D, \mathcal{H}))$.
The algorithm makes a single pass over $D$ and uses memory
$$
O({1\over \epsilon}\log{1\over \delta}(\mu_1(n,m,\epsilon/3, {\delta\epsilon/ \log{(1/\delta)}}) + \mu_2(n,m,\epsilon/3, {\delta\epsilon/ \log{(1/\delta)}}) + \log{nm}).
$$
\end{theorem}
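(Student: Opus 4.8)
The plan is to analyze the $O(\log(1/\delta)/p)$ parallel repetitions one at a time and to condition on the single event $\mathcal E$ that every call to $\mathfrak A$ and $\mathfrak B$ (there are $O(1)$ of them per repetition) returns within its promised relative error; with the internal failure parameter set to $\delta'=\Theta(\delta\epsilon/\log(1/\delta))$, a union bound gives that $\mathcal E$ holds with probability $\ge 1-\delta/2$. Throughout, $v_l=|Hyperplane(M,l)|$ are the entries of $V=AbsoluteVector(\mathcal F(D,\mathcal H))$, so that $|VH|=|W(M,H)|$, and in each repetition the pairwise independent $Z$ splits the $H$-selected hyperplanes, giving $|W(M,Z_0)|+|W(M,Z_1)|=|VH|$ with $W(M,Z_i)$ a subtensor of $W(M,H)$. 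Recall that $\mathfrak u_i$ is the larger of $0$, the $\epsilon$-accurate estimate of $|T_1(W(M,Z_i))|$, and the $\beta$-accurate estimate of $|W(M,Z_i)|$ rescaled by $1/\beta$. On $\mathcal E$ I will use two facts. (a) A coarse two-sided bracket $\frac{1}{\beta^2}|W(M,Z_i)|\le \mathfrak u_i\le (1+\epsilon)|W(M,Z_i)|$: the lower bound from the rescaled $|W(M,Z_i)|$-estimate, the upper bound from that term being $\le|W(M,Z_i)|$ together with the $T_1$-estimate being $\le(1+\epsilon)|T_1(W(M,Z_i))|\le(1+\epsilon)|W(M,Z_i)|$. (b) If $W(M,Z_i)$ has a $(1-\epsilon/2)$-significant hyperplane $v_j$ — note $Z_i(j)=1$ forces $H(j)=1$ — then by Fact \ref{lm: tensor significant fff} one has $|T_1(W(M,Z_i))|\in[(1-\epsilon)v_j,(1+\epsilon)v_j]$, so the $T_1$-estimate, and hence $\mathfrak u_i=\max\{\cdots\}$, is an $O(\epsilon)$-approximation of $v_j$.

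For soundness (part 1 of Definition \ref{def: threshold algorithm dd}) I split into two cases. If some $H$-selected hyperplane $v_j$ is $(1-\epsilon/2)$-significant in $W(M,H)$, then in every repetition $j$ lands on a side $b=Z(j)$ where $v_j$ is still $(1-\epsilon/2)$-significant, so by (b) $\mathfrak u_b$ is an $O(\epsilon)$-approximation of $v_j$, while by (a) $\mathfrak u_{1-b}\le(1+\epsilon)|W(M,Z_{1-b})|\le(1+\epsilon)\frac{\epsilon}{2}|VH|\ll\mathfrak u_b$; hence whenever $U'>0$ it equals $\mathfrak u_b$, an $O(\epsilon)$-approximation of the $H$-selected $v_j$, so $U=\min U'$ is either $0$ or such an approximation. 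If no $H$-selected hyperplane is $(1-\epsilon/2)$-significant in $W(M,H)$, then Lemma \ref{lm:x/y} gives that the random split keeps $|W(M,Z_0)|$ and $|W(M,Z_1)|$ within a factor $\lambda$ with probability at least $p$; by (a) (and using $\lambda'=(1+\epsilon)\lambda$ to absorb the $1+\epsilon$) this forces $\mathfrak u_0/\mathfrak u_1\in(\tfrac1{\lambda'\beta^2},\lambda'\beta^2)$, i.e.\ $U'=0$ for that repetition, so over the $\Theta(\log(1/\delta)/p)$ repetitions at least one gives $U'=0$ except on an event of probability $\le(1-p)^{\Theta(\log(1/\delta)/p)}\le\delta/2$, whence $U=0$ and part 1 holds vacuously.

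For completeness (part 2) assume $H(j)=1$ and $v_j\ge(1-\alpha)|VH|$ with $\alpha=\epsilon/(64\beta^2)$. In every repetition $j$ lands on side $b$, where $v_j\ge(1-\alpha)|W(M,Z_b)|$ (so $v_j$ is $(1-\epsilon/2)$-significant there, as $\alpha<\epsilon/2$) and $|W(M,Z_{1-b})|=|VH|-|W(M,Z_b)|\le\alpha|VH|$. By (b) and (a), $\mathfrak u_b\ge(1-O(\epsilon))v_j\ge(1-O(\epsilon))(1-\alpha)|VH|$ and $\mathfrak u_{1-b}\le(1+\epsilon)\alpha|VH|$, and since $\lambda\le 9/\epsilon$ the ratio $\mathfrak u_b/\mathfrak u_{1-b}$ exceeds $\lambda'\beta^2$ — this is the single inequality that the constant $64$ in $\alpha$ is tuned to guarantee — so $U'=\mathfrak u_b$ in every repetition, hence $U=\min U'$ is an $O(\epsilon)$-approximation of $v_j$. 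Running $\mathfrak A$ and $\mathfrak B$ at internal accuracy $\epsilon/3$ turns all the $O(\epsilon)$-approximations into $\epsilon$-approximations, so on the complement of an event of total probability $\le\delta$ both parts of Definition \ref{def: threshold algorithm dd} hold. The memory bound then follows by summation: $\Theta(\tfrac1\epsilon\log\tfrac1\delta)$ repetitions run in parallel, each invoking $\mathfrak A$ and $\mathfrak B$ at accuracy $\epsilon/3$ and failure probability $\delta'=\Theta(\delta\epsilon/\log(1/\delta))$, plus $O(\log nm)$ bits to store $Z$, which is exactly the claimed $O\!\big(\tfrac1\epsilon\log\tfrac1\delta(\mu_1+\mu_2+\log nm)\big)$.

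The step I expect to be the main obstacle is making bracket (a) and the threshold $\lambda'\beta^2$ work in both directions simultaneously. The coarse $|W(M,Z_i)|$-estimate inside $\mathfrak u_i$ must be rescaled by exactly $1/\beta$: enough that a merely $\beta$-accurate estimator still yields the true lower bound $\tfrac1{\beta^2}|W(M,Z_i)|$, so that an observed imbalance of $\lambda'\beta^2$ certifies a true imbalance of $\lambda$ and Lemma \ref{lm:x/y} can be applied; yet not so much that a genuinely $(1-\alpha)$-heavy hyperplane fails to produce an observed imbalance exceeding $\lambda'\beta^2$. Because $\mathfrak u_i$ also contains the $T_1$-estimate — which tracks $|W(M,Z_i)|$ only when some hyperplane is significant (Fact \ref{lm: tensor significant fff}) and can otherwise collapse through cancellation — the $\max$ must be shown to behave correctly in every regime. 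Verifying that $\alpha=\epsilon/(64\beta^2)$ and $\lambda=1+\frac{2(1-\epsilon)^{1/4}}{1-(1-\epsilon)^{1/4}}$ are jointly consistent with these two requirements, and that the anticoncentration of the random split in Lemma \ref{lm:x/y} truly delivers success probability $\ge p=1-\sqrt{1-\epsilon/2}$ per repetition, is where the real work lies; the remaining error-parameter bookkeeping is routine.
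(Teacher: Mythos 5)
Your proposal follows the same architecture as the paper's proof: establish the two-sided bracket $\frac{1}{\beta^2}|W(M,Z_i)|\le\mathfrak{u}_i\le(1+\epsilon)|W(M,Z_i)|$ from the two estimators, invoke Fact~\ref{lm: tensor significant fff} to turn a significant hyperplane into an accurate $T_1$-estimate, invoke Lemma~\ref{lm:x/y} for anticoncentration of the random split $Z$, and take the minimum over $\Theta(\tfrac1p\log\tfrac1\delta)$ repetitions. The soundness/completeness split and the role of $\alpha$ and $\lambda'$ also match the paper's Statements~I/II and the condition-2 argument.

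There is, however, a genuine parameter mismatch in your soundness case~B. You split cases at $(1-\epsilon/2)$-significance and then apply Lemma~\ref{lm:x/y} under the hypothesis ``all $v_i<(1-\epsilon/2)|VH|$.'' But the algorithm fixes $\lambda=\lambda(\epsilon)=1+\tfrac{2(1-\epsilon)^{1/4}}{1-(1-\epsilon)^{1/4}}$, and Lemma~\ref{lm:x/y} applied with parameter $\epsilon/2$ requires $\lambda\ge\lambda(\epsilon/2)$, which is strictly larger than $\lambda(\epsilon)$. With the algorithm's $\lambda$, the Chebyshev bound you would actually get is $(1-\epsilon/2)\bigl(\tfrac{\lambda+1}{\lambda-1}\bigr)^2=(1-\epsilon/2)(1-\epsilon)^{-1/2}$, which is $\approx 1$ (for $\epsilon=0.1$ it even slightly exceeds $1$), so the per-repetition success probability is not $\ge p=1-\sqrt{1-\epsilon/2}$ as you claim, and the $(1-p)^{\Theta(\log(1/\delta)/p)}$ calculation fails. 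This is precisely the ``intermediate'' regime where some $v_i$ satisfies $(1-\epsilon)|VH|\le v_i<(1-\epsilon/2)|VH|$. The paper avoids this by splitting at $(1-\epsilon)$-significance: Statement~I assumes no $(1-\epsilon)$-significant entry, so Lemma~\ref{lm:x/y} applies with the same $\epsilon$ and the same $\lambda(\epsilon)$, giving bound $\sqrt{1-\epsilon}$ with slack to absorb the $O(\delta')$ estimator-failure terms and still land $\le\sqrt{1-\epsilon/2}$. The fix to your write-up is local: split at $(1-\epsilon)$-significance in both soundness cases, and in case~A apply Fact~\ref{lm: tensor significant fff} with parameter $2\epsilon$ (a $(1-\epsilon)$-significant hyperplane is $(1-(2\epsilon)/2)$-significant), yielding a $2\epsilon$-approximation of $v_j$ and hence a $3\epsilon$-approximation after the estimator error; your case~A inequalities $\mathfrak{u}_{1-b}\le(1+\epsilon)\epsilon|VH|\ll\mathfrak{u}_b$ still go through. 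The rest of your argument, including the completeness calculation tying $\alpha=\epsilon/(64\beta^2)$ to $\lambda'\beta^2$, is sound.
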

\begin{proof} \

\noindent
Denote $M^{t} = W(M, Z_{t})$ for $t=0,1$.
Let $M_i = Hyperplane(M, i)$ for $i\in [n]$ and let $V'$ to be a vector with elements
$
|M_{i}|.
$
By Definition \ref{def:rollup tensfsssssor deded}, $V' = \mathcal{F'}(D, \mathcal{H})$.
Further, let $V$ be a vector with entries $v_i = |M_i|H(i)$.
We prove that the algorithm satisfies two conditions of Definition \ref{def: threshold algorithm dd} for the ThresholdMax algorithm for $V$ and $H$.
\

\

\noindent
\emph{\textbf{Proof of the first condition of Definition \ref{def: threshold algorithm dd}}}
\

\noindent
We prove the following stronger statements which imply the first condition of Definition \ref{def: threshold algorithm dd}:
\begin{enumerate}[I.]
\item If there is no $(1-\epsilon)$-significant entry $v_l$ then, w.p. at least $1-{\delta\over 3}$, $U=0$.
\item If $|V|>0$ and there is a $(1-\epsilon)$-significant entry $v_l$ then, w.p. at least $1-{\delta\over 3}$,
either $U=0$ or $U$ is a $3\epsilon$-approximation of $|v_l|$.
\end{enumerate}

\

\noindent
\textbf{Proof of statement $\bf{I}$}

\noindent
By definitions of $\mathfrak{B},\mathfrak{A}$, we have w.p. at least $1-{8\delta'}$ for $t=0,1$:
$
\mathfrak{u}_t \ge {\mathfrak{l}_t\over \beta} \ge {|M^t|\over \beta^2};
$
and
$\mathfrak{t}_{t} \le (1+\epsilon)|T_1(M^{t}, H)| \le (1+\epsilon)|M^{t}|$;
and
${\mathfrak{l}_t\over \beta} \le {|M^t|}$.
Thus,
\begin{equation}\label{eq:gfrgd}
{|M^t|\over \beta^2} \le \mathfrak{u}_t \le (1+\epsilon)|M^t|.
\end{equation}

Following the terminology of Lemma \ref{lm:x/y}, we define $X = |M^{1}|$ and $Y = |M^{0}|$.
We have the following relations:
$$
|V| = \sum_{i}v_i = \sum_{i}H(i)|M_i| = \sum_{i\in[n]}H(i)\sum_{\mathbf{j}'\in [n]^{s-1}}|m_{(i,\mathbf{j}')}| = |W(M,H)|,
$$
$$
X = |M^1| = \sum_{\mathbf{j}\in [n]^s}Z(\mathbf{j}_1)H(\mathbf{j}_1)|m_\mathbf{j}| = \sum_{i\in[n]}Z(i)H(i)\sum_{\mathbf{j}'\in [n]^{s-1}}|m_{(i,\mathbf{j}')}| = \sum_{i}Z(i)H(i)|M_i| = \sum_{i}Z(i)v_i,
$$
and similarly
\begin{equation}\label{eq:yff}
Y = |M^{0}| = \sum_{i}(1-Z(i))H(i)|M_i| = |V| - X = |V| - |M^{1}|.
\end{equation}
By statement $\bf{I}$, for all $i$,
$
v_i < (1-\epsilon)|V|.
$
Thus we can apply Lemma \ref{lm:x/y}.  We have:
$$
P((|M^{0}| \ge \lambda|M^{1}|) \cup (|M^{1}| \ge \lambda|M^{0}|) ) = P((X \ge \lambda Y) \cup (Y \ge \lambda X) ) \le \sqrt{1-\epsilon}.
$$
Let $\Upsilon$ be the event $(\mathfrak{u}_{0} \ge \lambda'\beta^2 \mathfrak{u}_{1}) \cup (\mathfrak{u}_{1} \ge \lambda'\beta^2 \mathfrak{u}_{0})$. Let $\Phi$ be the event that ${|M^t|\over \beta^2} \le \mathfrak{u}_t \le (1+\epsilon)|M^t|
$ for both values of $t$.
We have
$
P( \Upsilon) \le P( \Upsilon, \Phi) + P(\bar{\Phi}).
$
By $(\ref{eq:gfrgd})$, we have
$
P(\bar{\Phi}) \le {8\delta'}.
$
Also, events $\mathfrak{u}_{0} \ge \lambda'\beta^2 \mathfrak{u}_{1}$ and $\Phi$
imply that $|M^{0}| \ge \lambda|M^{1}|$; indeed:
$$
|M^{0}| \ge {\mathfrak{u}_{0} \over (1+\epsilon)} \ge {\lambda'\over 1+\epsilon}\beta^2\mathfrak{u}_{1} \ge \lambda|M^{1}|.
$$
Thus we have
$$
P( \Upsilon, \Phi) \le
P((|M^{0}| \ge \lambda|M^{1}|)\cup (|M^{1}| \ge \lambda|M^{0}|)) \le \sqrt{1-\epsilon}.
$$
We summarize that if no $(1-\epsilon)$-significant $v_i$ exists, then
$$
P(U' \neq 0) \le P(\Upsilon) \le \sqrt{1-\epsilon} + O(\delta')\le \sqrt{1-\epsilon/2}.
$$
Recall that the number of repetitions is $O({1\over p}\log{ 1/\delta})$, where $p = 1-\sqrt{1-\epsilon/2}$.
Thus
$
P(U \neq 0) \le (1-p)^{{1\over p}\log{3\over\delta}} \le {\delta\over 3}.
$

\
\

\noindent
\textbf{Proof of statement $\bf{II}$}

Let $v_l$ be a $(1-\epsilon)$-significant entry of $V$.
Assume, w.l.o.g., that for one execution of the main cycle of the $Tournament$ algorithm, $Z(l) = 0$.
Statement $\bf{II}$ implies $|V| > 0$ which implies $v_l = |M_l|H(l) > 0$ which implies $(1-Z(l))H(l) = 1$.
Thus, $|Hyperplane(M^0, l)| = |Hyperplane(W(M, (\one-Z)H), l)| = |M_l| = v_l$.
Therefore by $(\ref{eq:yff})$, $|Hyperplane(M^0, l)| = v_l \ge (1-\epsilon)|V| \ge (1-\epsilon)|M^{0}|$, i.e.,
the $l$-th hyperplane of $M^0$ is $(1-\epsilon)$-significant.
By Fact \ref{lm: tensor significant fff}, $|T(M^{0})|$ is an
$2\epsilon$-approximation of $|M_l|$. By the assumptions of the theorem, $\mathfrak{B}$ returns an $\epsilon$-approximation of $|T(M^{0})|$.
Thus, $\mathfrak{t}_0$ is a $3\epsilon$-approximation of $|M_l|$, w.p. at least $1-\delta'$, in which case
$$
\mathfrak{u}_0 \ge \mathfrak{t}_0\ge (1-3\epsilon)|M_l|.
$$
Also, by the assumption of Theorem \ref{lm: the most importnat one tensor}, w.p. at least $1-\delta'$, we have ${\mathfrak{l}_0 \over \beta} \le |M^{0}|$. Thus
$$
\mathfrak{u}_0 = \max\{{\mathfrak{l}_0\over \beta}, \mathfrak{t}_0, 0\}
        \le \max\{|M^{0}|, (1+3\epsilon)|M_l|\} \le (1+3\epsilon)|M_l|.
$$
On the other hand, w.p. at least $1-{2\delta'}$
$$
\mathfrak{u}_1 = \max\{{\mathfrak{l}_1 \over \beta}, \mathfrak{t}_1, 0\}
        \le \max\{|M^{1}|, (1+\epsilon)|M^{1}|\} = (1+\epsilon)|M^{1}|.
$$
But since $Z_s(l) = 0$ we have by $(\ref{eq:yff})$:
$$
|M^1| = |V| - |M^0| \le |V| - |Hyperplane(M^0, l)| = |V| - v_l \le {\epsilon \over 1-\epsilon}|M_l|.
$$
Combining all of the above computations, we conclude that w.p. at least $1-4\delta'$ (for sufficiently small $\epsilon$, e.g., $\epsilon \le 0.1$):
$$
\mathfrak{u}_1 \le (1+\epsilon)|M_{1}| \le {\epsilon (1+\epsilon)\over 1-\epsilon}|M_l| \le {\epsilon (1+\epsilon)\over (1-\epsilon)(1-3\epsilon)} \mathfrak{u}_0 < \lambda'\mathfrak{u}_0.
$$
Thus, $U'$ is equal to either $0$ or $\mathfrak{u}_0$ w.p. at least $1-4\delta'$. Recall simultaneously $\mathfrak{u}_0$ is a $3\epsilon$-approximation of $|M_l| = v_l$.
The same inequality is true if $Z(l) = 1$.
By union bound, w.p. at least $1- \Omega({\log{1\over \delta}\over p}\delta') = 1- \Omega({\delta})$, $U$ is either $0$ or a $3\epsilon$-approximation of $v_l$.

\
\

\noindent
\emph{\textbf{Proof of the second condition of Definition \ref{def: threshold algorithm dd}}}

Finally, consider the case when $v_l$ is a $(1-\alpha)$-significant entry of $V$.
Consider the case when $Z(l) = 0$. Repeating the arguments from the proof of statement $\bf{II}$, we have,
w.p. at least $1-4\delta'$,
$
\mathfrak{u}_0
$
is a $3\epsilon$-approximation of $v_l$ and
$$
\mathfrak{u}_1 \le (1+\epsilon)|M^{1}| \le (1+\epsilon){\alpha \over (1- {\alpha})}v_l \le 4\alpha v_l.
$$
Therefore,
$$
\mathfrak{u}_0 \ge (1-3\epsilon)v_l \ge {(1-3\epsilon)\over 4\alpha}\mathfrak{u}_1 \ge \lambda'\beta^2 \mathfrak{u}_1.
$$
Thus, w.p. $1-4\delta'$, $U' = \mathfrak{u}_0 = (1\pm 3\epsilon) v_l$.
The same is true when $Z(l) = 1$.
Thus,
$U$ is a $3\epsilon$-approximation of $v_l$ w.p. at least $ 1- \Omega({\delta})$.

\
\

\noindent
\emph{\textbf{Conclusion and memory analysis}}

Since both conditions of Definition \ref{def: threshold algorithm dd} are met (substituting $\epsilon$ with $\epsilon/3$), we conclude that $TensorTournament$ is an $\alpha$-ThresholdMax algorithm for restricted $\mathcal{F'}$.
Let us count the memory needed for a single iteration of the main cycle of the algorithm.
To generate pairwise independent $Z$, we need $O(\log{n})$ bits. In addition, we need
$\mu_1 + \mu_2$ for the algorithms $\mathfrak{B}$ and $\mathfrak{A}$ and $O(\log{nm})$ bits to keep the auxiliary variables.
Thus, in total we need memory
$$
O({1\over \epsilon}\log{1\over \delta}(\mu_1(n,m,\epsilon/3, {\delta\epsilon/ \log{(1/\delta)}}) + \mu_2(n,m,\epsilon/3, {\delta\epsilon/ \log{(1/\delta)}}) + \log{nm}).
$$
Recall that we do not count memory required to store $\mathcal{H}$ and $H$.
\end{proof}

\begin{lemma}\label{lm:x/y}
Let $V$ be a $n$-dimensional vector with non-negative entries $v_i \ge 0, i\in [n]$.
Let $Z$ be $2$-wise independent random hash functions from  $[n]$ to $\{0,1\}$, such that $P(Z(i) = 1) = 0.5$.
Let $X = \sum_i v_iZ(i)$, and  $Y = L_1(V) - X$.
If there exists $\epsilon > 0$ such that for all $i$ $v_i< (1-\epsilon) L_1(V)$, then for $\lambda = \lambda(\epsilon)\ge 1+ {2(1-\epsilon)^{1/4}\over 1 - (1-\epsilon)^{1/4}}$ we have
$$
P((X \ge \lambda Y) \cup (Y \ge \lambda X) ) \le \sqrt{1-\epsilon}.
$$
\end{lemma}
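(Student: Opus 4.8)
The plan is to collapse the two ratio events into a single symmetric deviation event for $X$ about its mean and then finish with Chebyshev's inequality, for which $2$-wise independence of $Z$ is exactly enough. Set $S = L_1(V) = \sum_i v_i$, so that (using non-negativity of the entries) $X + Y = S$ and $E[X] = E[Y] = S/2$. The inequality $X \ge \lambda Y$ is equivalent to $X \ge \tfrac{\lambda}{1+\lambda} S$, and $Y \ge \lambda X$ is equivalent to $X \le \tfrac{1}{1+\lambda} S$; since $\tfrac{\lambda}{1+\lambda} S - \tfrac{S}{2} = \tfrac{S}{2} - \tfrac{1}{1+\lambda} S = \tfrac{\lambda-1}{2(1+\lambda)} S$, this gives the clean rewriting
$$
\{X \ge \lambda Y\} \cup \{Y \ge \lambda X\} \;=\; \Bigl\{\, \bigl|X - \tfrac{S}{2}\bigr| \ge \theta S \,\Bigr\}, \qquad \theta := \frac{\lambda-1}{2(\lambda+1)} .
$$

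First I would estimate the variance: by $2$-wise independence and $P(Z(i)=1)=\tfrac12$,
$$
\mathrm{Var}(X) = \sum_{i} v_i^2 \,\mathrm{Var}(Z(i)) = \tfrac14 \sum_i v_i^2 \le \tfrac14 \bigl(\max_i v_i\bigr) \sum_i v_i < \tfrac{1-\epsilon}{4}\, S^2 ,
$$
where the last step uses the hypothesis $v_i < (1-\epsilon) S$ for every $i$. Chebyshev's inequality then yields
$$
P\!\left(\bigl|X - \tfrac{S}{2}\bigr| \ge \theta S\right) \le \frac{\mathrm{Var}(X)}{\theta^2 S^2} < \frac{1-\epsilon}{4\theta^2}.
$$

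It remains to check that the stated lower bound on $\lambda$ is exactly what makes $\tfrac{1-\epsilon}{4\theta^2} \le \sqrt{1-\epsilon}$, i.e. $\theta \ge \tfrac12(1-\epsilon)^{1/4}$. Writing $r := (1-\epsilon)^{1/4} \in (0,1)$, the inequality $\tfrac{\lambda-1}{2(\lambda+1)} \ge \tfrac{r}{2}$ rearranges (since $1-r>0$) to $\lambda \ge \tfrac{1+r}{1-r} = 1 + \tfrac{2r}{1-r}$, which is precisely the hypothesis $\lambda \ge 1 + \tfrac{2(1-\epsilon)^{1/4}}{1-(1-\epsilon)^{1/4}}$; moreover $\theta$ is increasing in $\lambda$, so the bound persists for every $\lambda$ above this threshold. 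Combining the three displays gives $P((X\ge\lambda Y)\cup(Y\ge\lambda X)) \le \sqrt{1-\epsilon}$, as claimed. I do not expect a genuine obstacle here: the only points requiring care are the purely algebraic translation of the two ratio events into $\{|X-S/2|\ge\theta S\}$ and the verification that $\lambda(\epsilon)$ is the exact break-even constant for Chebyshev; the one quantitative probabilistic input is the second-moment bound $\sum_i v_i^2 \le (\max_i v_i)\, S$ used together with $2$-wise independence.
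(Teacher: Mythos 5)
Your proof is correct and follows essentially the same route as the paper's: rewrite the two one-sided ratio events as the symmetric deviation $\{|X - S/2| \ge \tfrac{\lambda-1}{2(\lambda+1)}S\}$, bound $\mathrm{Var}(X) = \tfrac14\sum_i v_i^2 \le \tfrac{1-\epsilon}{4}S^2$ via pairwise independence and the hypothesis $v_i < (1-\epsilon)S$, and apply Chebyshev. Your presentation is slightly cleaner (explicit $\theta$, explicit verification that the threshold on $\lambda$ is exactly the break-even point), but there is no substantive difference from the paper.
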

\begin{proof}
Clearly, $E(X) = L_1(V)/2$.
Further, by $2$-wise independency of $Z$, we have
$$
E(X^2) = E((\sum_i v_iZ(i))^2) = {1\over 2}\sum_i v^2_i + {1\over 4}\sum_{i\neq j} v_iv_j = {1\over 4}\sum_i v^2_i + E(X)^2.
$$
Thus, by the assumption that $v_i< (1-\epsilon) L_1(V)$, we have:
$$
Var(X) = E(X^2) - E(X)^2 = {1\over 4}\sum_i v_i^2 \le {1-\epsilon\over 4}L_1(V)^2.
$$
Thus, $\sigma(X) \le {\sqrt{1-\epsilon}\over 2}L_1(V)$.
Note that event $X\ge \lambda Y$ is equivalent to the event $X - E(X) \ge {\lambda-1\over 2(\lambda+1)}L_1(V)$ and
event $Y \ge \lambda X$ is equivalent to the event  $E(X) - X \ge {\lambda-1\over 2(\lambda+1)}L_1(V)$. Thus

$$
P((X \ge \lambda Y) \cup (Y \ge \lambda X) ) = P(|E(X) - X |\ge {\lambda-1\over 2(\lambda+1)}L_1(V)) \le
$$
$$
P(|E(X) - X |\ge {\lambda-1\over 2(\lambda+1)}{2\over  {\sqrt{1-\epsilon}}}\sigma(Y)) \le
$$
$$
(1-\epsilon)\left({\lambda+1\over \lambda-1}\right)^2 \le \sqrt{1-\epsilon}.
$$
for $\lambda \ge 1+ {2(1-\epsilon)^{1/4}\over 1 - (1-\epsilon)^{1/4}}$.

Note that if there is at most one strictly positive $v_i$, then $P((X \ge \lambda Y) \cup (Y \ge \lambda X) ) = 1$ seems to contradict our
lemma. However, in this case, there exists $v_i$, such that $v_i = L_1(V)$, and thus the assumption of the lemma does not hold.
Generally, the assumptions imply that there exists at least $1\over 1-\epsilon$ strictly positive entries $v_i$.
\end{proof}


\section{Approximating $L_1$ Norms of Implicit Vectors}\label{sec: sdgsdgsdsdg}


\begin{definition}\label{def: cover ddddff}
Let $V$ with $v_i\ge 0$ be a vector from $R^n$. A set $\mathcal{U}$ of positive numbers is an \emph{$\epsilon$-cover} of $V$ if:
\begin{enumerate}
\item All elements of $\mathcal{U}$ are $\epsilon$-approximations of distinct and positive coordinates from $V$. I.e., there is a one-to-one mapping $\rho$ from the set $\mathcal{U}$ to a subset $S' \subseteq [n]$ such that for all $U \in \mathcal{U}$, $U$ is an $\epsilon$-approximation of $v_{\rho(U)}$.
\item $\mathcal{U}$ contains $\epsilon$-approximations of all $\epsilon$-significant elements of $V$. I.e., for all $v_i$ such that $v_i \ge \epsilon |V|$, it is true that $i\in S'$.
\end{enumerate}
The \emph{size} of the cover is $|\mathcal{U}|$.
\end{definition}

\begin{definition}\label{def: cover algorithm}
Let $\mathcal{F}$ be a fixed function that implicitly defines vectors, given a data stream $D$ and a fixed randomness $\mathcal{H}$. Denote $V = \mathcal{F}(D, \mathcal{H})$.
A \emph{Cover algorithm for restricted $\mathcal{F}$} is
an algorithm that receives as an input a data stream $D$, an access to a randomness $\mathcal{H}$ and a random function $H: [n]\mapsto \{0,1\}$ and an $\epsilon$ and $\delta$.
The algorithm makes a single pass over $D$ and w.p. at least $1-\delta$, returns an $\epsilon$-cover of vector with entries $v_iH(i)$.
\end{definition}

\subsection{Witnessing $\bf{\epsilon}$-Significant Hyperplanes}\label{sec:epsilon sign}

\begin{lemma}\label{lm: find epsilo     dfggf n}
Let $\mathcal{F}$ be a fixed function that implicitly defines vectors, given a data stream $D$ and a fixed randomness $\mathcal{H}$.
An existence of $\alpha$-ThresholdMax algorithm for restricted $\mathcal{F}$ that uses memory $\mu(n,m, \epsilon, \delta)$ implies an existence of
a Cover algorithm for restricted $\mathcal{F}$ for any $\epsilon$. The Cover algorithm uses memory $O({1\over \epsilon^2\delta\alpha}(\mu(n,m, \epsilon,\delta^2\epsilon^2\alpha) + \log{nm}))$.
\end{lemma}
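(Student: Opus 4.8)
The plan is to build an $\epsilon$-cover by repeatedly invoking the $\alpha$-ThresholdMax algorithm on random sub-vectors obtained by hashing the coordinates, using the "heavy hitter via subsampling'' paradigm. The key idea is that an $\alpha$-ThresholdMax algorithm with a hash $H:[n]\to\{0,1\}$ will reliably report (an $\epsilon$-approximation of) $v_i$ whenever $v_i$ is $(1-\alpha)$-significant \emph{within} the sub-vector $VH$; so if I subsample coordinates at a rate that isolates a given $\epsilon$-significant coordinate $v_i$ as a $(1-\alpha)$-fraction of the surviving mass, I will catch it. Concretely, I will run $\Theta(1/(\epsilon^2\delta\alpha))$ independent trials; in each trial I pick an independent hash $H$ that keeps each coordinate with some probability $q$ (ranging over a logarithmic set of scales, or simply fixed at a rate tuned to $\epsilon$ and $\alpha$), feed $D,\mathcal{H},H$ to the ThresholdMax algorithm with failure probability $\delta' = \delta^2\epsilon^2\alpha$, and collect any positive output. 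The union of all outputs, after a de-duplication/pruning step, will be the claimed cover.

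First I would make precise the isolation analysis: fix an $\epsilon$-significant coordinate $v_i \ge \epsilon|V|$. I need a trial in which $H(i)=1$ and $\sum_{j\ne i} v_j H(j) \le \frac{\alpha}{1-\alpha}v_i$, so that $v_i$ is $(1-\alpha)$-significant in $VH$ and ThresholdMax is obligated to return an $\epsilon$-approximation of it. Since $v_i\ge\epsilon|V|$ and the total "noise'' mass is at most $|V|$, sampling the other coordinates at rate $q = \Theta(\epsilon\alpha)$ makes the expected surviving noise $O(\epsilon\alpha|V|) = O(\alpha v_i)$, and a Markov bound gives constant success probability per trial conditioned on $H(i)=1$ (which itself has probability $q$). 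Hence each $\epsilon$-significant coordinate is isolated in some trial with probability $\ge 1-\delta/3$ once the number of trials is $\Omega(\frac{1}{q}\log\frac{1}{\delta}) = \Omega(\frac{1}{\epsilon\alpha}\log\frac1\delta)$; there are at most $1/\epsilon$ such coordinates, so a union bound over them costs another factor, and the $\frac{1}{\epsilon^2\delta\alpha}$ trial count in the statement comfortably absorbs this together with the $1/\delta$ needed to drive the overall failure probability down. I also invoke the first guarantee of ThresholdMax (a positive output is always an $\epsilon$-approximation of \emph{some} true coordinate $v_{\rho(U)}$ with $H(\rho(U))=1$) to ensure property~1 of the cover, namely that every element of $\mathcal{U}$ is a genuine approximation of a distinct positive coordinate. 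Distinctness is enforced by the pruning step: whenever two collected values are within a $(1\pm\epsilon)$-factor and "could'' correspond to the same coordinate, keep one; since distinct coordinates whose values are that close are still separately approximated, a careful bucketing by dyadic scale (losing only constant factors in $\epsilon$, which I rescale at the start) yields a one-to-one map $\rho$.

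The memory bound is then bookkeeping: each trial runs one instance of the ThresholdMax algorithm, costing $\mu(n,m,\epsilon,\delta^2\epsilon^2\alpha)$ plus $O(\log nm)$ for the hash $H$ and for storing the $O(\log nm)$-bit candidate values; the trials run in parallel in a single pass, multiplying by the trial count $O(\frac{1}{\epsilon^2\delta\alpha})$, which gives exactly $O\!\left(\frac{1}{\epsilon^2\delta\alpha}\bigl(\mu(n,m,\epsilon,\delta^2\epsilon^2\alpha)+\log nm\bigr)\right)$. The main obstacle I anticipate is the de-duplication argument: ThresholdMax may, across different trials, return approximations of the \emph{same} coordinate, or (more dangerously) report a non-significant coordinate in one trial that happens to numerically collide with a significant one in another, and I must guarantee the final pruned set is simultaneously a valid one-to-one cover \emph{and} still contains approximations of all $\epsilon$-significant coordinates. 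Handling this cleanly requires running ThresholdMax with a slightly smaller accuracy parameter (say $\epsilon/10$) so that the $(1\pm\epsilon/10)$-confidence intervals of distinct coordinates that are not within a factor $1+\epsilon$ of each other stay disjoint, then pruning by keeping one representative per cluster of overlapping intervals; I expect this to be the only genuinely delicate part, with everything else being routine sampling and union bounds.
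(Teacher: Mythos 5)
Your high-level plan---use the ThresholdMax primitive under subsampling so that each $\epsilon$-significant coordinate is isolated as a $(1-\alpha)$-fraction of the surviving mass---is the right instinct and matches the paper's, but the specific sampling scheme you chose creates a gap that your own ``delicate'' de-duplication step cannot repair. You run $\Theta(1/(\epsilon^2\delta\alpha))$ \emph{independent} trials, each with its own zero-one hash $H$. Because the surviving sets across trials overlap, the same coordinate can be isolated and reported in many trials, and ThresholdMax only returns a scalar $U$, not an index. When you then try to prune, the algorithm faces an information-theoretic obstruction: given two collected values within a $(1\pm\epsilon)$-factor of each other, it cannot distinguish (a) two approximations of the same coordinate $v_i$ from (b) approximations of two distinct coordinates $v_i\approx v_j$. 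If you keep both, condition~1 of Definition~\ref{def: cover ddddff} demands an \emph{injective} $\rho$ into genuine coordinates; if both outputs actually came from $i$ and no other coordinate has value near $v_i$, no such $\rho$ exists. If you aggressively prune to one representative per dyadic bucket, you can destroy condition~2 when two genuinely distinct $\epsilon$-significant coordinates share a value. The multiplicity of coordinates at a given value is simply not observable from ThresholdMax's outputs, so ``careful bucketing'' cannot decide how many to keep.

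The paper avoids this entirely by using a single pairwise-independent hash $G:[n]\to[\varrho]$ with $\varrho=\lceil 1/(\epsilon'\alpha)\rceil$, $\epsilon'=\epsilon^2\delta/3$, and running ThresholdMax once per bucket on the restricted vector $V H F_s$ with $F_s(i)=\mathbf{1}_{G(i)=s}$. Because $\{G^{-1}(s)\}_s$ \emph{partitions} $[n]$, each positive output $U_s$ is by ThresholdMax's first guarantee an approximation of some coordinate inside bucket $s$, and these coordinates are automatically distinct across $s$---distinctness is structural, not inferred from numerical values, so no de-duplication is needed. The isolation analysis is then the Markov bound you also sketch, applied to $X_i=\lvert VHF_{G(i)}\rvert-\lvert v_i\rvert$, plus a collision bound between significant coordinates, all at failure probability governed by $\epsilon'$; notably this avoids the $\log(1/\delta)$ repetition factor your trial count implicitly introduces, matching the stated memory $O\bigl(\frac{1}{\epsilon^2\delta\alpha}(\mu(n,m,\epsilon,\delta^2\epsilon^2\alpha)+\log nm)\bigr)$ exactly. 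To repair your argument, replace the independent per-trial hashes with a single partitioning hash; the rest of your isolation reasoning then goes through essentially unchanged.
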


\begin{proof}

Denote by $\mathfrak{L}_{\alpha}(D, \mathcal{H}, H, \epsilon, \delta)$ the existing $\alpha$-ThresholdMax algorithm for restricted $\mathcal{F}$.

Using $\mathfrak{L}_{\alpha}$ we construct the following algorithm.
Let $\epsilon' = \epsilon^2\delta/3$ and $\varrho = \lceil{1\over \epsilon'\alpha}\rceil$.
Let $G$ be a pairwise independent random hash function from $[n]$ to $[\varrho]$ that is independent of $\mathcal{H}$ and $H$.
For $s\in [\varrho]$, define function $F_s$ as $F_s(i) = \one_{G(i) = s}$ and execute, in parallel for all $s$, $\mathfrak{L}_{\alpha}(D,\mathcal{H}, HF_{s},\epsilon, \delta/\varrho$).
Let $U_s$ be the output of $s$-th ran of $\mathfrak{L}_{\alpha}$. The output of our new algorithm is a set of all strictly positive $U_s$. We show below that the output is indeed $\epsilon$-cover of $V$ with probability at least $1-\delta$.

Let $V = \mathcal{F}(D, \mathcal{H})$ be a vector with entries $v_i$ and let $V_s$ be a vector with entries $v_{s,i} = v(i)F_s(i)$.
By the union bounds and by the definition of $\alpha$-ThresholdMax algorithm, w.p. at least
$1-\delta$, every positive $U_s$ is an $\epsilon$ approximation of $|v_{i_s}|$ for some $i_s$ with $H({i_s})F_s(i_s) = 1$. But this implies that $U_s$ is an approximation of $|v_i|$ with $H(v_i) = 1$.
Since $G$ splits $[n]$ into disjoint subsets, the output of our algorithm corresponds to $\epsilon$-approximations of absolute values of a
set of distinct entries of $V$. I.e., the first condition of $\epsilon$-cover is correct.

To show that the second condition is true as well, let $S_{\epsilon}$ be set of all
$i$s such that $|v_iH(i)| \ge \epsilon |VH| > 0$.
Consider a fixed $i\in S_{\epsilon}$. Let
$$
X_i = |VHF_{G(i)}| - |v_i| = \sum_{j\neq i} |v_j| H(j)F_{{G(i)}}(j) \ge 0.
$$
By pairwise independency of $G$:
$$
E(X_i) = \sum_{j\neq i} |v_j| H(j)P(G(j) = G(i)) \le {|VH| \over \varrho}.
$$
Let $\Psi_i$ be the event that
$
X_i > {\epsilon\over \varrho\epsilon'}|VH|;
$
by Markov inequality
$
P(\Psi_l) \le {\epsilon'\over\epsilon}.
$
Note that if $\Psi_i$ does not happen, then
$$
|VHF_{G(i)}| - |v_i| \le {\epsilon\over \varrho\epsilon'}|VH| \le {1\over \varrho\epsilon'}|v_i|\le \alpha  |v_i|,
$$
in which case $|v_i| \ge (1-\alpha)|VHF_{G(i)}|$.
Let $\Gamma_l$ be the event that $U_{G(i)}$ is not an $\epsilon$-approximation of $|v_i|$.
By the properties of algorithm $\mathfrak{L}_{\alpha}$,
$
P(\Gamma_i | \bar{\Psi}_i) \le {\delta\over \varrho}.
$
Thus
$$
P(\Gamma_i) \le P(\Gamma_i | \bar{\Psi}_i) + P({\Psi}_i) \le {\delta\over \varrho} + {\epsilon'\over \epsilon}.
$$
Finally, let $\Phi_{i,j}$ be the event where there is a collision between $i$ and $j$.
By pairwise independence of $G$, $P(\Phi_{i,j}) = {1\over \varrho}$, and thus the probability of collisions for $\epsilon$-significant entries is bounded by ${1\over \epsilon^2\varrho}$.
Thus, the probability that the output of the algorithm does not meet the second condition of $\epsilon$-cover is bounded by
$$
P((\cup_{i\in S_{\epsilon}}\Gamma_i)\cup (\cup_{i,j\in S_{\epsilon}}\Phi_{i,j})) \le {\delta\over \varrho\epsilon} + {\epsilon'\over \epsilon^2} + {1\over \varrho\epsilon^2} \le \delta.
$$
\end{proof}

\subsection{The $\epsilon$-Approximation}\label{sec: genralizing IW}
\begin{definition}\label{def: fsssssssfggv}
Let $\mathcal{F}$ be a fixed function that defines an implicit vector $V = \mathcal{F}(D, \mathcal{H})$, given $D$ and a randomness $\mathcal{H}$,
as in Definition \ref{def:wider ddcccddddddd}.
An algorithm that receives as an input a data stream $D$ and an access to a randomness $\mathcal{H}$ and in one pass over $D$ returns an $(\epsilon, \delta)$-approximation of $|\mathcal{F}(D, \mathcal{H})|$ is called an \emph{$(\epsilon, \delta)$-approximation algorithm for $L_1(\mathcal{F})$}.
\end{definition}

The main goal of this section is to prove
\begin{lemma}\label{lm: estimating V}
Let $\mathcal{F}$ be a fixed function that defines an implicit vector $V = \mathcal{F}(D, \mathcal{H})$, given $D$ and a randomness $\mathcal{H}$.
Assume that $V$ has non-negative entries bounded by $poly(n,m)$.
Then the existence of Cover algorithm $\mathfrak{Q}(D, \mathcal{H}, H, \epsilon, \delta)$ for restricted $\mathcal{F}$ (see Definition \ref{def: cover algorithm}) that uses memory $\mu(n,m,\epsilon,\delta)$ implies an existence of
an $(\epsilon, 2/3)$-approximation algorithm for $L_1(\mathcal{F})$ (Definition \ref{def: fsssssssfggv}) that uses memory
$$
O\left({1\over \epsilon}\log(n)\mu(n,m,{\epsilon^7\over \log^3(nm)},{\epsilon \over \log(nm)}) + {1\over \epsilon^2}\log^2(nm)\right).
$$
\end{lemma}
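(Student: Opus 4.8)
The plan is to adapt the Indyk--Woodruff heavy-hitter/$L_p$-estimation framework to the implicit, non-negative vector $V = \mathcal{F}(D,\mathcal{H})$, using the Cover algorithm $\mathfrak{Q}$ as the black-box primitive that replaces the standard $L_2$-based heavy-hitter extraction (which is unavailable here, as emphasized in the introduction). First I would set up a logarithmic partition of the coordinates of $V$ by magnitude: since the entries are non-negative and $poly(n,m)$-bounded, there are $O(\log(nm))$ weight classes $B_r = \{i : v_i \in ((1+\epsilon)^{-r-1}\|V\|,(1+\epsilon)^{-r}\|V\|]\}$ (relative to a running guess for $\|V\| = |\mathcal{F}(D,\mathcal{H})|$, which can be handled by a standard doubling/geometric-guessing outer loop, losing only a $\log$ factor). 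For each "level" I would subsample the coordinate set $[n]$ at rate $2^{-j}$ for $j = 0,1,\dots,O(\log n)$ by means of a hash function $H_j:[n]\mapsto\{0,1\}$, and run the Cover algorithm $\mathfrak{Q}(D,\mathcal{H},H_j,\epsilon',\delta')$ with $\epsilon'$ polynomially small in $\epsilon/\log(nm)$ as dictated by the memory bound in the statement. The point is that at the subsampling rate matched to a given weight class, the surviving coordinates of that class become $\epsilon'$-significant within the subsampled vector $VH_j$, so by the definition of an $\epsilon$-cover (Definition \ref{def: cover ddddff}) the cover returned by $\mathfrak{Q}$ contains $\epsilon'$-approximations of all of them, and — by the first condition of the cover — only of genuine distinct coordinates, so there is no overcounting.

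Next I would assemble the estimate: for each weight class $B_r$, use the coordinates recovered at the appropriate subsampling level to estimate $|B_r|$ (scaling the count found by the reciprocal subsampling probability), then estimate the contribution of $B_r$ to $\|V\|$ as roughly (estimated cardinality) $\times$ (representative magnitude of the class, obtainable from the $\epsilon'$-approximations in the cover), and sum over $r$. The classes that are too light to contribute more than an $\epsilon$-fraction of $\|V\|$ in total — i.e., coordinates below a $\mathrm{poly}(\epsilon/\log(nm))$ threshold — are simply discarded, and one shows their aggregate contribution is at most $\epsilon\|V\|$ because there are at most $n$ of them and $O(\log(nm))$ such classes. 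This is exactly the Indyk--Woodruff accounting, and the error analysis is a union bound over the $O(\log^2(nm))$ (level, weight-class) pairs together with Chernoff/pairwise-independence concentration for the subsampled cardinality estimates, which is why $\delta'$ needs only to be $\epsilon/\log(nm)$ and the overall failure probability can be driven below $1/3$. The memory is $O(1/\epsilon)\log n$ parallel invocations of $\mathfrak{Q}$ at parameters $(\epsilon^7/\log^3(nm), \epsilon/\log(nm))$, plus $O(\epsilon^{-2}\log^2(nm))$ bookkeeping for the counters and the geometric guessing, matching the claimed bound.

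The main obstacle I anticipate is the cardinality-estimation step for each weight class. Unlike the classical setting, where one can use an $L_2$ sketch to test whether a subsampled bucket is "clean" (contains exactly one heavy coordinate) and thereby get reliable per-level counts, here the only tool is $\mathfrak{Q}$, which returns a cover but no direct handle on how many coordinates of a given class survived subsampling. The fix I would pursue is to exploit precisely the one-to-one property in Definition \ref{def: cover ddddff}(1): the number of cover elements whose value falls in the magnitude window of $B_r$, at the level where $B_r$ becomes significant, is an unbiased (up to the cover's completeness guarantee) estimator of $|B_r| \cdot 2^{-j}$, and pairwise independence of the subsampling hash gives the variance bound needed for concentration. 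The delicate points are (i) making sure a coordinate near a window boundary is not double-counted between adjacent classes — handled by taking $\epsilon'$ small enough that the $\epsilon'$-approximation cannot cross a window of relative width $\epsilon$, so at most a $\mathrm{poly}(\epsilon'/\epsilon)$ fraction of mass is at risk — and (ii) choosing the subsampling level per class so that the expected number of survivors is $\Theta(\mathrm{poly}(1/\epsilon) \cdot \mathrm{polylog})$, large enough for concentration but small enough that they are all $\epsilon'$-significant in $VH_j$; both are arranged by the parameter settings above, and the rest is the routine Indyk--Woodruff bookkeeping which I would not grind through here.
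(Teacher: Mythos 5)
Your high-level plan matches the paper's proof of Lemma \ref{lm: estimating V}: partition the coordinate values into geometric weight classes, subsample the index set at geometric rates $G_j$ with $P(G_j(i)=1)=(1+\epsilon)^{-j}$, invoke the Cover algorithm $\mathfrak{Q}$ at each level, use the injective property of a cover to count surviving representatives per class, and scale up to estimate each class cardinality. That much is the same algorithm as the paper's $\mathfrak{G}$.

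However, there is a genuine gap in your handling of boundaries between weight classes. You propose to resolve the double-counting/misclassification issue by ``taking $\epsilon'$ small enough that the $\epsilon'$-approximation cannot cross a window of relative width $\epsilon$, so at most a $\mathrm{poly}(\epsilon'/\epsilon)$ fraction of mass is at risk.'' This claim is false for a fixed (deterministic) choice of window boundaries: no matter how small you make the precision $\epsilon'$ of the cover, an adversary can put an arbitrary fraction of the mass of $V$ at distance less than $(1+\epsilon')$ from a single window boundary, in which case each such coordinate's reported approximation can land on either side depending on the cover instance, and the per-level class counts $\mathcal{Y}_{l,j}$ become inconsistent across $j$. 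The paper avoids this precisely by randomizing the boundaries: it draws a uniform shift $q\in\{0,\dots,Q-1\}$, sets the windows to $[(1+\zeta)^q(1+\epsilon)^l,(1+\zeta)^q(1+\epsilon)^{l+1})$ with $(1+\zeta)^Q=1+\epsilon$, and shows (Fact \ref{fact:boundary}) that over the randomness of $q$ the expected mass landing within the boundary sublayers $\mathfrak{S}$ is only an $O(1/Q)$ fraction of $|V|$. Only after this does it choose the cover precision $\mathfrak{c}$ small enough (relative to $\zeta$, not $\epsilon$) so that coordinates outside $\mathfrak{S}$ are never misclassified. That random shift is the missing idea; without it the $\mathrm{poly}(\epsilon'/\epsilon)$ bound you assert has no proof and is in fact wrong.

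A secondary, smaller gap: you say the ``appropriate'' subsampling level for a class is chosen so the expected number of survivors is $\Theta(\mathrm{poly}(1/\epsilon)\cdot\mathrm{polylog})$, but you do not explain how the algorithm, which does not know $|B_r|$, locates that level. The paper's concrete mechanism is to set $\mathcal{Z}_l = \max\{j : \chi/(1+\epsilon)^2 < \mathcal{Y}_{l,j}\le (1+3\epsilon)\chi\}$ and prove, via Chebyshev (not Chernoff — pairwise independence of the $G_j$ only gives second-moment bounds), that this self-selected level is close to $\mathfrak{f}_\chi(\hat s_l)$ with good probability (Facts \ref{lm: ddd upper bounds layers}, \ref{lm: chernoff upper bounds layers} and Corollary \ref{cor:Z'}). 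You would need to supply this or an equivalent self-calibration step to make the cardinality estimator well-defined.
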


\subsubsection{Notations}
In this section, let $0< \epsilon < 1$ be a constant,
Define
$$
\mathfrak{a} = O(\log_{(1+\epsilon)}{n}),\ \ \ \mathfrak{b} = O(\log_{(1+\epsilon)}{nm}), \ \ \ {\chi}' = 10(\mathfrak{a} + \mathfrak{b}),\ \ \ {\chi} = \lceil{16\over \epsilon^3}{\chi}'\rceil,
$$
$$
Q = \lceil{20{\chi}'\over \epsilon^2}\rceil, \ \ \ \zeta = (1+\epsilon)^{1/Q} - 1,\ \ \ \mathfrak{c} = \min\{{\zeta\over 2(1+\zeta)}, {\epsilon\over 4\chi\chi'^2}\}.
$$
For $x>{\chi}$, let ${\mathfrak{f}}_{\chi}(x)$ be an integer such that ${\chi} (1+\epsilon)^{{\mathfrak{f}}_{\chi}(x)-1} < x \le {\chi} (1+\epsilon)^{{\mathfrak{f}}_{\chi}(x)}$ i.e., ${\mathfrak{f}}_{\chi}(x) = \lceil\log_{(1+\epsilon)}{x\over {\chi}}\rceil$.
It is easy to see that for $x>{\chi}$ we have  ${\mathfrak{f}}_{\chi}(x)>0$.

\subsubsection{Technical Lemmas}

Let $n_{rest}$ be such that $n \ge n_{rest} > {\chi}$.
For any $j = [\mathfrak{a}]$ and for every $i\in [n_{rest}]$, let $X_{i,j}$ be pairwise independent zero-one random variables with $P(X_{i,j} = 1) = {1\over(1+\epsilon)^j}$.
Let $Y_j = \sum_{i\in [n_{rest}]} X_{i,j}$.

\begin{fact}\label{lm: ddd upper bounds layers}
$$
P({{\chi} \over (1+\epsilon)^2} < Y_{{\mathfrak{f}}_{\chi}(n_{rest})} \le (1+3\epsilon){\chi}) \ge 1 - {1\over {\chi}'}.
$$
\end{fact}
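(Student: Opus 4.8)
The plan is to bound the deviation of $Y_{\mathfrak{f}_\chi(n_{rest})}$ from its mean using Chebyshev's inequality, exploiting the pairwise independence of the $X_{i,j}$. First I would compute the mean: since $Y_j = \sum_{i\in[n_{rest}]} X_{i,j}$ with each $X_{i,j}$ Bernoulli of parameter $(1+\epsilon)^{-j}$, we get $\mu := E(Y_{\mathfrak{f}_\chi(n_{rest})}) = n_{rest}(1+\epsilon)^{-\mathfrak{f}_\chi(n_{rest})}$. By the defining property of $\mathfrak{f}_\chi$, namely $\chi(1+\epsilon)^{\mathfrak{f}_\chi(n_{rest})-1} < n_{rest} \le \chi(1+\epsilon)^{\mathfrak{f}_\chi(n_{rest})}$, dividing through by $(1+\epsilon)^{\mathfrak{f}_\chi(n_{rest})}$ gives $\chi/(1+\epsilon) < \mu \le \chi$. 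So the mean already sits in (a subinterval of) the target window; it remains to control the fluctuation.

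Next I would bound the variance. By pairwise independence, $\mathrm{Var}(Y_{\mathfrak{f}_\chi(n_{rest})}) = \sum_{i} \mathrm{Var}(X_{i,\mathfrak{f}_\chi(n_{rest})}) \le \sum_i E(X_{i,\mathfrak{f}_\chi(n_{rest})}) = \mu \le \chi$. Hence $\sigma(Y_{\mathfrak{f}_\chi(n_{rest})}) \le \sqrt\chi$. Chebyshev then gives, for any $t>0$,
$$
P\bigl(|Y_{\mathfrak{f}_\chi(n_{rest})} - \mu| \ge t\sqrt{\chi}\bigr) \le \frac{1}{t^2}.
$$
I want the failure probability to be at most $1/\chi'$, so I choose $t = \sqrt{\chi'}$, giving a deviation bound of $\sqrt{\chi\chi'}$ with probability at least $1 - 1/\chi'$.

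Finally I would check that the event $|Y_{\mathfrak{f}_\chi(n_{rest})} - \mu| < \sqrt{\chi\chi'}$ implies the desired two-sided bound $\chi/(1+\epsilon)^2 < Y_{\mathfrak{f}_\chi(n_{rest})} \le (1+3\epsilon)\chi$. For the upper bound: $Y \le \mu + \sqrt{\chi\chi'} \le \chi + \sqrt{\chi\chi'}$, and since $\chi = \lceil (16/\epsilon^3)\chi'\rceil \ge (16/\epsilon^3)\chi'$, we have $\sqrt{\chi\chi'} = \sqrt\chi\cdot\sqrt{\chi'} \le \sqrt\chi \cdot \sqrt{\epsilon^3\chi/16} = (\epsilon^{3/2}/4)\chi \le \epsilon\chi$ (using $\epsilon<1$), so $Y \le (1+\epsilon)\chi \le (1+3\epsilon)\chi$. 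For the lower bound: $Y > \mu - \sqrt{\chi\chi'} > \chi/(1+\epsilon) - \epsilon\chi$, and one checks $\chi/(1+\epsilon) - \epsilon\chi \ge \chi/(1+\epsilon)^2$ for small $\epsilon$ since $1/(1+\epsilon) - \epsilon - 1/(1+\epsilon)^2 = \epsilon/(1+\epsilon)^2 - \epsilon = \epsilon\bigl(1/(1+\epsilon)^2 - 1\bigr)$ — wait, this is negative, so I would instead use a slightly sharper constant: from $\sqrt{\chi\chi'}\le (\epsilon^{3/2}/4)\chi$ and the fact that $1/(1+\epsilon) - 1/(1+\epsilon)^2 = \epsilon/(1+\epsilon)^2 \ge \epsilon/4$ for $\epsilon\le 1$, we get $\mu - \sqrt{\chi\chi'} \ge \chi/(1+\epsilon) - (\epsilon^{3/2}/4)\chi \ge \chi/(1+\epsilon)^2 + (\epsilon/4 - \epsilon^{3/2}/4)\chi \ge \chi/(1+\epsilon)^2$. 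The main (and only real) obstacle is this last bookkeeping: making sure the constants in the definitions of $\chi$ and $\chi'$ are generous enough that the Chebyshev deviation $\sqrt{\chi\chi'}$ is swallowed by the slack between $\chi/(1+\epsilon)$ and $\chi/(1+\epsilon)^2$ on one side and between $\chi$ and $(1+3\epsilon)\chi$ on the other; everything else is a direct mean/variance computation under pairwise independence.
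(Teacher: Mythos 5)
Your proposal is correct and follows essentially the same strategy as the paper: compute the mean and variance of $Y_{\mathfrak{f}_\chi(n_{rest})}$ using pairwise independence, then apply Chebyshev's inequality and check that the resulting deviation fits inside the target window using $\chi = \lceil(16/\epsilon^3)\chi'\rceil$. The only cosmetic difference is that the paper takes a relative deviation of $\tfrac{\epsilon}{2}\mu$ in Chebyshev while you take an absolute deviation of $\sqrt{\chi\chi'}$; both routes give the $1/\chi'$ failure probability, and your bookkeeping (including the self-correction from the too-crude $\sqrt{\chi\chi'}\le\epsilon\chi$ to the sharper $\sqrt{\chi\chi'}\le\tfrac{\epsilon^{3/2}}{4}\chi$ for the lower bound) is sound.
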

\begin{proof}
Let $j_0 = {\mathfrak{f}}_{\chi}(n_{rest})$; note that $j_0 > 0$ since $n_{rest}>{\chi}$.
We have
$
E(Y_{j_0}) = {n_{rest}\over (1+\epsilon)^{j_0}}.
$
and, by pairwise independency of $X_{j_0, i}$:
$$
Var(Y_{j_0}) = n_{rest}Var(X_{j_0,1}) = {n_{rest}\over (1+\epsilon)^{j_0}}(1-{1\over (1+\epsilon)^{j_0}}) \le {n_{rest}\over (1+\epsilon)^{j_0}} \le {\chi}.
$$
Let $\epsilon' = {\epsilon \over 2}$; we have, by Chebyshev's inequality:
$$
P\left(|Y_{j_0} - {n_{rest}\over (1+\epsilon)^{j_0}}| \ge \epsilon'{n_{rest}\over (1+\epsilon)^{j_0}}\right) \le
Var(Y_{j_0})\left({(1+\epsilon)^{j_0}\over \epsilon' n_{rest} }\right)^2 \le
$$
$$
{1\over \epsilon'^2}{\chi}({(1+\epsilon)^{j_0}\over n_{rest}})^2\le
{1\over \epsilon'^2}{(1+\epsilon)^2\over {\chi}} \le {1\over {\chi}'}.
$$
Also,
$
|Y_{j_0} - {n_{rest}\over (1+\epsilon)^{j_0}}| < \epsilon'{n_{rest}\over (1+\epsilon)^{j_0}}
$
implies
$$
Y_{j_0} < (1+\epsilon'){n_{rest}\over (1+\epsilon)^{j_0}} \le (1+3\epsilon){\chi},
$$
and
$$
Y_{j_0} > (1-\epsilon'){n_{rest}\over (1+\epsilon)^{j_0}} \ge (1-\epsilon'){{\chi}\over (1+\epsilon)} \ge  {{\chi}\over (1+\epsilon)^2}.
$$
\end{proof}

\begin{fact}\label{lm: chernoff upper bounds layers}
Let $Z = \max_{j\in [\mathfrak{a}]} \{j: {{\chi} \over (1+\epsilon)^2} < Y_j \le (1+3\epsilon){\chi}\}$ if at least one such $j$ exists and $0$ otherwise. Then
$$
P(Z > {\mathfrak{f}}_{\chi}(n_{rest}) + 2) \le 1 - {1\over {\chi}'}.
$$
\end{fact}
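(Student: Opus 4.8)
The plan is to reduce the event $\{Z > \mathfrak{f}_{\chi}(n_{rest}) + 2\}$ to a union bound over the ``too high'' layers $j$, and to control each term with the same Chebyshev estimate used in Fact~\ref{lm: ddd upper bounds layers}, this time exploiting that the means $E(Y_j)$ decay geometrically in $j$. First I would observe that $Z$ and $\mathfrak{f}_{\chi}(n_{rest})$ are integers, so $\{Z > \mathfrak{f}_{\chi}(n_{rest}) + 2\}$ forces $Z \ge \mathfrak{f}_{\chi}(n_{rest}) + 3$; by the definition of $Z$ this means there is an index $j$ with $\mathfrak{f}_{\chi}(n_{rest}) + 3 \le j \le \mathfrak{a}$ and $Y_j > \tfrac{\chi}{(1+\epsilon)^2}$ (only the lower half of the two-sided band defining $Z$ is needed). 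Hence
$$
P\bigl(Z > \mathfrak{f}_{\chi}(n_{rest}) + 2\bigr)\;\le\;\sum_{j=\mathfrak{f}_{\chi}(n_{rest})+3}^{\mathfrak{a}} P\Bigl(Y_j > \tfrac{\chi}{(1+\epsilon)^2}\Bigr).
$$

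Next I would bound each summand. Writing $j = \mathfrak{f}_{\chi}(n_{rest}) + 3 + \ell$ with $\ell \ge 0$, the definition $\mathfrak{f}_{\chi}(x) = \lceil\log_{(1+\epsilon)}(x/\chi)\rceil$ gives $(1+\epsilon)^{\mathfrak{f}_{\chi}(n_{rest})} \ge n_{rest}/\chi$, so $E(Y_j) = \tfrac{n_{rest}}{(1+\epsilon)^{j}} \le \tfrac{\chi}{(1+\epsilon)^{3+\ell}}$, and by pairwise independence $\mathrm{Var}(Y_j) = n_{rest}\,\mathrm{Var}(X_{1,j}) \le E(Y_j)$. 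Since then $\tfrac{\chi}{(1+\epsilon)^2} - E(Y_j) \ge \tfrac{\chi}{(1+\epsilon)^2} - \tfrac{\chi}{(1+\epsilon)^3} = \tfrac{\epsilon\chi}{(1+\epsilon)^3}$, Chebyshev's inequality yields
$$
P\Bigl(Y_j > \tfrac{\chi}{(1+\epsilon)^2}\Bigr)\;\le\;\frac{\mathrm{Var}(Y_j)}{\bigl(\epsilon\chi/(1+\epsilon)^3\bigr)^2}\;\le\;\frac{(1+\epsilon)^{3-\ell}}{\epsilon^2\chi}.
$$

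Finally I would bound the finite sum by the infinite geometric series $\sum_{\ell\ge0}\tfrac{(1+\epsilon)^{3-\ell}}{\epsilon^2\chi} = \tfrac{(1+\epsilon)^4}{\epsilon^3\chi}$, and then plug in $\chi \ge \tfrac{16}{\epsilon^3}\chi'$ together with $(1+\epsilon)^4 \le 16$ (valid since $\epsilon<1$) to conclude $P(Z > \mathfrak{f}_{\chi}(n_{rest}) + 2) \le \tfrac{1}{\chi'}$, which in particular gives the stated bound. The probabilistic content is exactly the Chebyshev argument already used for $Y_{\mathfrak{f}_{\chi}(n_{rest})}$; the only (minor) obstacle is the bookkeeping that makes the geometric sum and the numerical constants line up against the definition of $\chi$, and checking that the upper threshold $(1+3\epsilon)\chi$ in the definition of $Z$ can simply be discarded in this direction.
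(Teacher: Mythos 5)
Your proof is correct and follows essentially the same route as the paper's: a union bound over indices $j \ge \mathfrak{f}_{\chi}(n_{rest})+3$, Chebyshev on each $Y_j$ using pairwise independence and the geometric decay $E(Y_j) \le \chi/(1+\epsilon)^{j-\mathfrak{f}_{\chi}(n_{rest})}$, and a geometric series against the definition of $\chi$. The only cosmetic differences are that you lower-bound the Chebyshev denominator once and for all by $\epsilon\chi/(1+\epsilon)^3$ rather than carrying the $j$-dependent term and bounding at the end, and you sum the infinite series rather than the finite one; both yield the identical term-by-term bound $(1+\epsilon)^{3-\ell}/(\epsilon^2\chi)$ and the same final constant $(1+\epsilon)^4/(\epsilon^3\chi) \le 1/\chi'$ that the paper obtains (note the lemma as stated has the slack bound $1-1/\chi'$, which both you and the paper actually improve to $1/\chi'$).
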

\begin{proof}
Let $j_0 = {\mathfrak{f}}_{\chi}(n_{rest})$ and consider fixed $j' > j_0 + 2$. We have,
$$
E(Y_{j'}) = {n_{rest}\over (1+\epsilon)^{j'}},
$$
and by pairwise independency of $X$s
$$
Var(Y_{j'}) = n_{rest}Var(X_{j',1}) = {n_{rest}\over (1+\epsilon)^{j'}}(1-{1\over (1+\epsilon)^{j'}}) \le {n_{rest}\over (1+\epsilon)^{j'}} \le {{\chi}\over (1+\epsilon)^{j'-j_0}}.
$$
Thus,
$$
P(Y_{j'} > {{\chi}\over (1+\epsilon)^2}) = P(Y_{j'} -E(Y_{j'})> {{\chi}\over (1+\epsilon)^2}- E(Y_{j'})) \le {Var(Y_{j'}) \over ({{\chi}\over (1+\epsilon)^2}- E(Y_{j'}))^2} \le
$$
$$
{{\chi}\over (1+\epsilon)^{j'-j_0} ({{\chi}\over (1+\epsilon)^2}- {n_{rest}\over (1+\epsilon)^{j'}})^2} \le
{{\chi}\over (1+\epsilon)^{j'-j_0} ({{\chi}\over (1+\epsilon)^2}-  {{\chi}\over (1+\epsilon)^{j'-j_0}})^2} =
$$
$$
{1\over {\chi}}{1\over (1+\epsilon)^{j'-j_0} ({1\over (1+\epsilon)^2}-  {1\over (1+\epsilon)^{j'-j_0}})^2} \le {1\over {\chi}}{1\over (1+\epsilon)^{j'-j_0} ({1\over (1+\epsilon)^2}-  {1\over (1+\epsilon)^3})^2} \le
$$
$$
{1\over {\chi}}{1\over (1+\epsilon)^{j'-j_0} {\epsilon^2\over (1+\epsilon)^6}} \le {(1+\epsilon)^3\over \epsilon^2{\chi}}{1\over (1+\epsilon)^{j'-j_0-3}}.
$$
Clearly $Z=j'$ implies $Y_{j'} > {{\chi}\over (1+\epsilon)^2}$.
Thus, and by union bound over all $j' \ge j_0 + 3$, we have that
$$
P(Z > j_0 + 2) \le {(1+\epsilon)^3\over \epsilon^2{\chi}}\sum_{j'= j_0+3}^{\mathfrak{b}}{1\over (1+\epsilon)^{j'-j_0-3}} \le {(1+\epsilon)^3\over \epsilon^2{\chi}}{(1+\epsilon)\over \epsilon} \le {1\over {\chi}'}.
$$
\end{proof}

\begin{corollary}\label{cor:Z'}
Let $Y'_i = \sum_{j\in [n_{rest}]} \alpha_{i,j}X_{i,j}$, where $\alpha_{i,j}$ are  arbitrary random zero-one variables. 
For $Z' = \max_{j\in [\mathfrak{a}]} \{j: {{\chi} \over (1+\epsilon)^2} < Y'_j \le (1+3\epsilon){\chi}\}$, it is true that
$P(Z' > {\mathfrak{f}}_{\chi}(n_{rest}) + 2) \le {1\over {\chi}'}$.
\end{corollary}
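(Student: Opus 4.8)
The plan is to deduce Corollary~\ref{cor:Z'} from Fact~\ref{lm: chernoff upper bounds layers} by a pointwise domination argument, reading $Y'_j=\sum_{i\in[n_{rest}]}\alpha_{i,j}X_{i,j}$ as the coordinate-masked analogue of the layer counts $Y_j$ from Fact~\ref{lm: chernoff upper bounds layers}. The key observation is that since each $\alpha_{i,j}$ takes values in $\{0,1\}$, we have $\alpha_{i,j}X_{i,j}\le X_{i,j}$ on every outcome, irrespective of how the $\alpha_{i,j}$ are correlated with the $X_{i,j}$ or with one another. Summing over $i$ gives $Y'_j\le Y_j$ deterministically for every $j\in[\mathfrak{a}]$, so for any threshold $t$ the sample-path inclusion $\{Y'_j>t\}\subseteq\{Y_j>t\}$ holds, and hence $P(Y'_j>t)\le P(Y_j>t)$.

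Next I would unwind the definition of $Z'$ exactly as in the proof of Fact~\ref{lm: chernoff upper bounds layers}. Set $j_0={\mathfrak{f}}_{\chi}(n_{rest})$, which is positive since $n_{rest}>{\chi}$. If $Z'=j'$ for some $j'>j_0+2$, then in particular $Y'_{j'}>{\chi}/(1+\epsilon)^2$, and by the domination above this forces $Y_{j'}>{\chi}/(1+\epsilon)^2$. Therefore
\[
P(Z'>j_0+2)\le\sum_{j'=j_0+3}^{\mathfrak{b}}P\!\left(Y_{j'}>\frac{{\chi}}{(1+\epsilon)^2}\right),
\]
and the right-hand side is precisely the sum that is shown to be at most $1/{\chi}'$ inside the proof of Fact~\ref{lm: chernoff upper bounds layers}; there the Chebyshev estimate used only the upper-tail bound on $Y_{j'}$ in terms of its mean and variance, not the two-sided band, so it applies verbatim. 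This yields $P(Z'>{\mathfrak{f}}_{\chi}(n_{rest})+2)\le 1/{\chi}'$.

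The one point to be careful about is that the $\alpha_{i,j}$ are allowed to be arbitrary, possibly dependent on the $X_{i,j}$, so one cannot recompute means and variances for $Y'_j$ directly; the argument must route entirely through the deterministic inequality $Y'_j\le Y_j$ and the already-established tail bound for $Y_j$. There is no real obstacle here: the corollary is nothing more than the remark that zeroing out summands can only shrink the layer counts, so every upper-tail event for the masked layers is contained in the corresponding event for the unmasked layers, and Fact~\ref{lm: chernoff upper bounds layers} then closes the argument.
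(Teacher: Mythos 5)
Your proposal is correct and matches the paper's proof almost verbatim: both hinge on the deterministic domination $Y'_j\le Y_j$ (so $P(Y'_j>t)\le P(Y_j>t)$) followed by rerunning the upper-tail Chebyshev sum from Fact~\ref{lm: chernoff upper bounds layers}. You also rightly read $Y'_j=\sum_{i\in[n_{rest}]}\alpha_{i,j}X_{i,j}$, fixing what is evidently an index typo in the corollary's statement.
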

\begin{proof}
We have for any $j$
$$
P(Z' = j) \le P(Y'_{j} > {{\chi}\over (1+\epsilon)^2}) \le P(Y_{j} > {{\chi}\over (1+\epsilon)^2}).
$$
Thus, we can repeat the arguments from Fact \ref{lm: chernoff upper bounds layers}.
\end{proof}

\begin{fact}\label{fact:zeta}
Let $\zeta = (1+\epsilon)^{1/Q} - 1$, then $
\zeta \ge {\epsilon\over 2Q}.
$
\end{fact}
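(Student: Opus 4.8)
The plan is to reduce the claimed bound to two elementary logarithmic inequalities. First I would rewrite the definition $\zeta = (1+\epsilon)^{1/Q}-1$ in the equivalent form $(1+\zeta)^Q = 1+\epsilon$ and take logarithms, obtaining $Q\ln(1+\zeta) = \ln(1+\epsilon)$. Since $\ln(1+x)\le x$ for every $x>-1$, this gives $Q\zeta \ge Q\ln(1+\zeta) = \ln(1+\epsilon)$, hence $\zeta \ge \frac{1}{Q}\ln(1+\epsilon)$.

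The second step is to bound $\ln(1+\epsilon)$ from below, using the standing assumption $0<\epsilon<1$ of this section. Consider $g(\epsilon) = \ln(1+\epsilon) - \tfrac{\epsilon}{2}$: it satisfies $g(0)=0$ and $g'(\epsilon) = \frac{1}{1+\epsilon} - \frac12$, which is strictly positive exactly when $\epsilon<1$. Hence $g$ is nonnegative on $[0,1]$, so $\ln(1+\epsilon)\ge \tfrac{\epsilon}{2}$ on this range. Combining the two steps yields $\zeta \ge \frac{1}{Q}\cdot\frac{\epsilon}{2} = \frac{\epsilon}{2Q}$, which is the claim.

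There is essentially no serious obstacle here; the only place the hypothesis is used is that $\epsilon\le 1$ is needed for $\ln(1+\epsilon)\ge\epsilon/2$. An equivalent route, if shorter in context, is to argue by contradiction: if $\zeta < \epsilon/(2Q)$ then $1+\epsilon = (1+\zeta)^Q < \left(1+\frac{\epsilon}{2Q}\right)^{Q} \le e^{\epsilon/2} \le 1+\epsilon$, using $(1+x/Q)^Q\le e^x$ together with $e^{\epsilon/2}\le 1+\epsilon$ for $0<\epsilon\le 1$, a contradiction. I would present whichever of these two formulations reads more cleanly alongside the surrounding technical lemmas.
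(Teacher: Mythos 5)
Your proof is correct, and your primary route is genuinely different from the paper's. The paper argues by contradiction through a binomial expansion: assuming $\zeta < \tfrac{\epsilon}{2Q}$, it writes $(1+\zeta)^Q - 1 = \sum_{i=1}^{Q}\binom{Q}{i}\zeta^i$, bounds $\binom{Q}{i}\le Q^i$ term by term, and sums the resulting geometric series $\sum_{i\ge 1}(\epsilon/2)^i \le \epsilon$ (valid for $\epsilon \le 1$) to get $(1+\zeta)^Q - 1 < \epsilon$, a contradiction. Your logarithmic route instead goes directly: from $Q\ln(1+\zeta) = \ln(1+\epsilon)$ and $\ln(1+x)\le x$ you get $\zeta \ge \tfrac{1}{Q}\ln(1+\epsilon)$, and then the elementary bound $\ln(1+\epsilon)\ge \epsilon/2$ on $[0,1]$ finishes. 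This is a direct derivation rather than a contradiction, uses calculus-level inequalities instead of combinatorial estimates, and is arguably cleaner. Both use the same hypothesis in the same way: $\epsilon\le 1$ is what makes the final lower bound on $\ln(1+\epsilon)$ (respectively the geometric-series bound) hold. Your sketched alternative contradiction argument via $(1+x/Q)^Q \le e^x$ and $e^{\epsilon/2}\le 1+\epsilon$ is essentially a tidied-up version of the paper's argument — it replaces the crude $\binom{Q}{i}\le Q^i$ and geometric sum with the exponential bound — so if you want to match the paper's style, that is the closer of your two options.
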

\begin{proof}
If $\zeta < {\epsilon\over 2Q}$, then we have
$$
(1+\zeta)^{Q} - 1 = \sum_{i = 1}^Q \zeta^i{Q\choose i} \le \sum_{i = 1}^Q \zeta^iQ^i < \sum_{i = 1}^Q {\epsilon^i\over 2^iQ^i}Q^i = \sum_{i = 1}^Q {\epsilon^i\over 2^i} \le \epsilon.
$$
Thus, it must be the case that $\zeta \ge {\epsilon\over 2Q}$.
\end{proof}

\subsubsection{The Algorithm and Proof of Lemma \ref{lm: estimating V}}

\bigskip
{\fbox{\begin{minipage}{6.1in}
{\begin{algorithm}\label{alg:approx zuzik finder} \underline{$\mathfrak{G}(D,\mathcal{H},\epsilon, \delta$)}
\begin{enumerate}

\item Pick random integer $q$ from $0,\dots, Q-1$.

\item For any $j  \in [\mathfrak{a}]$ generate pairwise-independent random hash functions $G_j:[n] \rightarrow \{0,1\}$ such that for any $i\in [n]$ $P(G_j(i) = 1) = {1\over (1+\epsilon)^j}$.

\item In parallel, apply $\mathfrak{Q}_j = \mathfrak{Q}(D, \mathcal{H}, G_j, \mathfrak{c}, {1\over {\chi}'})$ for all $j = 0,\dots, \mathfrak{a}$.


\item For all $0\le j\le \mathfrak{a}$ and all $l =-1,\dots,\mathfrak{b}$ compute $\mathcal{Y}_{l,j}$ that is a number of elements returned by $\mathfrak{Q}_j$ in the range $[(1+\zeta)^q(1+\epsilon)^l, (1+\zeta)^q(1+\epsilon)^{l+1})$.

\item For every $l \in [\mathfrak{b}]$ compute $\mathcal{Z}_l = \max_{j>0} \{j: {{\chi} \over (1+\epsilon)^2} < \mathcal{Y}_{l,j} \le  (1+3\epsilon){\chi}\}$; 
define $\mathcal{Z}_l = 0$ if no such $j$ exists.



\item Return $(1+\zeta)^q\sum_{l\in [\mathfrak{b}]} (1+\epsilon)^{\mathcal{Z}_l+l}\mathcal{Y}_{l,\mathcal{Z}_l}$.

\end{enumerate}
\end{algorithm}
}
\end{minipage}}}

\bigskip
\noindent

Let $\mathcal{F}$ be a fixed function that defines vector $V = \mathcal{F}(D, \mathcal{H})$ with non-negative entries $v_i$ such that $L_{\infty}(V) = poly(n,m)$.
Define $q$ to be a uniform random integer from $0,\dots,Q-1$. For $l = -1, \dots, \mathfrak{b}$, define a ``layer'' $S_l$ as a set of all $v_i$s in the range $[(1+\zeta)^{q}(1+\epsilon)^{l}, (1+\zeta)^{q}(1+\epsilon)^{l+1})$.
Denote by $s_l$ the number of elements in $S_l$.
For any $l$ define a \emph{left boundary sub-layer} $S_{l,left}$ as a set of all $v_i$s in the range
$[(1+\zeta)^{q-1}(1+\epsilon)^{l}, (1+\zeta)^{q}(1+\epsilon)^{l})$ and $s_{l,left}$ to be its size.
For any $l$ define a \emph{right boundary sub-layer} $S_{l,right}$ as a set of all $v_i$s in the range
$[(1+\zeta)^{q}(1+\epsilon)^{l}, (1+\zeta)^{q+1}(1+\epsilon)^{l})$ and $s_{l,right}$ to be its size.
Let $\mathfrak{S}$ be the set of all element in boundary (left or right) sublayers. It is straightforward to see the total weight of the elements in $\mathfrak{S}$ is small, w.h.p.:
\begin{fact}\label{fact:boundary}
$P(\sum_{v_i \in \mathfrak{S}}v_i \ge {20\over Q} |V|) \le 0.1$.
\end{fact}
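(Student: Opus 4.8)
The plan is to bound the \emph{expected} total weight of the elements lying in $\mathfrak{S}$ over the random choice of $q$, and then to apply Markov's inequality; the threshold $\tfrac{20}{Q}|V|$ and the target probability $0.1$ are calibrated so that an expectation bound of $\tfrac{2}{Q}|V|$ is exactly what is needed.

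First I would fix a positive entry $v_i$ and, writing $v_i = (1+\zeta)^{t_i}$ for a real $t_i$ (zero entries contribute nothing and are ignored), analyze the probability that $v_i$ falls into one of the boundary sublayers. Since $\zeta = (1+\epsilon)^{1/Q}-1$ gives $(1+\zeta)^Q = 1+\epsilon$, passing to the exponent $t = \log_{(1+\zeta)}(\cdot)$ turns $S_l$ into the interval $[q+lQ,\,q+(l+1)Q)$, the left sublayer $S_{l,left}$ into $[q-1+lQ,\,q+lQ)$, and the right sublayer $S_{l,right}$ into $[q+lQ,\,q+1+lQ)$. Hence $v_i$ lies in some boundary sublayer precisely when $t_i \in \bigcup_l [q-1+lQ,\,q+1+lQ)$, equivalently when $q \in \bigcup_l (t_i-1-lQ,\,t_i+1-lQ]$. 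Since $Q > 2$ (indeed $Q$ is polynomially large), this is a disjoint union of half-open length-$2$ intervals spaced $Q$ apart; each such interval contains exactly two integers, so at most two of the $Q$ equally likely values of $q$ land in it. Therefore $P(v_i \in \mathfrak{S}) \le 2/Q$ for every $i$.

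Next I would sum over the coordinates: $E\big[\sum_{v_i \in \mathfrak{S}} v_i\big] = \sum_i v_i\,P(v_i \in \mathfrak{S}) \le \tfrac{2}{Q}|V|$. Markov's inequality then gives $P\big(\sum_{v_i \in \mathfrak{S}} v_i \ge \tfrac{20}{Q}|V|\big) \le \tfrac{2/Q}{20/Q} = 0.1$, which is the claim.

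The only real content is the per-coordinate computation in the first step: translating the three nested geometric ranges into exponent space, checking that the left and right sublayers of consecutive indices $l$ do not overlap (this is exactly where $Q>2$ is used) so that $\mathfrak{S}$ becomes a clean period-$Q$ family, and confirming that a half-open interval of length $2$ always contains exactly two integers so that the fraction of favorable $q$ is at most $2/Q$; everything else is a one-line Markov estimate. One caveat to record is that $l$ ranges only over $\{-1,\dots,\mathfrak{b}\}$ rather than over all of $\mathbb{Z}$, but truncating the range of $l$ can only remove favorable values of $q$, so the bound $2/Q$ — and hence the conclusion — is unaffected.
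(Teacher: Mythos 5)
Your proof is correct and follows the same route as the paper's: bound the per-coordinate probability $P(v_i\in\mathfrak{S})$ by $2/Q$ using the period-$Q$ structure in exponent space, take expectations, and apply Markov with the $20/Q$ threshold. Your version is somewhat more careful than the paper's (which phrases the same idea via the integer residue $y=j\bmod Q$ and asserts $P(q-1\le y\le q)=2/Q$ without spelling out the exponent-space reduction), but the underlying argument is identical.
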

\begin{proof}
For a fixed $v_i$, let $j = Qx+y, 0\le y<Q$ be such that $(1+\zeta)^{j-1} < v_i \le (1+\zeta)^{j}$.
Then, $P(v_i \in \mathfrak{S}) = P(q-1\le y\le q) = {2\over Q}$.
Thus, by Markov inequality,
$P(\sum_{v_i \in \mathfrak{S}}v_i \ge {20\over Q}|V|) \le 0.1$.
\end{proof}

\noindent
\emph{Proof of Lemma \ref{lm: estimating V}}
We prove that Algorithm \ref{alg:approx zuzik finder} satisfies the requirement of the lemma.
Let $\mathcal{B}$ be the event that for all $j$, $\mathfrak{B_j}$ returns a $\mathfrak{c}$-cover of $VG_j$ (see Definition \ref{def: cover ddddff}).
By parameters of $\mathfrak{B_j}$ and by the union bound $P(\mathcal{B}) \ge 1-{\mathfrak{a}\over {\chi}'}$ for any fixed functions $G_j$.
Let $\mathcal{D}$ be the event that $\sum_{v_i \in \mathfrak{S}}v_i < {20\over Q} |V|$. By Fact \ref{fact:boundary}, we have $P(\mathcal{D}) \ge 0.9$.
In the remainder of this section we assume that $\mathcal{B, D}$ are true.
The key observation is that if $\mathcal{B}$ is true then any $v_i \notin \mathfrak{S}$ is not misclassified; i.e., if an approximation of $v_i$ is returned, then it will belong to the same layer as $v_i$.

\subsubsection*{\emph{\textbf{I. Upper Bound}}}
To prove the upper bound, we distinguish between \emph{large} and \emph{small} layers.
A layer $S_l$ is large if $\tilde{s}_l = s_l + s_{l,left} + s_{l,right} > {\chi}$, and small otherwise.
Consider a fixed $l$; if $S_l$ is a large layer, then Corollary \ref{cor:Z'} is applicable as follows.
Let $X_{i,j}$ be the indicator of the event that $G_j(i) = 1$, and let $v_{i_1}, \dots, v_{i_{\tilde{s}_l}}$
be the elements from $S_l \cup S_{l,left} \cup S_{l,right}$. Let $\alpha_{i,j}$ be the indicator random variable that the approximation of $v_i$ will be counted by$\mathcal{Y}_{l,j}$. Since $\mathcal{B}$ is true, no elements outside of $S_l \cup S_{l,left} \cup S_{l,right}$ can be counted. Thus, we can write
$$
\mathcal{Y}_{l,j} = \sum_{t=1}^{\tilde{s}_l} X_{i_t,j}\alpha_{i_t,j}
$$
and apply Corollary \ref{cor:Z'} with $n_{rest} = \tilde{s}_l$ and an appropriate enumeration of $X$s. Therefore, by Corollary \ref{cor:Z'}, w.p. at least $1-{1\over {\chi}'}$,
$$
\mathcal{Z}_l \le {\mathfrak{f}}_{\chi}(\tilde{s}_l) + 2.
$$
Consider the case that $\mathcal{Z}_l > 0$. Then, by definition of $\mathcal{Z}_l$, we have $\mathcal{Y}_{l,\mathcal{Z}_l} \le (1+3\epsilon){\chi}$, and thus by definition of ${\mathfrak{f}}_{\chi}$:
$$
(1+\epsilon)^{\mathcal{Z}_l}\mathcal{Y}_{l,Z_l} \le (1+\epsilon)^{{\mathfrak{f}}_{\chi}(\tilde{s}_l) + 2}(1+3\epsilon){\chi} \le (1+\epsilon)^6\tilde{s}_i.
$$
Also if $\mathcal{Z}_l = 0$ then $\mathcal{Y}_{l,0} \le \tilde{s}_l$, assuming $\mathcal{B}$. In this case we have $(1+\epsilon)^{\mathcal{Z}_i}Y_{i,Z_i} \le \tilde{s}_i.$
Thus, for any large layer, we have w.p. at least $1-{1\over {\chi}'}$:
$$
(1+\epsilon)^{\mathcal{Z}_l}\mathcal{Y}_{l,Z_l} \le (1+\epsilon)^6\tilde{s}_l.
$$
Consider the case when $S_l$ is small.
For the purposes of our analysis, we can add to $\mathcal{Y}_{l,j}$ arbitrary elements $v_{\tilde{s}_l+1}, \dots v_{{\chi} + 1} \notin S_l\cup  S_{l,left} \cup S_{l,right}$ and define
$\alpha_{i_t, j} \equiv 0$ for all $j$ and for all $t>\tilde{s}_l$. Thus, the above bounds will be valid.
Thus, we conclude that for every layer $S_l$ the approximation of its cardinality exceeds
$(1+\epsilon)^6\tilde{s}_l$ w.p. at most $1\over {\chi}'$. By union bound and by Fact \ref{fact:boundary}, w.p. at least $1-{\mathfrak{b}\over {\chi}'}$:
$$
(1+\zeta)^q\sum_{l \in [\mathfrak{b}] } (1+\epsilon)^{\mathcal{Z}_l+l}\mathcal{Y}_{l,\mathcal{Z}_l} \le (1+\zeta)^q\sum_{l\in [\mathfrak{b}]} (1+\epsilon)^{l+6}\tilde{s}_l \le
$$
$$
\sum_{i\in [n]} v_i(1+\epsilon)^7 + \sum_{v_i \in \mathfrak{S}} v_i(1+\epsilon)^7 \le (1+\epsilon)^7(1+20\epsilon)|V|.
$$

\subsubsection*{\emph{\textbf{II. Lower Bound}}}
Now let us prove the lower bound.
Assuming $\mathcal{B}$, the only elements from $S_l$ that cannot be counted by $\mathcal{Y}$ are those from $S_{l-1,right}$ and $S_{l+1,left}$.
Let $\hat{S}_l = S_l\setminus (S_{l-1,left}\cup S_{l+1,right})$ and let $\hat{s}_l = s_l - s_{l+1,left} - s_{l-1,right}$ to be its size.
We change a definition of a \emph{large} layer; $S_l$ is large if $\hat{s}_l > {\chi}$, and \emph{small} otherwise.
Consider an $\epsilon\over {\chi}'$-significant layer $\hat{S}_l$.

\emph{\textbf{II.1. large layers}}

First, let us assume that $\hat{S}_l$ is large.
Let $v_{i_1}, \dots, v_{i_{\hat{s}_l}}$ be elements from $\hat{S}_l$.
Let $X_{i,j} = \one_{G_j(i) = 1}$ and let $Y_{l, j} = \sum_{t = 1}^{\hat{s}_l} X_{i_t,j}$; i.e.,
$Y_{l,j}$ is the number of elements among $v_{i_1}, \dots, v_{i_{\hat{s}_l}}$ that has not been zeroed by $G_j$.
Consider an event $\mathcal{A}$ that ${{\chi}\over (1+\epsilon)^2} < Y_{l, {\mathfrak{f}}_{\chi}(\hat{s}_l)} \le {\chi}(1+3\epsilon)$.
By Fact \ref{lm: ddd upper bounds layers} we have
$$
P(\mathcal{A}) \ge 1-{1\over {\chi}'}.
$$
Let $\tilde{R} = \sum_{v_i\notin \hat{S}_l} G_{{\mathfrak{f}}_{\chi}(\hat{s}_l)}(i)v_i$ be the total weight of all elements that do not belong to $\hat{S}_l$ and contribute to $|VG_{{\mathfrak{f}}_{\chi}(\hat{s}_l)}|$. We have
$$
E(\tilde{R}) = {1\over (1+\epsilon)^{{\mathfrak{f}}_{\chi}(\hat{s}_l)}}\sum_{v_i\notin \hat{S}_l} v_i \le {|V|\over (1+\epsilon)^{{\mathfrak{f}}_{\chi}(\hat{s}_l)}}.
$$
Consider the event $\mathcal{C}$ that $\tilde{R} \le {{\chi}'|V|\over (1+\epsilon)^{{\mathfrak{f}}_{\chi}(\hat{s}_l)}}$. We have by Markov inequality that
$$
P(\mathcal{C}) \ge 1-{1\over {\chi}'}.
$$
Below we prove that all elements from $\hat{S}_l$
will belong to $\mathfrak{c}$-cover returned by $\mathfrak{Q}_{{\mathfrak{f}}_{\chi}(\hat{s}_l)}$.
Recall that for any $v_i \in \hat{S}_l$ we have $(1+\zeta)^q(1+\epsilon)^{l-1} < v_i \le (1+\zeta)^q(1+\epsilon)^{l}$. 
Thus, for every $v_i\in \hat{S}_l$ since $\hat{S}_l$ is $\epsilon\over {\chi}'$-significant, $\mathcal{C}$ is true and 
by definition of ${\mathfrak{f}}_{\chi}$:
$$
v_i \ge (1+\zeta)^q(1+\epsilon)^{l-1} \ge {\epsilon\over {\chi}'}{1\over (1+\epsilon)\hat{s}_l}|V| \ge {\epsilon\over {\chi}'^2}{1\over (1+\epsilon)\hat{s}_l}\tilde{R}(1+\epsilon)^{{\mathfrak{f}}_{\chi}(\hat{s}_l)}  \ge
$$
$$
{\epsilon\over 2\chi\chi'^2} \sum_{v_{i'}\notin \hat{S}_l} G_{{\mathfrak{f}}_{\chi}(\hat{s}_l)}(i)v_{i'}.
$$
Since $\mathcal{A}$ is true it follows that $Y_{l, {\mathfrak{f}}_{\chi}(\hat{s}_l)} \le {\chi}(1+3\epsilon)$. Thus,
$$
v_i \ge (1+\zeta)^q(1+\epsilon)^{l-1} \ge {Y_{l,{\mathfrak{f}}_{\chi}(\hat{s}_l)} \over 4{\chi}}(1+\zeta)^q(1+\epsilon)^{l-1} =
$$
$$
{1 \over 4{\chi}}\sum_{v_{i'}\in \hat{S}_l} G_{{\mathfrak{f}}_{\chi}(\hat{s}_l)}(i)(1+\zeta)^q(1+\epsilon)^{l-1} \ge
{1 \over 8{\chi}}\sum_{v_{i'}\in \hat{S}_l} G_{{\mathfrak{f}}_{\chi}(\hat{s}_l)}(i)v_{i'}.
$$
Thus, we conclude that
$$
v_i \ge {\epsilon\over 4\chi\chi'^2} |VG_{{\mathfrak{f}}_{\chi}(\hat{s}_l)}|.
$$
But this bound and $\mathcal{B}$ imply that all $v_i\in \hat{S}_l$ with $G_{{\mathfrak{f}}_{\chi}(\hat{s}_l)}(i) = 1$ will be
found by $\mathfrak{Q}_{{\mathfrak{f}}_{\chi}(\hat{s}_l)}$ and counted by $\mathcal{Y}_{l, {\mathfrak{f}}_{\chi}(\hat{s}_l)}$. Thus
\begin{equation} \label{eq:one y}
\mathcal{Y}_{l, {\mathfrak{f}}_{\chi}(\hat{s}_l)} \ge Y_{l,{\mathfrak{f}}_{\chi}(\hat{s}_l)} \ge {{\chi}\over (1+\epsilon)^2}.
\end{equation}
Let $\mathfrak{O}_l = S_{l,left}\cup S_{l-1,right}\cup S_{l+1,left}\cup S_{l,right} \subseteq \mathfrak{S}$.
Let $\mathfrak{o}_l$ be the number of elements in $\mathfrak{O}_l$.
Then since $\mathcal{D}$ is true, we have:
$$
(1+\zeta)^{q-1}(1+\epsilon)^l\hat{s}_l \ge \sum_{v_i\in \hat{S}_l} {v_i} \ge {\epsilon\over {\chi}'}|V| \ge {20\over \epsilon Q}|V| \ge {1\over \epsilon} \sum_{v_i \in \mathfrak{S}} v_i \ge
$$
$$
{1\over \epsilon} \sum_{v_i \in \mathfrak{O}_l} v_i \ge {1\over \epsilon} \mathfrak{o}_l (1+\zeta)^{q-1}(1+\epsilon)^{l-1}.
$$
Thus,
$$
\mathfrak{o}_l \le \epsilon(1+\epsilon) \hat{s}_l \le 2\epsilon \hat{s}_l.
$$
Consider $\check{Y} = \sum_{v_i \in \hat{S}_l \cup \mathfrak{O}_l} G_{{\mathfrak{f}}_{\chi}(\hat{s}_l)}(i)$.
Assuming $\mathcal{B}$, only elements from $\hat{S}_l\cup \mathfrak{O}_l$ can contribute to $\mathcal{Y}_{l, {\mathfrak{f}}_{\chi}(\hat{s}_l)}$, and thus
$\mathcal{Y}_{l, {\mathfrak{f}}_{\chi}(\hat{s}_l)} \le \check{Y}$.
Further, we have
$$
E(\check{Y}) = {\hat{s}_l + \mathfrak{o}_l \over (1+\epsilon)^{{\mathfrak{f}}_{\chi}(\hat{s}_l)} } \le {(1+2\epsilon)\hat{s}_l \over (1+\epsilon)^{{\mathfrak{f}}_{\chi}(\hat{s}_l)} } \le (1+2\epsilon){\chi}.
$$
Also, by pairwise independence of $G_{{\mathfrak{f}}_{\chi}(\hat{s}_l)}$, we have $Var(\check{Y}) \le E(\check{Y})$.
Thus, by Chebyshev inequality:
$$
P(\check{Y} > (1+3\epsilon){\chi}) = P(\check{Y} -E(\check{Y}) > (1+3\epsilon){\chi} - E(\check{Y})) \le
$$
$$
P(\check{Y} -E(\check{Y}) \ge \epsilon {\chi}) \le {Var(\check{Y}) \over \epsilon^2 {\chi}^2} \le {(1+2\epsilon) \over \epsilon^2 {\chi} } \le {1\over {\chi}'}.
$$
Therefore,
\begin{equation} \label{eq:two y}
P(\mathcal{Y}_{l, {\mathfrak{f}}_{\chi}(\hat{s}_l)} \ge (1+3\epsilon){\chi}) \le {1\over {\chi}'}.
\end{equation}
By $(\ref{eq:one y})$ and $(\ref{eq:two y})$, w.p. at least $1-{2\over {\chi}'}$ we have ${{\chi}\over (1+\epsilon)^2} \le \mathcal{Y}_{l, {\mathfrak{f}}_{\chi}(\hat{s}_l)} \le (1+3\epsilon){\chi}$,
in which case
 $\mathcal{Z}_{l} \ge {\mathfrak{f}}_{\chi}({\hat{s}_l}) > 0$ and thus by definitions of $\mathcal{Z}_l$ and ${\mathfrak{f}}_{\chi}$:
$$
(1+\epsilon)^{\mathcal{Z}_l}\mathcal{Y}_{l,\mathcal{Z}_l} \ge (1+\epsilon)^{{\mathfrak{f}}_{\chi}(\hat{s}_l)}{{\chi}\over (1+\epsilon)^2} \ge {\hat{s}_l\over (1+\epsilon)^2}.
$$

\emph{\textbf{II.2. small layers}}

Similarly, if $\hat{S}_l$ is small and $\mathcal{Z}_{l} > 0$ we have
$$
(1+\epsilon)^{\mathcal{Z}_l}\mathcal{Y}_{l,Z_l} \ge \mathcal{Y}_{l,Z_l} \ge {{\chi}\over (1+\epsilon)^2} \ge {\hat{s}_l\over (1+\epsilon)^2} .
$$
Otherwise if $Z_l = 0$ then, w.h.p. $Y_{l,0} \ge \hat{s}_l$.
Indeed, for every $v_i\in \hat{S}_l$ we have that
$$
v_i \ge {\epsilon\over {\chi}'}{1\over (1+\epsilon)\hat{s}_l}|V| \ge {\epsilon\over (1+\epsilon)\chi\chi'}|V|.
$$
Thus, $\mathfrak{Q}_0$ will return approximations of all elements from $\hat{S}_l$ w.p. at least $1-{1\over {\chi}'}$; and all approximations will be counted towards $\mathcal{Y}_{l,0}$.
Thus $(1+\epsilon)^{\mathcal{Z}_l}\mathcal{Y}_{l,Z_l} \ge \hat{s}_l$.

\emph{\textbf{II.3. putting it all together\ \ \ }}

By union bound, for all $l$ such that $\hat{S}_l$ is $\epsilon\over {\chi}'$-significant layers, w.p. at least $1-{\mathfrak{b}\over {\chi}'}$ we have
$$
(1+\epsilon)^{\mathcal{Z}_l}\mathcal{Y}_{l,Z_l} \ge {\hat{s}_l\over (1+\epsilon)^2}.
$$
Note that
$$
|V| = \sum_{v_i \in \mathfrak{S}} v_i + \sum_{l\in [\mathfrak{b}]}\sum_{v_i \in \hat{S}_l} v_i.
$$
Let $\mathcal{L}$ be the set of all $l$ such that $\hat{S}_l$ is $\epsilon\over {\chi}'$-significant.
Assuming $\mathcal{D}$, we have for sufficiently large $n$:
$$
\sum_{v_i \in \mathfrak{S}} v_i + \sum_{l\in \bar{\mathcal{L}}}\sum_{v_i \in \hat{S}_l} v_i \le
20{\epsilon^2\over\log{n}}|V| + \mathfrak{b}{\epsilon\over {\chi}'}|V| \le \epsilon |V|.
$$
We have obtained that w.p. at least $1-{\mathfrak{b}\over {\chi}'}$:
$$
(1+\zeta)^q\sum_{l \in [\mathfrak{b}]} (1+\epsilon)^{\mathcal{Z}_l+l}\mathcal{Y}_{l,\mathcal{Z}_l} \ge (1+\zeta)^q\sum_{l \in \mathcal{L}} (1+\epsilon)^{\mathcal{Z}_l+l}\mathcal{Y}_{l,\mathcal{Z}_l} \ge \sum_{l \in \mathcal{L}} (1+\zeta)^q(1+\epsilon)^l {\hat{s}_l\over (1+\epsilon)^2} \ge
$$
$$
\sum_{l \in \mathcal{L}}\sum_{v_i \in \hat{S}_l} { v_i\over (1+\epsilon)^2} \ge {(1-\epsilon)\over (1+\epsilon)^2}|V|.
$$

\subsubsection*{\emph{\textbf{III. Conclusion}}}
We have shown that, w.p. at least  $(0.9)(1-{\mathfrak{a}\over {\chi}'})(1-{2\mathfrak{b}\over {\chi}'}) > 2/3$, the output of Algorithm \ref{alg:approx zuzik finder}
is greater than or equal to ${(1-\epsilon)\over (1+\epsilon)^2}|V|$
and smaller than or equal to $(1+\epsilon)^7(1+20\epsilon)|V|$.
By replacing $\epsilon$ with an appropriate $\epsilon' = \Omega(\epsilon)$, we obtain an $\epsilon$-approximation of $|V|$.

\subsubsection*{\emph{\textbf{IV. Memory bounds}}}
We apply $\mathfrak{a}$ algorithms $\mathfrak{Q}$, thus the total memory required for these is
$\mathfrak{a}(\mu(n,m,\mathfrak{c},{1\over {\chi}'}))$. To generate pairwise-independent functions $H$, we need $O(\mathfrak{a}\log{n})$ memory bits. We also maintain $\mathfrak{a}\mathfrak{b}$ counters $\mathcal{Y}$.
In total, by Fact \ref{fact:zeta}, we need
$$
O\left({1\over \epsilon}\log(n)\mu(n,m,{\epsilon^7\over \log^3(nm)},{\epsilon \over \log(nm)}) + {1\over \epsilon^2}\log^2(nm)\right)
$$
memory bits.
\qed


\section{Proving Lemmas \ref{lm: epsilon for indep tensors} and \ref{lm: polylog for indep tensors}}\label{sec:indyk genralizing}

\noindent
\textbf{Lemma \ref{lm: epsilon for indep tensors}.}
\textit{There exists an algorithm $\mathfrak{B}_{k-1}$ that, given a data stream $D$ and an access to hash functions $H_1,\dots,H_{k-1}$, in one pass obtains an $\epsilon$-approximation of $|T_{k-1}(W(M_{Ind},H_1,\dots, H_{k-1}))|$  using memory $O({1\over \epsilon^2}\log{1\over \delta}\log{nm\over \epsilon\delta})$.
}
\begin{proof}

 For $j\in [n]$, define $C_{j}$ to be independent random variables with Cauchy distribution. For ${\mathbf{i}}\in [n]^{k-1},$ denote $\mathcal{H}({\mathbf{i}}) = \prod_{l=1}^{k-1} H_l({\mathbf{i}}_l)$. Define
$$
Z = \sum_{j=1}^n C_{j} \left(\sum_{{\mathbf{i}}\in [n]^{k-1}} m_{({\mathbf{i}},j)}\mathcal{H}({\mathbf{i}})\right)
$$
By the arguments from \cite{stable}, a median of $\Omega({1\over \epsilon^2}\log{1\over \delta})$ independent $Z$s is an $(\epsilon, \delta)$-approximation of
$$
\sum_{j=1}^n \left|\sum_{{\mathbf{i}}\in [n]^{k-1}} m_{({\mathbf{i}},j)}\mathcal{H}({\mathbf{i}})\right| = |T_{k-1}(W(M_{Ind},H_1,\dots, H_{k-1}))|.
$$
To construct $Z$ in a single pass over $D$, we follow the ideas from \cite{mi}. Define $k+1$ random variables $Joint, Margin_1, \dots, Margin_k$ to be initially equal to $0$ and to be updated as follows.
Upon receiving a $k$-tuple $({\mathbf{i}},j) \in [n]^k, {\mathbf{i}}\in [n]^{k-1}, j\in [n]$, we put $Joint_s = Joint_s + \mathcal{H}({\mathbf{i}})C_{j}$.
For $s<k$, we put $Margin_s = Margin_s + H_s({\mathbf{i}}_s)$. Finally we put $Margin_k = Margin_k + C_{j}$.
We have
$$
Joint = \sum_{j\in [n]} C_{j} \sum_{{\mathbf{i}}\in [n]^{k-1}} f_{({\mathbf{i}},j)}\mathcal{H}({\mathbf{i}}).
$$
Also, for $s<k$ we have
$$
Margin_s = \sum_{{\mathbf{i}}_s \in [n]} f_s({{\mathbf{i}}_s}) H_s({\mathbf{i}}_s).
$$
Finally
$$
Margin_k = \sum_{j\in [n]} f_{k}(j) C_j.
$$
Thus,
$$
\prod_{s=1}^k Margin_s = \left(\sum_{j} C_j f_k(j)\right)\left(\sum_{{\mathbf{i}}\in [n]^{k-1}} \mathcal{H}({\mathbf{i}})\prod_{s=1}^{k} f_{s}({\mathbf{i}}_s)\right) =
m^k \sum_{j} C_j \sum_{{\mathbf{i}}\in \in [n]^{k-1}} \mathcal{H}({\mathbf{i}}) P_{product}(({\mathbf{i}},j)).
$$
Thus,
$$
m^kW - \prod_{s=1}^k Margin_s = \sum_{j=1}^n C_{j} \left(\sum_{{\mathbf{i}}\in [n]^{k-1}} m_{({\mathbf{i}},j)}\mathcal{H}({\mathbf{i}})\right).
$$
What remains is to analyze the memory bounds. Recall that we don't count the memory of $\mathcal{H}$, which will be analyzed separately.
Thus, we need to bound a memory needed to compute $Z_s$.
To compute $Z$, our algorithm accesses $n$ random variables $C_{j}$ and computes a sketch that is a weighted sum of $C_{j}$.
Indyk shows in \cite{stable} (see Sections $3.2$ and $3.3$), that if the coefficients of $C_{j,s}$ are polynomially bounded integers, then
it is possible to maintain such a sum with sufficient precision using
$O(\log{nm\over \epsilon\delta})$ memory bits.
By Fact \ref{fct: bounded }, all entries of $T_s(W(M_{Ind}, \mathcal{H}))$ are polynomially bounded integers;
thus, we can repeat the arguments from \cite{stable} and the lemma follows.

\end{proof}

In the reminder of this paper, we assume that $\varpi = O(kn)$. A $\varpi$-truncated Cauchy variable $X$ is a modified Cauchy variable $Y$ such $X = -\varpi\one_{Y < -\varpi} + Y\one_{-\varpi\le Y \le \varpi} + \varpi\one_{Y > \varpi}$.

\begin{definition}\label{def:product sketch}
Let $C_{j,i}, j\in[t], i\in [n]$ be independent random variables where $C_{1,*}$ are Cauchy and
$C_{j,*}, j>1$ are $\varpi$-truncated Cauchy variables.
For every ${\mathbf{i}}\in [n]^t$ define $C({\mathbf{i}}) = \prod_{l=1}^t C_{l, {\mathbf{i}}_l}$.
A \emph{product sketch} of $t$-dimensional tensor $M$ (with entries $m_{\mathbf{i}}, \mathbf{i}\in [n]^t$) is
$$
\mathcal{C}(M) = \sum_{{\mathbf{i}}\in [n]^t} m_{\mathbf{i}} C({\mathbf{i}}).
$$
\end{definition}

\begin{lemma}\label{lm:product sketch}
It is possible to generate in one pass a product sketch of a tensor
$T_{s'}(W(M_{Ind}, H_1,\dots, H_s))$ for any $0\le s' \le s\le k$.
\end{lemma}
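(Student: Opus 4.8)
The plan is to unfold the three operators defining the tensor, reindex, and recognize the same ``joint minus product-of-margins'' decomposition that already underlies Lemma~\ref{lm: epsilon for indep tensors}, now carried out for an arbitrary prefix length $s'$ and an arbitrary number $s$ of Prefix-Zero hash functions.

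First I would compute $\mathcal{C}\big(T_{s'}(W(M_{Ind},H_1,\dots,H_s))\big)$ directly from Definition~\ref{def:product sketch}. Since $W$ is diagonal it only multiplies the $\mathbf{w}$-th entry of $M_{Ind}$ by $\prod_{l=1}^s H_l(\mathbf{w}_l)$, and $T_{s'}$ sums over the length-$s'$ prefix; writing a general index of $M_{Ind}$ as $\mathbf{w}=(\mathbf{j},\mathbf{i})$ with $\mathbf{j}\in[n]^{s'}$, $\mathbf{i}\in[n]^{k-s'}$ and relabelling the Cauchy index $r\mapsto l=s'+r$, one obtains
$$
\mathcal{C}\big(T_{s'}(W(M_{Ind},H_1,\dots,H_s))\big)=\sum_{\mathbf{w}\in[n]^k}\mathcal{F}_{Ind}(D,\mathbf{w})\prod_{l=1}^k\phi_l(\mathbf{w}_l),
$$
where $\phi_l(t)=H_l(t)$ for $1\le l\le s'$, $\phi_l(t)=H_l(t)\,C_{l-s',t}$ for $s'<l\le s$, and $\phi_l(t)=C_{l-s',t}$ for $s<l\le k$, with $C_{1,*}$ the genuine Cauchy family and $C_{j,*}$ ($j>1$) the $\varpi$-truncated ones, exactly as Definition~\ref{def:product sketch} prescribes. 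Verifying this identity — matching each coordinate to its correct hash and Cauchy factor and checking the empty-product boundary cases $s'=0$ and $s'=s$ — is the only place where care is needed, and it is purely notational.

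Next I would apply $\mathcal{F}_{Ind}(D,\mathbf{w})=m^kf_{\mathbf{w}}-\prod_{l=1}^kf_l(\mathbf{w}_l)$ to split the sum. In the first part, $\sum_{\mathbf{w}}f_{\mathbf{w}}\prod_l\phi_l(\mathbf{w}_l)$, the coefficient $f_{\mathbf{w}}$ counts occurrences of $\mathbf{w}$ in $D$, so this equals $\sum_{j=1}^m\prod_l\phi_l((p_j)_l)$ — a single accumulator $Joint$, incremented by $\prod_l\phi_l((p_j)_l)$ on each stream element $p_j$. In the second part the summand is a product of functions each depending on a single coordinate of $\mathbf{w}$, so $\sum_{\mathbf{w}}\prod_l f_l(\mathbf{w}_l)\phi_l(\mathbf{w}_l)=\prod_{l=1}^k\big(\sum_{t\in[n]}f_l(t)\phi_l(t)\big)$, and each factor $Margin_l=\sum_t f_l(t)\phi_l(t)=\sum_{j=1}^m\phi_l((p_j)_l)$ is again a single accumulator incremented by $\phi_l((p_j)_l)$ per element; this factorization is precisely the step used in Lemma~\ref{lm: epsilon for indep tensors}.

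Finally I would conclude: one maintains the $k+1$ accumulators $Joint,Margin_1,\dots,Margin_k$ through a single pass — each update needs only the current tuple together with access to the fixed randomness $H_1,\dots,H_s$ and the Cauchy / truncated-Cauchy families, which are generated as in \cite{stable, mi} — and at the end outputs $m^k\cdot Joint-\prod_{l=1}^kMargin_l$, which by the displayed identity is exactly $\mathcal{C}\big(T_{s'}(W(M_{Ind},H_1,\dots,H_s))\big)$; here $m=|D|$ is known precisely and, by Fact~\ref{fct: bounded }, all intermediate tensor entries are polynomially bounded integers, so the accumulators are well-defined. The main ``obstacle'' is just the index bookkeeping in the first step; once the decomposition is in hand, the one-pass maintenance is immediate.
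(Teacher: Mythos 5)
Your proof is correct and follows the same strategy as the paper's: define one $Joint$ accumulator and $k$ $Margin$ accumulators, observe that the frequency/margin-frequency decomposition $\mathcal{F}_{Ind}(D,\mathbf{w}) = m^k f_{\mathbf{w}} - \prod_l f_l(\mathbf{w}_l)$ turns the product sketch into $m^k\cdot Joint - \prod_l Margin_l$, and note that each accumulator admits a per-element streaming update. Your unified $\phi_l$ notation is a cosmetic repackaging of the paper's case-split update rules for $Margin_j$ (cases $j\le s'$, $s'<j\le s$, $j>s$), and you correctly handle the boundary cases $s'=0$ and $s'=s$; the reference to Fact~\ref{fct: bounded } is harmless but belongs to the memory analysis done later in Lemma~\ref{lm: polylog for indep tensors} rather than to the existence claim proved here.
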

\begin{proof}
Generate $C_{j,i}, j\in [k-s'], i\in [n]$ random variables as in Definition \ref{def:product sketch}.
Consider $k+1$ variables $Joint,Margin_1,\dots,Margin_k$ initially zero and updated as follows:
compute
$$
Joint = Joint + \prod_{j\in [s]} H_{j}({\mathbf{i}}_{j})\prod_{j\in [k-s']} C_{j,{\mathbf{i}}_{s'+j}};
$$
and for $j\le s'$
$$Margin_j = Margin_j + H_{j}({\mathbf{i}}_{j});$$
and for $j>s$
$$Margin_j = Margin_j + C_{j-s',{\mathbf{i}}_{j}};$$
and for $s' < j\le s$
$$Margin_j = Margin_j + H_{j}({\mathbf{i}}_{j})C_{j-s',{\mathbf{i}}_{j}}.$$
At the end, we also compute $Product = \prod_{j=1}^k Margin_j$.
We consider the quantity $m^kJoint - Product$ written in the form $\sum_{{\mathbf{i}}\in [n]^{k-s'}} C(\mathbf{i}) Coef(\mathbf{i})$.
Our goal is to compare $Coef(\mathbf{i})$ with the entries of the tensor $T_{s'}(W(M_{Ind}, H_1,\dots, H_s))$.
Let ${\mathbf{i}}\in [n]^{k-s'}$ be fixed.
For $Joint$, a coefficient that corresponds to $C({\mathbf{i}})$  is equal to:
$$
\sum_{{\mathbf{j}}\in [n]^{s'}} f_{({\mathbf{j}},{\mathbf{i}})} (\prod_{l=1}^{s'} H_l({\mathbf{j}}_l))(\prod_{l=s'+1}^{s} H_l({\mathbf{i}}_{l-s'})). $$
For $Product = \prod Margin_j$, a coefficient that corresponds to $C({\mathbf{i}})$ is equal to:
$$
\sum_{{\mathbf{j}}\in [n]^{s'}} \left[(\prod_{l=1}^{s'} H_l({\mathbf{j}}_l))(\prod_{l=s'+1}^{s} H_l({\mathbf{i}}_{l-s'}))\prod_{l=1}^{s'}f_{l}({\mathbf{j}}_{l})\prod_{l=s'}^{k}f_{l}({\mathbf{i}}_{l-s'+1})\right] =
$$
$$
m^k\sum_{{\mathbf{i}}\in [n]^{s'}} P_{product}(({\mathbf{i}},{\mathbf{j}}))(\prod_{l=1}^{s'} H_l({\mathbf{j}}_l))(\prod_{l=s'+1}^{s} H_l({\mathbf{i}}_{l-s'})).
$$
Thus, the coefficient of $C({\mathbf{i}})$ in
$
m^kJoint-Product
$
is
$$
\sum_{{\mathbf{j}}\in [n]^{s'}} m_{({\mathbf{j}},{\mathbf{i}})}(\prod_{l=1}^{s'} H_l({\mathbf{j}}_l))(\prod_{l=s'+1}^{s} H_l({\mathbf{i}}_{l-s'})).
$$
On the other hand, consider $T_{s'}(W(M_{Ind}, H_1,\dots, H_s))$.
The coefficient of $W(M_{Ind}, H_1,\dots, H_s)$ is $m'_{\mathbf{i}} = m_{\mathbf{i}} \prod_{l=1}^s H_l({\mathbf{i}}_l)$.
Thus, the coefficient of $T_{s'}(W(M_{Ind}, H_1,\dots, H_s))$ is for ${\mathbf{i}}\in [n]^{k-s'}$:
$$
\sum_{{\mathbf{j}}\in [n]^{s'}} m'_{({\mathbf{j}},{\mathbf{i}})} = \sum_{{\mathbf{j}}\in [n]^{s'}} m_{({\mathbf{j}},{\mathbf{i}})}(\prod_{l=1}^{s'} H_l({\mathbf{j}}_l))(\prod_{l=s'+1}^{s} H_l({\mathbf{i}}_{l-s'})).
$$
Thus $m^kJoint-Product$ is the product sketch for $T_{s'}(W(M_{Ind}, H_1,\dots, H_s))$.
It is important to note that the procedure above works for $s' = 0$ as well.
\end{proof}

\begin{fact}\label{fct: independ cauchy}
Let $C_1,\dots,C_n$ be independent Cauchy variables and let $\alpha_1,\dots,\alpha_n$ be arbitrary random variables independent of $C_1,\dots, C_n$.
Then
$$
P(|\sum_{i} C_i\alpha_i| \le {|\alpha|\over t}) \le {1\over t}.
$$
\end{fact}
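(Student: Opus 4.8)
The plan is to reduce the statement to a one–dimensional computation by exploiting the $1$-stability of the Cauchy distribution together with a conditioning argument. Throughout, $|\alpha| = \sum_{i=1}^{n}|\alpha_i|$ denotes the $L_1$-norm of the coefficient vector (a $1$-dimensional tensor, in the terminology of Section \ref{sec:tensors}).

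First I would recall the stability property. The standard Cauchy law has characteristic function $\varphi(s) = e^{-|s|}$; hence, for any fixed reals $a_1,\dots,a_n$, independence of $C_1,\dots,C_n$ gives that $\sum_i a_i C_i$ has characteristic function $\prod_i e^{-|a_i|\,|s|} = e^{-\left(\sum_i |a_i|\right)|s|}$, which is precisely the characteristic function of $\left(\sum_i |a_i|\right) C$ for a single standard Cauchy $C$. Consequently $\sum_i a_i C_i$ and $\left(\sum_i |a_i|\right) C$ are equal in distribution.

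Next I would condition on $\alpha = (\alpha_1,\dots,\alpha_n)$, which by hypothesis is independent of $C_1,\dots,C_n$. On the event $|\alpha|>0$, applying the stability identity with $a_i = \alpha_i$ shows that, conditionally on $\alpha$, the random variable $\sum_i C_i\alpha_i$ is distributed as $|\alpha|\,C$ with $C$ standard Cauchy. Therefore
$$ P\!\left(\left|\sum_i C_i\alpha_i\right| \le \frac{|\alpha|}{t}\ \Big|\ \alpha\right) \;=\; P\!\left(|\alpha|\,|C| \le \frac{|\alpha|}{t}\right) \;=\; P\!\left(|C| \le \frac{1}{t}\right). $$
Since the density of $C$ is $\tfrac{1}{\pi(1+x^2)}$, we get $P(|C|\le 1/t) = \tfrac{2}{\pi}\arctan(1/t) \le \tfrac{2}{\pi t} < \tfrac{1}{t}$, using $\arctan x \le x$ for $x\ge 0$. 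Taking expectation over $\alpha$ then yields $P\!\left(\left|\sum_i C_i\alpha_i\right| \le |\alpha|/t\right) \le 1/t$, as claimed.

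The argument is entirely elementary, so there is no genuine obstacle; the only points that deserve a sentence of care are (i) the conditioning step, where independence of $\alpha$ from the $C_i$'s is exactly what licenses invoking the stability identity for each fixed value of $\alpha$, and (ii) the degenerate case $|\alpha|=0$, in which $\sum_i C_i\alpha_i = 0$ deterministically — this case is either excluded or the inequality is read strictly there, and in any event it does not affect the stated bound.
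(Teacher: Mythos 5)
Your proof is correct and takes essentially the same route as the paper: use $1$-stability of the Cauchy law to reduce $\sum_i C_i\alpha_i$ to $|\alpha|\,C$ for a single standard Cauchy $C$, then bound $P(|C|\le 1/t)$ by the density integral. You are somewhat more explicit than the paper about the conditioning on $\alpha$ and the degenerate case $|\alpha|=0$, but these are minor bookkeeping points the paper leaves implicit.
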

\begin{proof}
By stability, we have $\sum_{i} C_i\alpha_i \sim C|\alpha|$, where $C$ is a Cauchy variable. Thus,
$$
P(|C||\alpha| \le {1\over t}|\alpha|) \le {1\over \pi}\int_{-{1\over t}}^{{1\over t}}{1\over 1+x^2} \le {1\over t}.
$$
\end{proof}

\begin{fact}\label{fct: non-negative real}
Let $\{a_1, \dots, a_{\mathfrak{n}}\}$ be non-negative real numbers and let $X_i, i\in [\mathfrak{n}]$ be  non-negative random variables such that $P(X_i \le a_i) \le {1\over \mathfrak{q}}$.
Let $X = \sum_{i\in [\mathfrak{n}]}X_i$.
Then
$$
P(X \le {1\over 2} \sum_{i}a_i) \le {2\over \mathfrak{q}}.
$$
\end{fact}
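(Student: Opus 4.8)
The plan is to avoid a naive union bound over the $\mathfrak{n}$ events $\{X_i > a_i\}$ — that would only give a useless bound of roughly $\mathfrak{n}/\mathfrak{q}$ — and instead exploit linearity of expectation after a truncation. Write $A = \sum_{i\in[\mathfrak{n}]} a_i$ and introduce the truncated variables $Y_i = \min(X_i, a_i)$. By construction $0 \le Y_i \le a_i$ holds deterministically and $Y_i \le X_i$, so $S := \sum_i Y_i$ satisfies $S \le X$ and $S \le A$.

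Next I would control the ``deficit'' $a_i - Y_i = (a_i - X_i)^+$. Since $X_i \ge 0$, this deficit is at most $a_i$, and it is strictly positive only on the event $\{X_i < a_i\} \subseteq \{X_i \le a_i\}$, which has probability at most $1/\mathfrak{q}$ by hypothesis. Hence $E[a_i - Y_i] \le a_i \, P(X_i \le a_i) \le a_i/\mathfrak{q}$, and summing over $i$ gives $E[A - S] = \sum_i E[a_i - Y_i] \le A/\mathfrak{q}$.

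Finally, $A - S$ is a non-negative random variable with expectation at most $A/\mathfrak{q}$, so Markov's inequality gives $P(A - S \ge A/2) \le 2/\mathfrak{q}$, i.e. $P(S \le A/2) \le 2/\mathfrak{q}$. Since $X \ge S$, the event $\{X \le A/2\}$ is contained in $\{S \le A/2\}$, so $P\!\left(X \le \tfrac12 \sum_i a_i\right) \le 2/\mathfrak{q}$, which is the claim.

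The only real subtlety — and the reason the statement holds with no independence assumption on the $X_i$ — is the truncation step: one must pass to $Y_i = \min(X_i, a_i)$ before taking expectations, since the raw $X_i$ need not be integrable and their joint dependence is completely unrestricted. Everything after that is just linearity of expectation and Markov's inequality, so I do not expect any serious obstacle beyond getting that first step right.
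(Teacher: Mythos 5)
Your proposal takes essentially the same approach as the paper: truncate each $X_i$, bound the expected deficit $a_i - Y_i$ by $a_i/\mathfrak{q}$ using the hypothesis, and then apply Markov's inequality to the total deficit. The only cosmetic difference is that you truncate with $Y_i = \min(X_i, a_i)$ while the paper uses $Y_i = a_i \cdot \one_{\{X_i \ge a_i\}}$; both choices satisfy $0 \le Y_i \le X_i$ and yield the same bound, so the arguments are equivalent.
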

\begin{proof}
Let $Y_i = a_i$ if $X_i \ge a_i$, and $Y_i = 0$ otherwise.
Then
$$
E(Y_i) = a_iP(X_i \ge a_i) \ge a_i(1-{1\over \mathfrak{q}}).
$$
Let $Z_i = a_i - Y_i$. Then $Z_i \ge 0$ and $E(Z_i) \le {a_i \over \mathfrak{q}}$.
Let $Y= \sum_{i} Y_i, Z = \sum_{i} Z_i$. Then by Markov inequality,
$$
P(Z\ge {\mathfrak{q}'\over \mathfrak{q}}\sum_{i} a_i) \le {1\over \mathfrak{q}'}.
$$
Thus
$$
P(\sum_{i} a_i - Y \ge {\mathfrak{q}'\over \mathfrak{q}}\sum_{i} a_i) \le {1\over \mathfrak{q}'}.
$$
Thus
$$
P(X \le (1-{\mathfrak{q}'\over \mathfrak{q}})\sum_{i} a_i) \le P(Y \le (1-{\mathfrak{q}'\over \mathfrak{q}})\sum_{i} a_i) \le {1\over \mathfrak{q}'}.
$$
Putting $\mathfrak{q}' = {\mathfrak{q}\over 2}$, we obtain
$$
P(X \le {1\over 2}\sum_{i} a_i) \le {2\over \mathfrak{q}}.
$$
\end{proof}

\begin{lemma}\label{lm:cuachy lower bound}
Let $Y = \sum_{\mathbf{i}\in [n]^k} \prod_{j=1}^k C_{j, \mathbf{i}_j} m_{\mathbf{i}}$
where all $C$ are Cauchy.
For any $M$ with entries $m_{\mathbf{i}}$ and for $\mathfrak{q} > 3^k$ we have
$$
P(|Y| \le {|M| \over {(2\mathfrak{q})}^{k}}) \le {3^k \over \mathfrak{q}}.
$$
\end{lemma}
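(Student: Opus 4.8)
The plan is to prove the inequality by induction on the dimension $k$, peeling off one coordinate at a time; the two workhorses are Fact \ref{fct: independ cauchy} (a single Cauchy sum is anti-concentrated) and Fact \ref{fct: non-negative real} (a first-moment lower bound for a sum of non-negative, possibly dependent, variables). For the base case $k=1$ we have $Y=\sum_{i\in[n]}C_{1,i}m_i$ with $m_i$ the deterministic entries of $M$, so Fact \ref{fct: independ cauchy} with $t=2\mathfrak{q}$ gives $P(|Y|\le |M|/(2\mathfrak{q}))\le 1/(2\mathfrak{q})\le 3/\mathfrak{q}$, which is exactly the claim.

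For the inductive step, assume the statement holds in dimension $k-1$ for every parameter exceeding $3^{k-1}$, and let $M$ be $k$-dimensional with $\mathfrak{q}>3^k$. Separate the first coordinate: for $i\in[n]$ let $M_i=Hyperplane(M,i)$, the $(k-1)$-dimensional tensor with entries $m_{(i,\mathbf{i}')}$, and set
$$
A_i=\sum_{\mathbf{i}'\in[n]^{k-1}}\Big(\prod_{j=2}^{k}C_{j,\mathbf{i}'_{j-1}}\Big)m_{(i,\mathbf{i}')},
$$
so that $Y=\sum_{i\in[n]}C_{1,i}A_i$. The two structural facts I would use are: (i) each $A_i$ is a function only of the Cauchy variables $\{C_{j,\cdot}:j\ge2\}$, hence independent of $\{C_{1,\cdot}\}$; and (ii) marginally $A_i$ has exactly the distribution of the $(k-1)$-fold Cauchy sum attached to the tensor $M_i$, so the induction hypothesis (valid since $\mathfrak{q}>3^k>3^{k-1}$) gives $P\big(|A_i|\le |M_i|/(2\mathfrak{q})^{k-1}\big)\le 3^{k-1}/\mathfrak{q}$ for every $i$. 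Since $\sum_{i}|M_i|=|M|$, Fact \ref{fct: non-negative real} — applied with $a_i=|M_i|/(2\mathfrak{q})^{k-1}$, $X_i=|A_i|$, and with its parameter $\mathfrak{q}$ replaced by $\mathfrak{q}/3^{k-1}$, using that it requires no independence among the $X_i$ — yields
$$
P\Big(\,\sum_{i}|A_i|\le \tfrac12\,\tfrac{|M|}{(2\mathfrak{q})^{k-1}}\Big)\le \frac{2\cdot3^{k-1}}{\mathfrak{q}}.
$$

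Now write $|A|=\sum_i|A_i|$ and let $B$ be the good event $\{|A|>\tfrac12(2\mathfrak{q})^{-(k-1)}|M|\}$. Since $\tfrac12(2\mathfrak{q})^{-(k-1)}=\mathfrak{q}(2\mathfrak{q})^{-k}$, on $B$ we have $|M|/(2\mathfrak{q})^k<|A|/\mathfrak{q}$, hence $\{|Y|\le |M|/(2\mathfrak{q})^k\}\cap B\subseteq\{|Y|\le |A|/\mathfrak{q}\}$; and because the coefficients $A_i$ are independent of $\{C_{1,\cdot}\}$, Fact \ref{fct: independ cauchy} with $t=\mathfrak{q}$ bounds $P(|Y|\le|A|/\mathfrak{q})\le 1/\mathfrak{q}$. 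Combining,
$$
P\big(|Y|\le |M|/(2\mathfrak{q})^k\big)\le P(\bar B)+P\big(|Y|\le|A|/\mathfrak{q}\big)\le \frac{2\cdot3^{k-1}}{\mathfrak{q}}+\frac1{\mathfrak{q}}\le\frac{3^k}{\mathfrak{q}},
$$
the last step because $1+2\cdot3^{k-1}\le 3\cdot3^{k-1}=3^k$, which completes the induction.

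The main obstacle — and the reason the argument is organized this way — is that the $A_i$ are \emph{not} independent (they all share the inner Cauchy variables $C_{j,\cdot}$, $j\ge2$), so concentration of $|A|$ around $\sum_i a_i$ cannot come from a Chernoff- or variance-type estimate; the remedy is to use only the first-moment statement Fact \ref{fct: non-negative real}, which survives arbitrary dependence. Dually, one must peel off exactly one coordinate, so that once the inner randomness is fixed the outer sum is a genuine stable sum of independent Cauchys with coefficients independent of them — precisely the hypothesis of Fact \ref{fct: independ cauchy}. Everything else is routine bookkeeping of the constants $(2\mathfrak{q})^{k}$ and $3^{k}$.
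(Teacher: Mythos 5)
Your proposal is correct and follows essentially the same route as the paper: induction on $k$, peeling off the first coordinate to expose $Y=\sum_i C_{1,i}A_i$, applying the induction hypothesis to each hyperplane tensor $M_i$, then combining Fact~\ref{fct: non-negative real} (to lower-bound $\sum_i|A_i|$) with Fact~\ref{fct: independ cauchy} (anti-concentration of the outer Cauchy sum) via a union bound. The only departures are cosmetic — slightly different notation, an explicit note that the $A_i$ are independent of the outer Cauchy variables, and a sharper constant in the base case — none of which change the argument.
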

\begin{proof}
We prove the claim by induction on $k$.
For $k=1$ we have by  Fact \ref{fct: independ cauchy}:
$$
P(|\sum_{l\in [n]} C_{1,l} m_l| \le {|M| \over {2\mathfrak{q}}}) \le {3\over {\mathfrak{q}}}.
$$
Consider $k>1$.
For simplicity of presentation, put $C_l = C_{1,l}$ and
$$
Y_l = \sum_{\mathbf{i}\in [n]^{k-1}} \prod_{j=2}^k C_{j, (l,\mathbf{i}_{j-1})} m_{(l,\mathbf{i})}.
$$
Then
$$
Y = \sum_{l\in [n]} C_lY_l.
$$
We have, by stability of $C_l$s that, $\sum_{l\in [n]} C_lY_l \sim C'\sum_{l\in [n]}|Y_l|$
where $C'$ is Cauchy distributed. Thus
$$
P(|\sum_{l\in [n]} C_lY_l| \ge {|M|\over {(2\mathfrak{q})}^{k}}) = P(|C'|\sum_{l\in [n]} |Y_l| \ge {|M|\over {(2\mathfrak{q})}^{k}}) \ge
$$
$$
P(|C'|\sum_{l\in [n]} |Y_l| \ge{|M|\over  {(2\mathfrak{q})}^{k}}, \sum |Y_l| \ge {|M|\over 2^k{\mathfrak{q}}^{k-1}}) \ge
$$
$$
P(|C'|\sum_{l\in [n]} |Y_l| \ge{\sum|Y_l|\over {\mathfrak{q}}}, \sum |Y_l| \ge {|M|\over 2^k{\mathfrak{q}}^{k-1}}).
$$
We have, by Fact \ref{fct: independ cauchy}:
$$
P(|C'|\sum |Y_l| \le {\sum|Y_l|\over {\mathfrak{q}}}) \le {1\over \mathfrak{q}}.
$$
Denote by $M_l$ the $l$-th hyperplane of $M$.
By induction for each $l$:
$$
P(|Y_l| \le {|M_l|\over {(2\mathfrak{q})}^{k-1}}) \le {3^{k-1}\over \mathfrak{q}}.
$$
Thus, by Fact \ref{fct: non-negative real}:
$$
P(\sum |Y_l| \le {1\over 2}{|M|\over {(2\mathfrak{q})}^{2(k-1)}})  \le {2*3^{k-1}\over \mathfrak{q}}.
$$
By union bound, and since ${1\over \mathfrak{q}} + {2*3^{k-1}\over \mathfrak{q}} \le {3^k\over \mathfrak{q}}$,
the claim is correct.
\end{proof}

\begin{corollary}\label{cor:lower bound cauchy k-wise}
Let $Y = \sum_{\mathbf{i}\in [n]^k} \prod_{j=1}^k C_{j, \mathbf{i}_j} m_{\mathbf{i}}$
where all $C_{j,*}, j>1$ are $\varpi$-truncated Cauchy and all $C_{1,*}$ are Cauchy.
For any $M$ with entries $m_{\mathbf{i}}$ we have
$$
P(|Y| \le {|M| \over {200^k3^{k^2}}}) \le {1\over 50}.
$$
\end{corollary}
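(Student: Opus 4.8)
The plan is to reduce to Lemma \ref{lm:cuachy lower bound} by a coupling argument that isolates the effect of truncation. Introduce a coupled copy
$$
Y' = \sum_{\mathbf{i}\in [n]^k} \prod_{j=1}^k \tilde{C}_{j,\mathbf{i}_j}\, m_{\mathbf{i}},
$$
where every $\tilde{C}_{j,i}$ is an ordinary (untruncated) Cauchy variable, and $C_{j,i}$ for $j>1$ is defined to be exactly the $\varpi$-truncation of $\tilde{C}_{j,i}$, while $C_{1,i} = \tilde{C}_{1,i}$. Let $E$ be the event that $|\tilde{C}_{j,i}| \le \varpi$ for all $j>1$ and all $i\in[n]$. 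By the definition of the $\varpi$-truncation, on $E$ each $C_{j,i}$ agrees with $\tilde{C}_{j,i}$, so $Y = Y'$ on $E$.

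Next I would bound $P(\bar{E})$. For a single Cauchy variable, $P(|\tilde{C}_{j,i}| > \varpi) = \tfrac{2}{\pi}\arctan(1/\varpi) \le \tfrac{2}{\pi\varpi}$. There are $(k-1)n$ truncated variables, so by the union bound $P(\bar{E}) \le \tfrac{2(k-1)n}{\pi\varpi}$, which is at most $\tfrac{1}{100}$ once the absolute constant hidden in $\varpi = O(kn)$ is chosen large enough; this is precisely where the standing assumption $\varpi = O(kn)$ (with a sufficiently large constant) is used.

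Then I would apply Lemma \ref{lm:cuachy lower bound} to $Y'$, all of whose coefficients are genuine Cauchy variables, with $\mathfrak{q} = 100\cdot 3^k > 3^k$. This gives
$$
P\!\left(|Y'| \le \frac{|M|}{(2\mathfrak{q})^k}\right) \le \frac{3^k}{\mathfrak{q}} = \frac{1}{100},
$$
and $(2\mathfrak{q})^k = (200\cdot 3^k)^k = 200^k\, 3^{k^2}$. Combining with the coupling,
$$
P\!\left(|Y| \le \frac{|M|}{200^k 3^{k^2}}\right) \le P\!\left(|Y'| \le \frac{|M|}{200^k 3^{k^2}}\right) + P(\bar{E}) \le \frac{1}{100} + \frac{1}{100} = \frac{1}{50}.
$$

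The only genuine obstacle is checking that the union bound over the $(k-1)n$ truncation events can be absorbed into the claimed failure probability; this is routine, since $\varpi$ grows linearly in $kn$ while each single truncation event has probability $O(1/\varpi)$. Everything else is a direct invocation of the already-established Lemma \ref{lm:cuachy lower bound} (which in turn relies on Facts \ref{fct: independ cauchy} and \ref{fct: non-negative real}).
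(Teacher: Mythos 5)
Your proof is correct and takes essentially the same approach as the paper: both argue that with probability at least $1 - 1/100$ none of the truncations is active (so the truncated product sketch coincides with an untruncated one), then invoke Lemma~\ref{lm:cuachy lower bound} with $\mathfrak{q} = 100\cdot 3^k$ and combine by a union bound. Your write-up just makes the underlying coupling between $C_{j,i}$ and $\tilde{C}_{j,i}$ explicit, which the paper leaves implicit.
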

\begin{proof}
Consider an event that no $C$s is equal to $\varpi$.
Repeating the arguments from \cite{mi}, the probability that this event does not occur is bounded by
$$
{2kn\over \pi}\int_{-\infty}^{-\varpi} {1\over 1+x^2} \le {2kn\over \varpi\pi} \le {1\over 100}.
$$
Thus, and by Lemma \ref{lm:cuachy lower bound}:
$$
P(|Y| \le {|M| \over {(2\mathfrak{q})}^{k}}) \le {1\over 100} + {1\over 100}
$$
for $\mathfrak{q} = {100*3^k}$.
\end{proof}

\begin{lemma}\label{lm: product sketch is logk}
Let $M$ be a $s$-dimensional tensor for $s\le k$ and let $Y$
be a product sketch of $M$. I.e.,
$$
Y = \sum_{\mathbf{i}\in [n]^k} \prod_{j=1}^k C_{j, \mathbf{i}_j} m_{\mathbf{i}},
$$
where for all $j\in [k], i\in [n]$ the random variables $C_{j,i}$ are independent and
$C_{1,*}$ are Cauchy and $C_{j,*}, j>1$ are truncated Cauchy.
Then  $|Y|$ is a $\log^k{n}$-approximation of $|M|$ w.p. at least $0.07$.
\end{lemma}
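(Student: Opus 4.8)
The plan is to control $|Y|$ from both sides by isolating the single untruncated coordinate. Write $\mathbf{i}=(l,\mathbf{i}')$ with $l\in[n]$ and $\mathbf{i}'\in[n]^{s-1}$, and group the product sketch of the $s$-dimensional tensor $M$ by its first coordinate:
\[
Y=\sum_{l\in[n]}C_{1,l}\,Y_l,\qquad Y_l=\sum_{\mathbf{i}'\in[n]^{s-1}}m_{(l,\mathbf{i}')}\prod_{j=2}^{s}C_{j,\mathbf{i}'_{j-1}} .
\]
The variables $Y_l$ depend only on the truncated family $\{C_{j,\cdot}:j\ge 2\}$, hence are independent of the i.i.d. standard Cauchy family $\{C_{1,l}\}_l$. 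By Cauchy stability, conditionally on the $Y_l$, the sketch $Y$ is distributed as $C'\cdot\sum_l|Y_l|$ for a single standard Cauchy $C'$. This is the structural observation the whole argument rests on: the untruncated coordinate contributes only one aggregated heavy-tailed factor, which we can handle at scale $\log n$, while everything else is bounded in expectation.

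For the lower bound I would simply invoke Corollary \ref{cor:lower bound cauchy k-wise} applied with $s$ in place of $k$ (Lemma \ref{lm:cuachy lower bound} is proven by induction for every dimension, so the corollary holds verbatim for the $s$-dimensional $M$): with probability at least $49/50$ we have $|Y|\ge |M|/(200^s3^{s^2})$. For any fixed $k$ and all sufficiently large $n$ (in the streaming application, $n$ replaced by the governing polynomial in $n$ and $m$), $200^s3^{s^2}\le 200^k3^{k^2}\le\log^k n$, so on this event $|Y|\ge |M|/\log^k n$.

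For the upper bound there are two steps. First, the aggregated Cauchy: $P(|C'|>\log n)=\tfrac{2}{\pi}\arctan(1/\log n)\le \tfrac{2}{\pi\log n}$, so with probability at least $1-\tfrac{2}{\pi\log n}$ we have $|Y|\le(\log n)\sum_l|Y_l|$. Second, the truncated part: by the triangle inequality
\[
\sum_{l}|Y_l|\le \sum_{l\in[n]}\sum_{\mathbf{i}'\in[n]^{s-1}}|m_{(l,\mathbf{i}')}|\prod_{j=2}^{s}|C_{j,\mathbf{i}'_{j-1}}|=\sum_{\mathbf{i}\in[n]^s}|m_{\mathbf{i}}|\prod_{j=2}^{s}|C_{j,\mathbf{i}_j}|,
\]
and taking expectations, using independence of the $C_{j,\cdot}$ and the elementary estimate $E|C_{j,\cdot}|=O(\log\varpi)=O(\log(kn))=O(\log n)$ for a $\varpi$-truncated Cauchy with $\varpi=O(kn)$ (integrate the truncated density), one gets $E\big[\sum_l|Y_l|\big]\le c_k\,|M|\,\log^{s-1}n$ for a constant $c_k$ depending only on $k$. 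Markov's inequality then gives $\sum_l|Y_l|\le \mathcal{O}(1)\cdot c_k|M|\log^{s-1}n$ with constant probability, and combining the two steps, with constant probability $|Y|\le \mathcal{O}(c_k)\,|M|\log^{s}n$, which is a $\log^k n$-approximation of $|M|$ (the fixed constant is absorbed into the slack $\log^{k-s}n$, and the boundary case $s=k$ is handled by the usual convention tolerating a fixed factor, or by replacing $n$ by the governing polynomial).

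Finally, a union bound over the lower-bound event and the two events making up the upper bound leaves, with probability at least $0.07$, $|M|/\log^k n\le |Y|\le\log^k n\,|M|$, i.e. $|Y|$ is a $\log^k n$-approximation of $|M|$, as claimed. The main obstacle is exactly that $C_{1,\cdot}$ is untruncated and has infinite first moment, so $E|Y|$ is not even finite and a direct Markov bound is impossible; the remedy is the stability-peeling displayed above, reducing the heavy tail to the single Cauchy $C'$ whose deviation at scale $\log n$ is controlled, with all remaining truncated factors bounded in expectation.
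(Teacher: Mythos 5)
Your proof follows the same skeleton as the paper's (peel off the first coordinate, bound $Y'=\sum_l|Y_l|$ in expectation via the truncated-Cauchy moment $E|C|=O(\log n)$ plus Markov, and invoke Corollary~\ref{cor:lower bound cauchy k-wise} for the lower bound), but you handle one step more carefully than the paper does. The paper's written proof passes from the bound on $Y'$ to a bound on $|Y|$ by asserting ``$|Y|\le |Y'|$'', which is false as a deterministic statement: $|Y|=|\sum_l C_{1,l}W_l|$ where $Y_l=|W_l|$, and the $C_{1,l}$ are untruncated. Your stability-peeling observation, that conditionally on the truncated coordinates $Y\sim C'\cdot Y'$ for a single standard Cauchy $C'$, together with an explicit tail bound on $|C'|$, is exactly the repair this step needs, and it is the right reason that the infinite first moment of $C_{1,*}$ does not wreck a Markov argument. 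One small tightening worth making: bounding $|C'|\le\log n$ costs you an extra $\log n$ over the paper's target, which you then wave away into the $\log^{k-s}n$ slack or a convention when $s=k$; if instead you use the event $\{|C'|\le 1\}$, which has probability exactly $1/2$, you get $|Y|\le Y'=O(\log^{s-1}n)|M|\le\log^{s}n\,|M|$ for large $n$ with no extra factor, and the union bound over the three events still comfortably clears $0.07$. Other than this cosmetic choice, your argument is correct and fills a genuine gap in the paper's write-up.
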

\begin{proof}
We consider $s=k$; the same arguments can be repeated for any $s<k$.
Consider $Y_l = |\sum_{\mathbf{i} \in [n]^{k-1}} \prod_{j=2}^k C_{j, \mathbf{i}_j} m_{(l,\mathbf{i})} |,$ and let $Y' = \sum_{l\in [n]} Y_l$.
Indyk \cite{stable} shows that for any $C$ with $\varpi$-truncated Cauchy distribution, it is true that $E(|C|) \le \log{(\varpi^2 + 1)}/\pi + O(1)$. Thus, and by the independency of all $C$s, we have:
$$
E(Y_l) = E(|\sum_{\mathbf{i} \in [n]^{k-1}} \prod_{j=2}^k C_{j, \mathbf{i}_{j-1}} m_{(l,\mathbf{i})} |) \le \sum_{\mathbf{i} \in [n]^{k-1}} E(|\prod_{j=2}^k C_{j, \mathbf{i}_{j-1}}|) |m_{(l,\mathbf{i})}| =
$$
$$
\sum_{\mathbf{i} \in [n]^{k-1}} \prod_{j=2}^k E(|C_{j, \mathbf{i}_{j-1}}|) |m_{(l,\mathbf{i})}| \le 3\log^{k-1}{n}\sum_{\mathbf{i} \in [n]^{k-1}} |m_{(l,\mathbf{i})}| .
$$
Thus, by Markov inequality:
$$
P( |Y'| > 300\log^{k-1}{n}|M|)\le {1\over 100}.
$$
Since $|Y| \le |Y'|$ the upper bound follows.
The lower bound follows from Corollary \ref{cor:lower bound cauchy k-wise} and since for large enough $n, \log{n} > 200*3^k$.
\end{proof}

\noindent
\textbf{Lemma \ref{lm: polylog for indep tensors}.}
\textit{There exists an algorithm $A_{s_1, s_2}$ (for any $0\le s_2 \le s_1\le k$) that, given a data stream $D$ and
an access to hash functions $H_1,\dots,H_{s_1}$, in one pass obtains a $\log^k{n}$-approximation of $|T_{s_2}(W(M_{Ind},H_1,\dots,H_{s_1}))|$ using memory $O(\log{(nm)}\log{1\over \delta})$.
}
\begin{proof}
By Lemma \ref{lm:product sketch}, it is possible to construct a product sketch for
$|T_{s_2}(W(M_{Ind},H_1,\dots,H_{s_1}))|$ in one pass.
Also, by Lemma \ref{lm: product sketch is logk} the constructed product sketch is a $\log^k{n}$-approximation of $|T_{s_2}(W(M_{Ind},H_1,\dots,H_{s_1}))|$ w.p. $\Omega(1)$.
Thus, taking a median $O(\log{1\over \delta})$ of independent product sketches results in a $(\log^k{n}, \delta)$-approximation.
It remains to analyze the memory bounds.
Repeating the arguments from \cite{mi}, each product sketch can be constructed with sufficient precision using $O(k\log{nm})$ memory bits.
Also, the perfectly random variables can be replaced by pseudorandom variables and
using the ``sorting'' argument from \cite{mi} (Section $3.2$). We repeat the arguments of Indyk and McGregor $k$ times (instead of two as in \cite{mi}).
\end{proof}

\section*{Acknowledgments}

We thank Piotr Indyk for a helpful discussion.


\end{document}